\numberwithin{equation}{section}
\DeclareMathOperator{\tr}{tr}
\DeclareMathOperator{\diag}{diag}
\DeclareMathOperator{\hspan}{span}
\DeclareMathOperator{\one}{\mathbbm{1}}
\newtheorem{defi}{Definition}[section]
\newtheorem{prop}[defi]{Proposition}
\newtheorem{coro}[defi]{Corollary}
\newtheorem{exa}[defi]{Example}
\newcommand{\comm}[2]{\left[ #1 , #2 \right]}
\newcommand{\acomm}[2]{\left[ #1 , #2 \right]_+}
\newcommand{\cre}[2]{c^\dagger{}^{#2}(#1)}
\newcommand{\ann}[2]{c_{#2}(#1)}
\newcommand{\vac}{\left\lvert 0 \right\rangle}
\newcommand{\SprojE}[0]{S_{\mathcal{S}}}
\newcommand{\SprojF}[0]{2 S_{E'} \cdot \Sx{e_{n+1}(0)}{}}
\newcommand{\Sx}[2]{S^{#2}({#1})}
\newcommand{\Sxl}[2]{S_{#2}({#1})}
\newcommand{\Se}[2]{S_{#1}^{#2}}
\newcommand{\SE}[2]{S_{#1}^{#2}}
\newcommand{\jrep}[2]{\pi_{#1}({#2})}
\newcommand{\scpr}[2]{\langle#1\, \vert \, #2 \rangle}
\newcommand\der[1]{\ensuremath{\text{d} #1}}
\newcommand{\cm}[0]{\ensuremath{\text{cm}}}
\begin{document}
\title{Fermion Spins in Loop Quantum Gravity}

\author{Refik Mansuroglu \orcidlink{0000-0001-7352-513X}}
\email[]{Refik.Mansuroglu@fau.de}
\affiliation{Institute for Quantum Gravity,
Friedrich-Alexander-Universität Erlangen-Nürnberg (FAU), Staudtstraße 7, 91058 Erlangen, Germany}
\author{Hanno Sahlmann \orcidlink{0000-0002-8083-7139}}
\email[]{Hanno.Sahlmann@gravity.fau.de}
\affiliation{Institute for Quantum Gravity,
Friedrich-Alexander-Universität Erlangen-Nürnberg (FAU), Staudtstraße 7, 91058 Erlangen, Germany}

\date{\today}

\begin{abstract}
 We define and study kinematical observables involving fermion spin, such as the total spin of a collection of particles, in loop quantum gravity.
 Due to the requirement of gauge invariance, the relevant quantum states contain strong entanglement between gravity and fermionic degrees of freedom. Interestingly we find that properties and spectra of the observables are nevertheless similar to their counterparts from quantum mechanics. 
 However, there are also new effects. Due to the entanglement between gravity and fermionic degrees of freedom, alignment of quantum spins has consequences for quantized geometry. We sketch a particular effect of this kind that may in principle be observable.  
\end{abstract}

\maketitle

\section{Introduction}
Spin is a quintessential quantum property but at the same time a bit of geometric information -- a direction. Indeed, spins were at the heart of Penrose's spin networks \cite{Penrose71angularmomentum:,Penrose72:}, invented as hypothetical quantum states of spatial geometry. This was indeed prophetic, as in loop quantum gravity (for reviews see \cite{Ashtekar:2004eh,Thiemann:2007pyv}), spin networks turn out to describe a basis of eigenstates of spatial geometry \cite{Baez:1995md}. With the present work, we will, in a sense, close a circle, as we will investigate the quantum theory of particle spins in loop quantum gravity. It will turn out that one can regard spin networks as flows of physical spin with some justification. 

The quantum theory of spinor fields coupled to loop quantum gravity is by now fairly well understood \cite{MoralesTecotl:1994ns,Morales1995,Baez:1997bw,Thiemann:1997rq,Thiemann:1997rt,Mercuri:2006um,Bojowald:2007nu,Bojowald:2008nz}.
In particular, we will follow Thiemann \cite{Thiemann:1997rq} in adopting a discretisation of the classical symplectic structure that is extremely natural in the context of loop quantum gravity. As a consequence, the space of fermion states can be described as a fermionic Fock space of particle-like excitations. 

Loop quantum gravity uses a formulation of general relativity as a constrained gauge theory, with a Gau{\ss} constraint encoding SU(2) gauge invariance (invariance under spatial frame rotations). It was realised early on \cite{MoralesTecotl:1994ns, Morales1995, Baez:1997bw} that, to solve the Gau{\ss} constraint, gravity and fermionic excitations have to be coupled. A very compelling solution was first suggested in \cite{MoralesTecotl:1994ns,Morales1995} and later expanded on in \cite{Baez:1997bw}: the fermions sit at the open ends of gravitational spin networks. More recently, the coupling of fermion states to gravity has been investigated also from the perspective of spin foam models for loop quantum gravity \cite{Han_2013, Bianchi_2013}. The coupled gravity-fermion states we are considering here are precisely the boundary states in the spinfoam formalism.

In the present work, we set out to define and investigate observables that measure the total fermion spin contained in spatial regions. Depending on the quantum state, this can be the spin of a single fermion, or the total spin of multi-fermion configurations. Specifically, we define operators measuring the squared total spin, and components of the total spin in the direction of another spin, or the component normal to a surface. 

To add spins, we need to transport them through space. We use the Ashtekar-Barbero connection for this. Consequently the spin operators modify fermion and gravitational states in tandem. Somewhat surprisingly, algebra and spectra have much in common with the situation in quantum mechanics. However, there are also new effects. For example we show that alignment of spins in a region can lead to a change of its surface area, which could in principle be observable. It also shows that one can regard spin networks as flows of physical spin with some justification. 

When we speak of spin observables, we have to be careful, however. 
In the present work we will not be concerned with implementing diffeomorphism and Hamilton constraint which encode general covariance. In particular the spin observable that we consider are not observables in the sense of Dirac, and it is in fact not clear what ultimate significance they have. In this way, they are similar to the geometric operators for area, volume and length. It should be fairly straightforward to generalise some of the results of the present work to the spatially diffeomorphism invariant level. Moreover, if one considers material reference systems as described in \cite{Giesel:2007wn,Giesel:2012rb}, we expect that the operators we describe correspond to physical quantities.

An observation that is crucial for the construction of the spin observables is that the spatial frame provides an isomorphism between tangent spaces and the internal space. This means that in principle we can freely move between spins as tangent vectors and spins as internal vectors, and scalar products can be evaluated in tangent space using the metric or internally. In practice, the latter picture is the only practical one in loop quantum gravity, as no operator for the spatial metric is known, and neither are any other operators for tensorial quantities.
\newpage

Throughout the paper, we will follow a number of conventions. Spacetime is described by a smooth manifold $M$ foliated into 3-dimensional hypersurfaces $M=\mathbb{R}\times \Sigma$. For the metric signature we use the ``mostly plus'' convention $\eta = \diag(-,+,+,+)$. 
We use capital latin letters $I,J,\ldots = 0, 1, 2, 3$ for indices of tensor fields in the 4-dimensional internal Minkowski space and lower case latin letters $i,j, \ldots = 1, 2, 3$ for indices of tensor fields in the spatial internal space, which are pulled with the Euclidean metric $\delta$. Tensor fields on $M$ carry indices described by lower case greek letters $\mu,\nu,\ldots = 0, 1, 2, 3$, tensor fields on $\Sigma$ carry indices described by lower case latin letters $a,b,\ldots = 1, 2, 3$. Finally, Weyl spinors and spin $\frac{1}{2}$ representations of SU$(2)$ carry indices described by capital latin letters $A,B,\ldots = 1, 2$. We define holonomies $h_e$ such that they transform under gauge transformations as
\begin{align}
    h_e \mapsto g(s(e)) \, h_e \, g^{-1}(t(e))
\end{align}
where $t(e)$ is the endpoint and $s(e)$ the starting point of the edge $e$.

\section{A Quick Review on Loop Quantum Gravity with Fermions}
On classical level, a theory of spin $\frac{1}{2}$ fermions coupled to gravity is described by the Einsten-Cartan-Holst action \cite{Thiemann:1997rt,Mercuri:2006um,Bojowald:2007nu} together with the covariant version of the Dirac action 
\begin{align}
    S[e,\omega,\Psi] &= \frac{1}{16 \pi G} \int_M \text{d}x^4\; |\det{e}| e^{\mu}_Ie^\nu_J P^{IJ}{}_{KL} F^{IJ}_{\mu\nu}(\omega) \nonumber \\
    &+\frac{i}{2} \int_M \text{d}x^4\; |\det{e}|\left[\overline{\Psi} \gamma^Ie_I^\mu \nabla_\mu \Psi - c.c.\right].
    \label{eq:action}
\end{align}
A foliation into three-dimensional hypersurfaces $M=\mathbb{R}\times \Sigma$ is performed, and the canonical analysis yields the canonical gravitational variables \cite{Thiemann:1997rt, Bojowald:2007nu},
\begin{align}
    \mathcal{A}_a^i(x) &=\Gamma_a^i +\beta K_k^i + 2\pi G\beta \epsilon^i{}_{kl} e^k_a J^l \label{eq:connection} \\ 
    E^a_i(x) &=\frac{1}{2}\epsilon_{ijk}\epsilon^{abc} e^i_ae^j_b(x),
\end{align}
which is the analogue of the Ashtekar connection and its conjugate momentum. In (\ref{eq:connection}), $\Gamma$ is the torsion free spin connection, $K$ the torsion free extrinsic curvature and $J^l$ are the spatial components of the fermion current,
\begin{equation}
    J^l=\overline{\Psi}\gamma^l\Psi.
\end{equation}
Note that, in contrast to matter-free loop quantum gravity, the Ashtekar connection carries non-zero torsion. In the chiral basis, we can split the Dirac fermion $\Psi$ and its conjugate momentum into its chiral components and define the half-densities \cite{Thiemann:1997rt},
\begin{align}
    \theta^A(x) &=\sqrt[4]{\det q} \;\Psi_\text{R}(x), \qquad \pi_\theta(x)=-i\theta^\dagger(x)\\
    \nu_A(x)&=\sqrt[4]{\det q}\; \Psi_\text{L}(x), \qquad \pi_\nu(x)=-i\nu^\dagger(x)
\end{align}
The non-vanishing anti-Poisson relations for the Weyl spinors are then induced by the anti-Poisson relations of $\Psi$. They read  
\begin{equation}
\label{eq:antipoi}
    \left\{\theta^A(x), \pi_{\theta B}(y)\right\}_+= \delta^{A}_{B} \, \delta^{(3)}(x,y),
\end{equation}
and similar for $\nu$. In the following, we will focus on one Weyl component and its momentum $(\theta, \pi_\theta)$ only. However, everything works analogously for the other chiral component.

The action (\ref{eq:action}) yields contributions to Gauss, diffeomorphism and Hamilton constraint \cite{Thiemann:1997rt, Bojowald:2007nu}. For our purposes, only the Gauss constraint 
\begin{equation}
\label{eq:gauss}
    G_i(x)=D^{(\mathcal{A})}_aE^a_i(x) +
    \left( \theta^\dagger(x)\sigma_i\theta(x) + \nu^\dagger(x)\sigma_i\nu(x)\right)
\end{equation}
will be relevant. The first term of (\ref{eq:gauss}) is the SU(2) Gauss constraint known from matter-free loop quantum gravity. The second term is the SU(2) current of the half-density Weyl spinors, which generates SU(2) transformations.

The Hilbert space of gravitational degrees of freedom is constructed the same way as in the matter-free theory. The gravitational observables hence act on cylindrical functions which form the Ashtekar-Lewandowski Hilbert space,
\begin{equation}
    \mathcal{H}_\text{AL}=L^2(\Sigma,\text{d}\mu_{\text{AL}}),
\end{equation}
via multiplication of holonomies and the interior product with the derivation $X_S$, respectively \cite{Ashtekar:1996eg}
\begin{align}
\label{eq:Grep}
    \jrep{j}{h}_e \Psi[A] &= \jrep{j}{h}_e[A] \Psi[A], \\
    \int_S E_i \Psi[A] &= i (X_S^i\Psi)[A].
\end{align}
The Weyl fields on the other hand are quantized as fermionic creation and annihilation operators,
\begin{equation}
\label{eq:Frep}
    \theta^A(x) = \cre{x}{A},  \qquad -i \pi_{\theta B}(y)= \ann{y}{A}
\end{equation}
which satisfy canonical anticommutation relations,
\begin{align}
    \acomm{\ann{x}{A}}{\cre{y}{B}} &= \delta_{x,y}\,\delta^A_B, \\
    \acomm{\cre{x}{A}}{\cre{y}{B}}=0, \qquad &\acomm{\ann{x}{A}}{\ann{y}{B}} = 0.
\end{align}
The action of this operator spans the fermionic Fock space over a one particle space given by 
\begin{align}
    \mathcal{h} &= \{f:\Sigma \longrightarrow\mathbb{C}^2:\; f(x)\neq 0 \text{ only for finitely many } x \} \nonumber \\
    &\scpr{f}{f'}=\sum_{x\in\Sigma}\overline{f(x)} f'(x)
\end{align}
We note the change between \eqref{eq:antipoi} and 
\begin{equation}
    \acomm{\theta(x)^A}{\pi_{\theta B}(y)}= i\delta_B^A\delta_{x,y},
\end{equation}
in that a Dirac delta has become a Kronecker delta. This means that, following \cite{Thiemann:1997rq}, and in contrast to \cite{Baez:1995md} we quantise a modified symplectic structure.

The combined system of matter fields and gravitational degrees of freedom is given by the tensor product of the Ashtekar-Lewandowski Hilbert space and the fermionic Fock space over $\mathcal{h}$
\begin{equation}
    \mathcal{H}=\mathcal{H}_\text{AL}\otimes \mathcal{F}_-(\mathcal{h}),
\end{equation}
so we extend \eqref{eq:Grep}, \eqref{eq:Frep} in the obvious way.

On the way towards a physical Hilbert space, we need to implement the Gauss constraint. For this, we only regard quantum states lying in the kernel of the quantum Gauss constraint operator, or equivalently states that are invariant with respect to the unitary action $U_g$ of gauge transformations generated by the Gauss constraint operator. The action of $U_g$ reads\footnote{Note that we are using the convention that the first index of $h_e$ transforms at $s(e)$ and the second one at $t(e)$. At the same time, we are using Thiemann's convention for $e \circ f$ \cite{Thiemann:2007pyv} such that $h_e \cdot h_f = h_{e \circ f}$ holds.}
\begin{align}
    U_g\jrep{j}{h_e} U^{-1}_g &= g(s(e)) \cdot\jrep{j}{h_e} \cdot g^{-1}(t(e)),\\
    U_g \theta(x) U^{-1}_g &= g(x) \cdot \theta(x), \\
    U_g \pi_{\theta} (x) U^{-1}_g &= \pi_{\theta(x)}\cdot g^{-1}(x).
\end{align}
We finally arrive at the Hilbert space of SU(2) invariant states. One suitable basis is given by a generalisation of spin network states, which also admit Weyl spinors at the vertices of the underlying spin network graph.

\section{Fermion Spin Observables}
We want to define observables based on the spin of fermions. To this end, we first have to clarify the definition of angular momentum and spin for fermion fields in curved spacetime. In flat space, angular momentum is described by the covariant angular momentum bivector $J^{\mu\nu}$. For a point particle with position $x^\mu$ and momentum $p_\mu$ this is given by 
\begin{equation}
    J^{\mu\nu}=x^\mu p^\nu-p^\mu x^\nu,
\end{equation}
where here and in the following few equations we are pulling indices with the Minkowski metric.
In quantum theory, the corresponding operators are nothing but the generators of Lorentz transformations. In quantum field theory, these generators are given as spacetime integrals of angular momentum densities, and for non-scalar fields, a contribution from the intrinsic spin appears.   
For a Dirac field $\Psi(t,x)$, 
\begin{align}
\label{eq:ang_mom_flat}
  J^{\mu\nu} = &\int\text{d}^3x\;\overline\Psi(t,x)\,(-i)\gamma^0 \times \nonumber \\ 
  &\times \left(\comm{x^\mu}{p^\nu}+ \frac{i}{4} \comm{\gamma^\mu}{\gamma^\nu}\right)\Psi(t,x). 
\end{align}
The two terms correspond to orbital angular momentum and spin, respectively. Fixing a reference system by specifying a timelike normal $n$, one can recover spatial angular momentum as
\begin{equation}
    J^\mu= (*J)^{\mu\nu}n_\nu\equiv \frac{1}{2} \epsilon_{\mu'\nu'}{}^{\mu\nu} J^{\mu'\nu'} n_\nu. 
\end{equation}
This is a spatial vector,  $J^\mu n_\mu=0$, so in adapted coordinates $J^a, a=1,2,3$ is the angular momentum vector. In particular, for the spin component,
\begin{align}
  S^{\mu\nu} &= \int\text{d}^3x\;\frac{1}{4}\overline\Psi(t,x)\,\gamma^0 \comm{\gamma^\mu}{\gamma^\nu}\Psi(t,x),\\ 
  S^a &= \int\text{d}^3x\;\frac{1}{8}\overline\Psi(t,x)\,\gamma^0 \epsilon^a{}_{bc}\comm{\gamma^a}{\gamma^b}\Psi(t,x),\\
  S^{\mu\nu} \vac &= 0.
\end{align}
In the chiral representation, the spin operator works out to 
\begin{equation}
\label{eq:spin_flat}
    S^a= \int\text{d}^3x\;\overline\Psi(t,x)\,\frac{1}{2}\begin{pmatrix}\sigma^a&0\\0&-\sigma^a \end{pmatrix}\,\Psi(t,x), 
\end{equation}
which is nothing but the generator, on spinor space, of rotations as defined by $n=(1,0,0,0)$.

In curved spacetime there are generically no isometries, and so no action of global Lorentz transformations. Thus there is no obvious definition of \emph{total} angular momentum or spin. However, there is a close analogue to spin \emph{densities} in \eqref{eq:ang_mom_flat}, \eqref{eq:spin_flat}: There are generators for \emph{local} Lorentz transformations (i.e. local frame rotations), and those can provide a spin bivector density.

To obtain Ashtekar variables, a partial gauge fixing is employed. A timelike unit vector $n^I(x)$ is introduced to fix three components of the frame $e^\mu_I$. 
$n^I(x)$ introduces a decomposition of the local Lorentz transformations into rotations and boosts, and only local rotations survive as gauge transformations. Their generator is the Gauss constraint \eqref{eq:gauss}, and so a natural candidate for the local spin density contribution from the Weyl field $\theta$ is given by 
\begin{equation}
    \Sx{x}{i}= \frac{1}{2} \theta(x) \sigma_i \theta^\dagger(x).
\end{equation}
It immediately follows that
\begin{align}
        \comm{\Sx{x}{i}}{\cre{y}{A}}= \frac{1}{2}\delta_{x,y}\, \sigma^i{}^A{}_{B}\,\cre{y}{B}, \\
        \Sx{x}{i} \vac =0, 
\end{align}
and 
\begin{equation}
    \comm{\Sx{x}{i}}{\Sx{y}{j}}=i\delta_{x,y}\, \epsilon^{ij}{}_k\,  \Sx{x}{k}.
\end{equation}
Note that while $\Sx{x}{i}$ is an internal vector field, we can change freely between internal and tangent space using the 3d frame $e$, 
\begin{equation}
    \Sx{x}{a}(x):= e^a_i(x)\Sx{x}{i}
\end{equation}
and scalar products can be taken in either space, 
\begin{equation}
    \Sx{x}{a}\Sx{x}{b} q_{ab}(x)= \Sx{x}{i}\Sx{x}{j}\delta_{ij}.  
\end{equation}
For the quantum theory, working in the internal space has advantages, as there are no states known that transform like tensors in tangent space, whereas states that transform nontrivially under gauge transformations are well studied. 

We will also need to transport spins from one point to the other to compare them. This raises the question how to translate between transport in tangent and internal space, and which connection to use for the transport. 
The first question is easy to answer. If $B$ is a connection in tangent space and $C$ is a connection in internal space, then the condition 
\begin{equation}
    D_a e_b^i\equiv \partial_a e_b^i -B_a{}^c{}_b\; e_c^i + C_a{}^i{}_j \;e_b^j=0
\end{equation}
is equivalent to 
\begin{equation}
    h_e^{(B)}{}^a{}_b\;e^b_i(t(e))=e_j^b(s(e))\;h_e^{(C)}{}^j{}_i. 
\end{equation}
Thus, fixing a connection on one bundle will yield a compatible connection on the other. The second question is more subtle. The spin connection of $e$ (or equivalently the 3d Levi-Civita connection $\Gamma$) or the Ashtekar-Barbero connection $A$ immediately come to mind, but there are undoubtedly others. Ultimately, the connection used is part of the choice of observables that we will define below. We will make use of the the Ashtekar-Barbero connection $A$, as it is a well defined operator in loop quantum gravity. Then we can define the parallel transport of spin operator as 
\begin{equation}
    \Se{e}{i}= \jrep{1}{h_e^{-1}}{}^{i}{}_j\,\Sx{s(e)}{j} 
\end{equation}
where $s(e)$ is the initial point (source) of $e$. Interestingly, the parallel transported spin is again a quantum theoretical spin:
\begin{equation}
    \comm{\Se{e}{i}}{\Se{e}{j}}=i \epsilon^{ij}{}_k\,  \Se{e}{k}.
\end{equation}
Under a gauge transformation $g$ it transforms as 
\begin{equation}
    U_g\, \Se{e}{i}\, U_{g}^{-1}= \jrep{1}{g}(t(e))^i{}_j  \Se{e}{j}
\end{equation}
where $t(e)$ is the final point (target) of $e$. 

\subsection{Total Spin}

\noindent In analogy to the theory of angular momentum in flat standard quantum theory, we start by defining the total spin of the fermion field evaluated at a number of points. To add the spins, they all have to be transported to the same point. To do so, one has to chose a set of edges $E$ with common endpoint and disjoint initial points. 
\begin{defi}
Given a set of edges $E$ with common endpoint and disjoint initial points, $t(e)=t(e')$ and  $s(e)\neq t(e')$ for all $e,e'\in E$, define the total spin 
\begin{equation}
    \SE{E}{}=\sum_{e\in E}   \Se{e}{}
\end{equation}
\end{defi}
For a graphical representation of the total spin see figure \ref{fig:total_spin}. 
\begin{figure}
    \centering
    \includegraphics[scale=0.1]{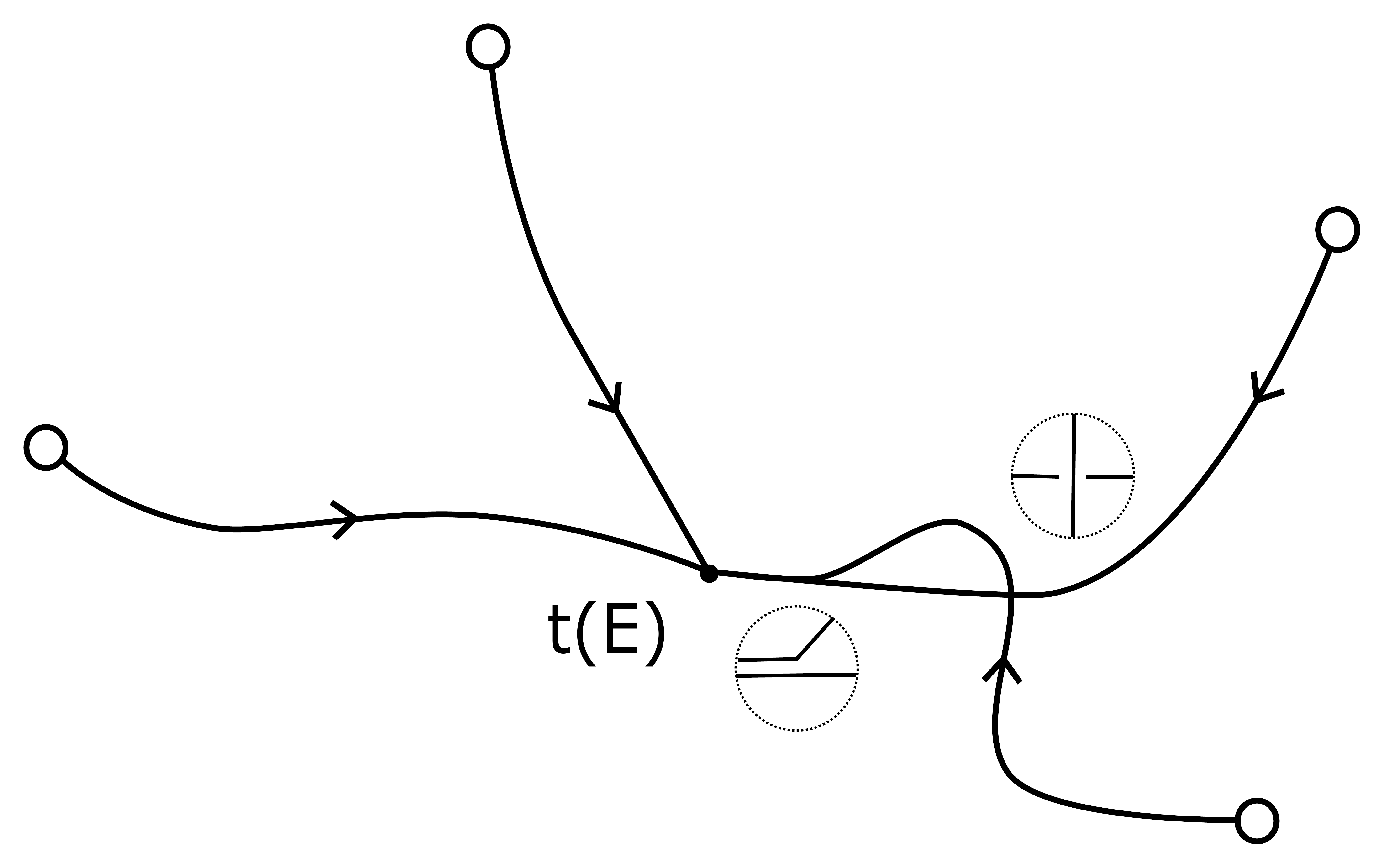}
    \caption{Graphical representation of the total spin operator $\Se{E}{}$. Note that edges can intersect and also run on top of each other, as long as the corresponding intertwiners are trivial. This is indicated by the virtual networks drawn next to the intersections in the figure.}
    \label{fig:total_spin}
\end{figure}
The total spin $\SE{E}{}$ is again a quantum mechanical spin, 
\begin{equation}
    \comm{\SE{E}{i}}{\SE{E}{j}}=i \epsilon^{ij}{}_k\,  \Se{E}{k}.
\end{equation}
Under a gauge transformation $g$ it transforms at the joint final point $t(E)$ of the $e\in E$, 
\begin{equation}
    U_g\, \Se{E}{}\, U_{g}^{-1}= \jrep{1}{g}(t(E)) \cdot \SE{E}{j}. 
\end{equation}
The total spin is not gauge invariant, but it can be used to construct a large class of gauge invariant observables. 

Invariant subspaces for the action of $\Se{E}{}$ are constructed as follows. Denote by $E'\subseteq E$ any subset of $E$ and the edges in $E'$ as  $E'=\{e_1,e_2, \ldots e_{|E'|}\}$. Define states,
\begin{equation}
    \Psi_{E'}{}^{A_1}\cdots ^{A_{|E'|}}=\prod_{k=1}^{|E'|} \jrep{\frac{1}{2}}{h_{e_k}}^{A_k}{}_{B_k} \otimes  \theta(s(e_k))^{B_k} \vac.  
\end{equation}
They span the Hilbert space,
\begin{equation}
    \mathcal{H}_E:= \hspan \left\{ \Psi_{E'}^{A_1 \dots A_{|E'|}} \rvert A_1, \dots, A_{|E'|}\in 
    \{1,2\},\; E'\subseteq E \right\}.
\end{equation}
The space $\mathcal{H}_E$ is invariant under the action of gauge transformations. Since its states transform only at $t(E)$, the irreducible subspaces of $\mathcal{H}_E$ are effectively labeled by irreps of SU(2), 
\begin{align}
    \mathcal{H}_E= \bigoplus_j  \mathcal{H}_E^{(j)},& \qquad \mathcal{H}_E^{(j)} = \bigoplus_k^{n_j} \mathcal{H}_E^{(j)(k)} \nonumber \\
    &\mathcal{H}_E^{(j)(k)} \simeq
    \pi_j
\end{align}
with $n_j$ denoting possible multiplicity (including $n_j=0$). Note that we can also easily add two total spin operators and arrive at
\begin{align}
    \SE{E \cup E'}{i} = \SE{E}{i} + \SE{E'}{i}
\end{align}
which is again a spin operator, if and only if the set of starting points of $E$ and $E'$ are disjoint. Also, we have to impose $t(E) = t(E')$ in order to get a gauge covariant object.

Let us consider the action of $(\SE{E}{})^2$ on $\mathcal{H}_E$. We have 
\begin{align}
    (\SE{E}{})^2 &= \sum_{e\in E} \Sx{s(e)}{2} \, + \nonumber \\
    &+ \sum_{(e,e')\in E\times E } \Sx{s(e)}{}\cdot h_{e \circ e'^{-1}}\cdot \Sx{s(e')}{}.
     \label{def_Jptot2}
\end{align}
In order to understand the mixed terms, we make use of the binor calculus, a graphical notation suggested by Penrose 
\cite{Penrose71angularmomentum:, Penrose:1987uia, DePietri:1996pj, Kauffman02}. 
Using this toolbox we can rewrite the mixed terms in the following way
\begin{align}
    &2 \tensor{\Sxl{s(e_1)}{k}}{^A_B} \tensor{\left(h_{e} \right)}{^j_k} \tensor{\Sx{s(e_2)}{k}}{^C_D} = \nonumber \\[2 ex]
    &= \leftidx{^A_B}{\includegraphics[valign=c, scale=0.2]{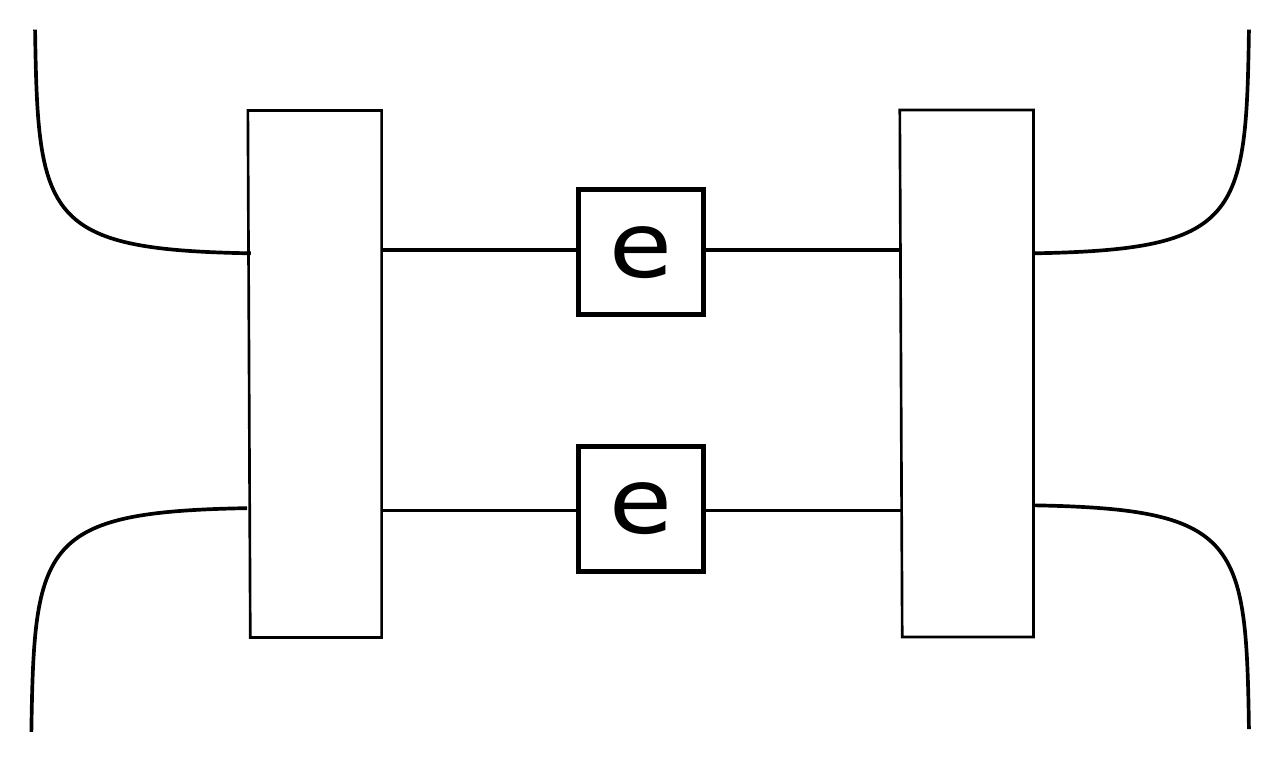}}{^C_D} \nonumber \\[3 ex]
    &= \frac{1}{2} \leftidx{^A_B}{\includegraphics[valign=c, scale=0.5]{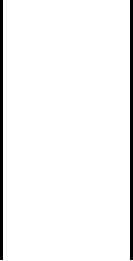}}{^C_D} + \leftidx{^A_B}{\includegraphics[valign=c, scale=0.5]{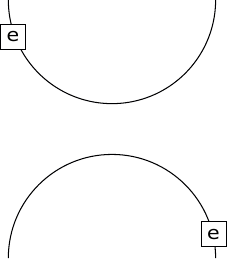}}{^C_D}, \label{PC_2J1J2}
\end{align}
where we denoted $e := e_1^{-1} \circ e_2$ for the sake of clarity. For technical details and a proof of this identity, see Appendix \ref{Penrose_Calculus_appendix}.  Now, given an arbitrary spin network state $\psi$ with $n$ fermions sitting on distinct vertices of the spin network, we can calculate the action of (\ref{def_Jptot2}) by writing $\psi$ using the binor formalism and applying (\ref{PC_2J1J2}). In general, however, the spin network states are not eigenstates of (\ref{def_Jptot2}). Instead, we find a subset of its eigenstates with eigenvalues, which are very reminiscent of flat quantum theory,
\begin{align}
   \includegraphics[valign=c, scale=0.15]{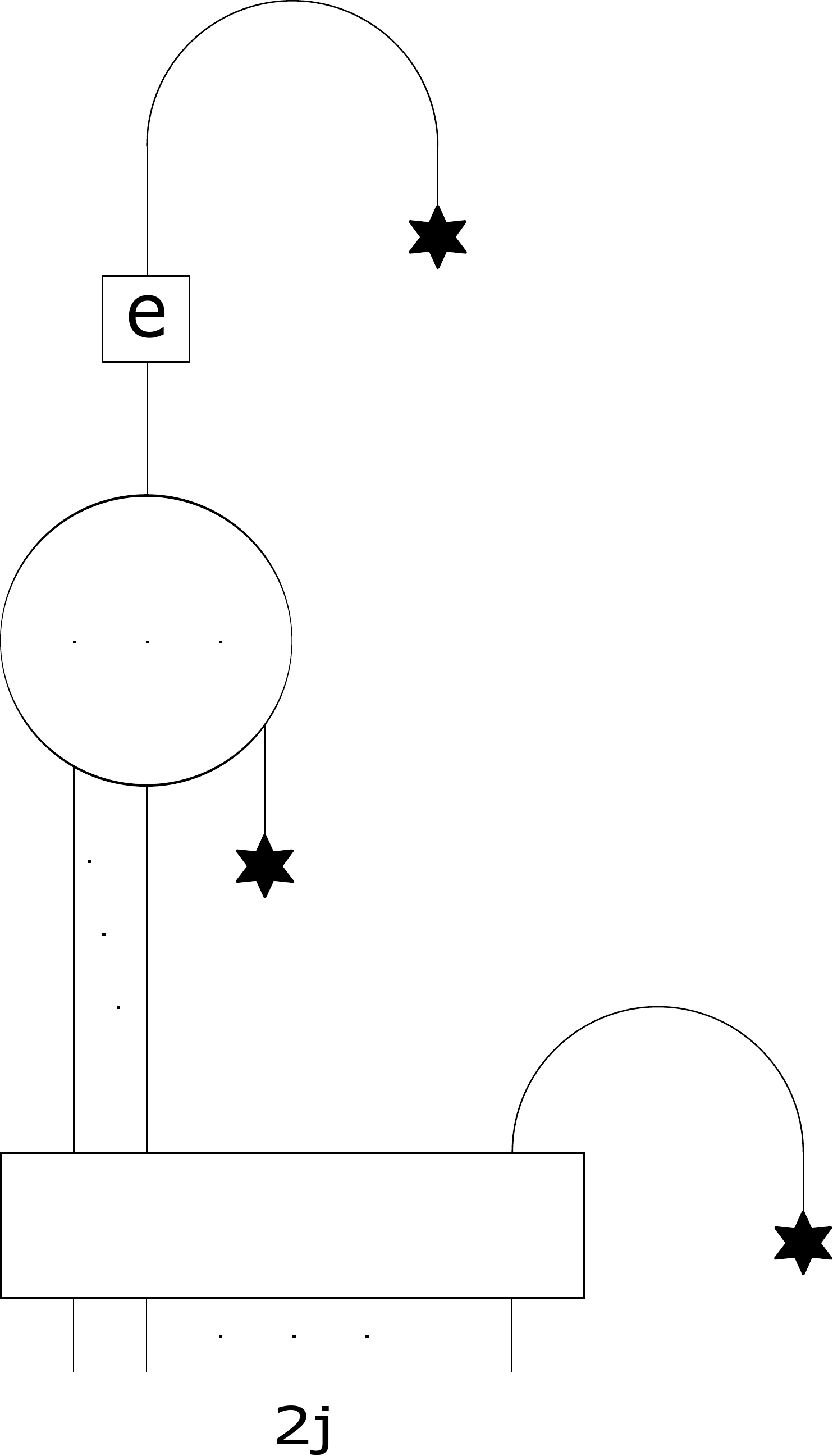} \xrightarrow{(\SE{E}{})^2} j(j+1) \quad \includegraphics[valign = c, scale = 0.15]{Spinj_down_general.pdf} \label{Spinj_eigenstate}
\end{align}
Here, the empty circle represents an arbitrary combination of fermions adding or respectively subtracting the spin of the neighbouring holonomy, 
\begin{equation}
\label{eq:updown}
    \includegraphics[valign=c, scale=0.15]{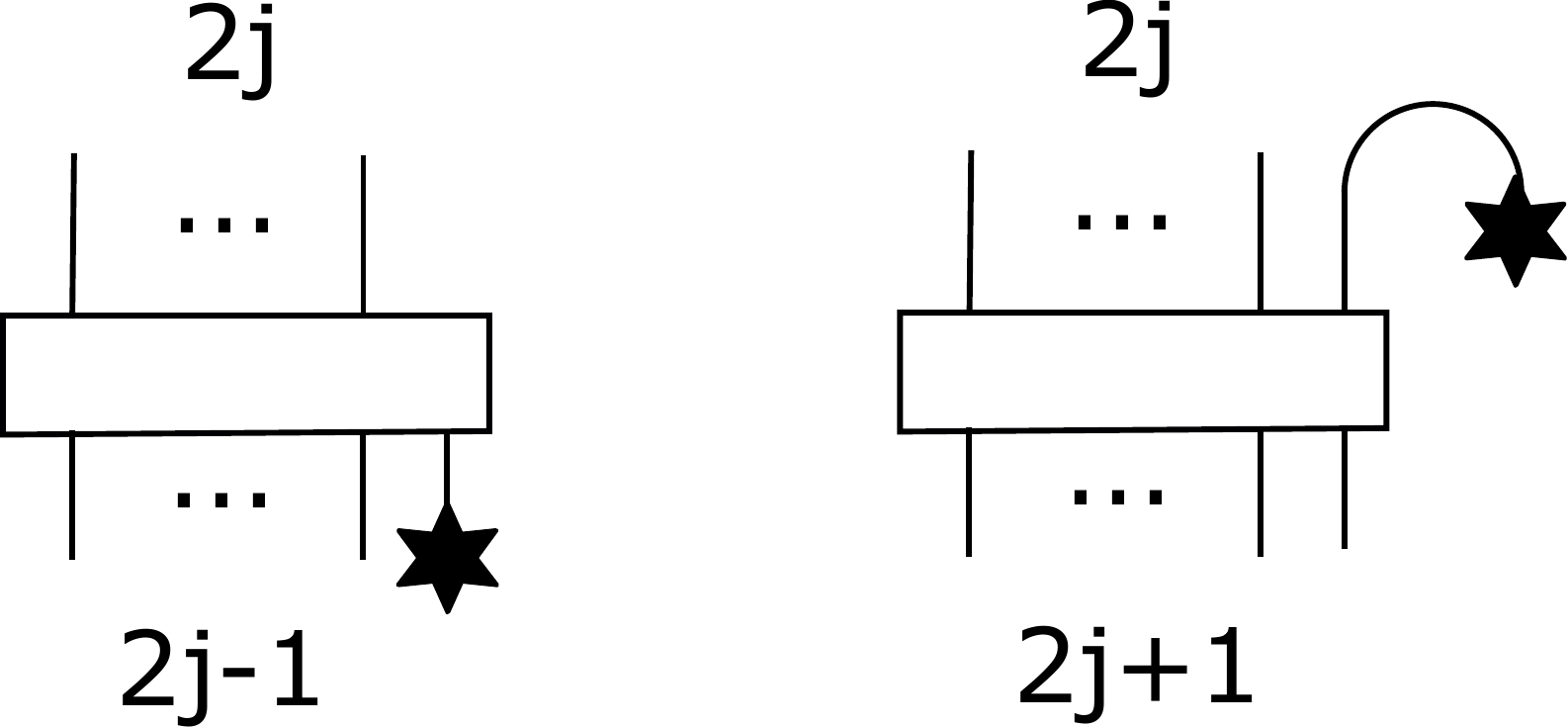}.
\end{equation}
Depending on the split of the number $n = n_+ + n_-$ of fermions into increasing and decreasing spins, the corresponding spin of the eigenstates can take values $j \in \{\frac{1}{2}, ..., \frac{n}{2}\}$ for odd $n$ and $j \in \{0, ..., \frac{n}{2}\}$ for even $n$. The eigenstates \eqref{Spinj_eigenstate} can be found by induction over $n$, but in this case a more direct route is possible.  
\begin{prop} 
\label{prop:eigen}
The total spin $\SE{E}{}$ acts on $\mathcal{H}_E^{(j)(k)}$ like a spin in the $j$-re\-pre\-sen\-ta\-tion. In particular, 
\begin{equation}
    \left.(\SE{E}{})^2\right\rvert_{\mathcal{H}_E^{(j)(k)}}=j(j+1)\one
\end{equation}
\end{prop}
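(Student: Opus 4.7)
The strategy is to show that on $\mathcal{H}_E$ the operator $\SE{E}{i}$ acts as the infinitesimal generator of precisely the $\mathrm{SU}(2)$ action at the common endpoint $t(E)$ whose irreducible subspaces are the $\mathcal{H}_E^{(j)(k)}$ by construction. Once this identification is made, Schur's lemma together with the standard value of the Casimir on the spin-$j$ representation immediately yields the claim.

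The main calculation reduces to a single edge. Acting with $\Se{e}{i}$ on a one-fermion generator of $\mathcal{H}_E$ of the schematic form $\jrep{1/2}{h_e}^{A}{}_{B}\theta^{B}(s(e))\vac$ with a single open index at $t(e)$, one first uses the commutator $\comm{\Sx{s(e)}{j}}{\theta^B(s(e))}=\frac{1}{2}(\sigma^j)^B{}_C\theta^C(s(e))$ to push $\Sx{s(e)}{j}$ through the fermion, and then the standard $\mathrm{SU}(2)$ adjoint identity $\jrep{1}{h^{-1}}{}^{i}{}_{j}\sigma^j=h^{-1}\sigma^i h$ to contract the two holonomy factors in $\Se{e}{i}=\jrep{1}{h_e^{-1}}{}^{i}{}_{j}\Sx{s(e)}{j}$. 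The net effect is
\begin{align}
\Se{e}{i}\,\jrep{1/2}{h_e}^{A}{}_{B}\theta^{B}(s(e))\vac
=\tfrac{1}{2}(\sigma^i)^{A}{}_{A'}\,\jrep{1/2}{h_e}^{A'}{}_{B}\theta^{B}(s(e))\vac,
\end{align}
so that $\Se{e}{i}$ simply multiplies the open index at $t(e)$ by $\frac{1}{2}\sigma^i$. For $e\in E\setminus E'$ no fermion sits at $s(e)$, so $\Sx{s(e)}{j}$ annihilates the state outright.

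Summing over $e\in E$ on a general spanning state $\Psi_{E'}^{A_1\ldots A_{|E'|}}$, it follows that $\SE{E}{i}$ acts as $\sum_{k=1}^{|E'|}\frac{1}{2}\sigma^i$ applied to the $k$-th open index at $t(E)$. This is exactly the generator of the tensor-product representation $\pi_{1/2}^{\otimes |E'|}$ of $\mathrm{SU}(2)$ at $t(E)$, whose Clebsch--Gordan decomposition coincides with the decomposition $\mathcal{H}_E=\bigoplus_{j,k}\mathcal{H}_E^{(j)(k)}$ already defined via the gauge action at $t(E)$. Hence each $\mathcal{H}_E^{(j)(k)}\simeq \pi_j$ carries the irreducible spin-$j$ representation of the algebra generated by $\SE{E}{i}$, and the Casimir evaluates to $j(j+1)\one$.

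The main obstacle I anticipate is purely notational: making sure the open index of $\Psi_{E'}$ really sits at the common target $t(E)$, so that the gauge action at $t(E)$ and the action of $\SE{E}{i}$ really coincide as $\mathrm{SU}(2)$ representations, and applying the adjoint identity with the correct orientation of $h_e$ on either side. Once these bookkeeping issues are fixed, the proof is essentially a one-line computation followed by Schur.
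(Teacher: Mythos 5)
Your proof is correct and follows essentially the same route as the paper: the paper's graphical step of ``moving the intertwiner along the edge'' is exactly your adjoint identity $\jrep{1}{h_e^{-1}}{}^{i}{}_{j}\sigma^j=h_e^{-1}\sigma^i h_e$, after which both arguments identify $\SE{E}{i}$ with the generator of gauge transformations at $t(E)$ acting on $\bigoplus_{E'}(\pi_{\frac{1}{2}})^{\otimes|E'|}$ and conclude by decomposing into irreps. The only difference is presentational (algebraic versus binor calculus), and your explicit attention to the orientation bookkeeping is a welcome addition rather than a deviation.
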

\begin{proof}
The operator $\SE{e}{i}$ is just a grasping operation, it acts on the state $\jrep{\frac{1}{2}}{h_{e}}^{A}{}_{B} \otimes  \theta(s(e))^{B} \vac$ as
\begin{equation}
    \includegraphics[valign = c,scale=0.2]{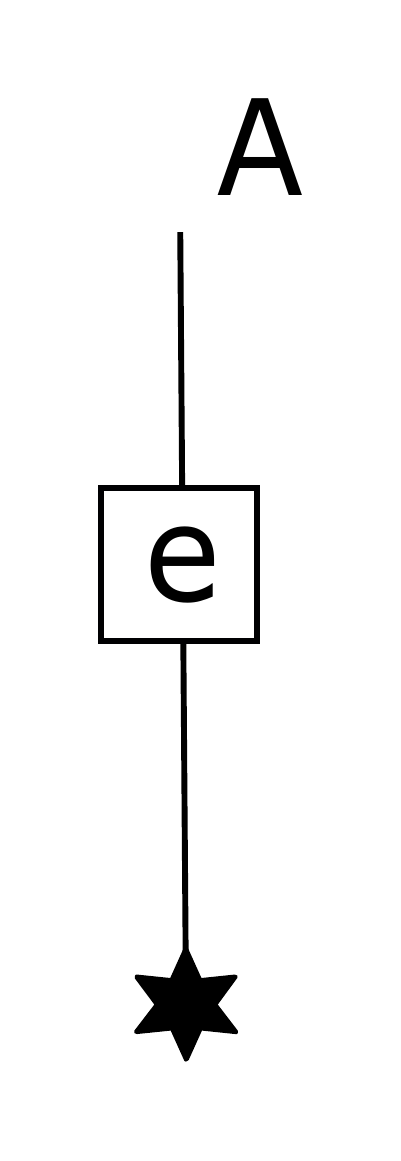}\; \overset{\SE{e}{i}}{\longmapsto}\;\includegraphics[valign = c,scale=0.2]{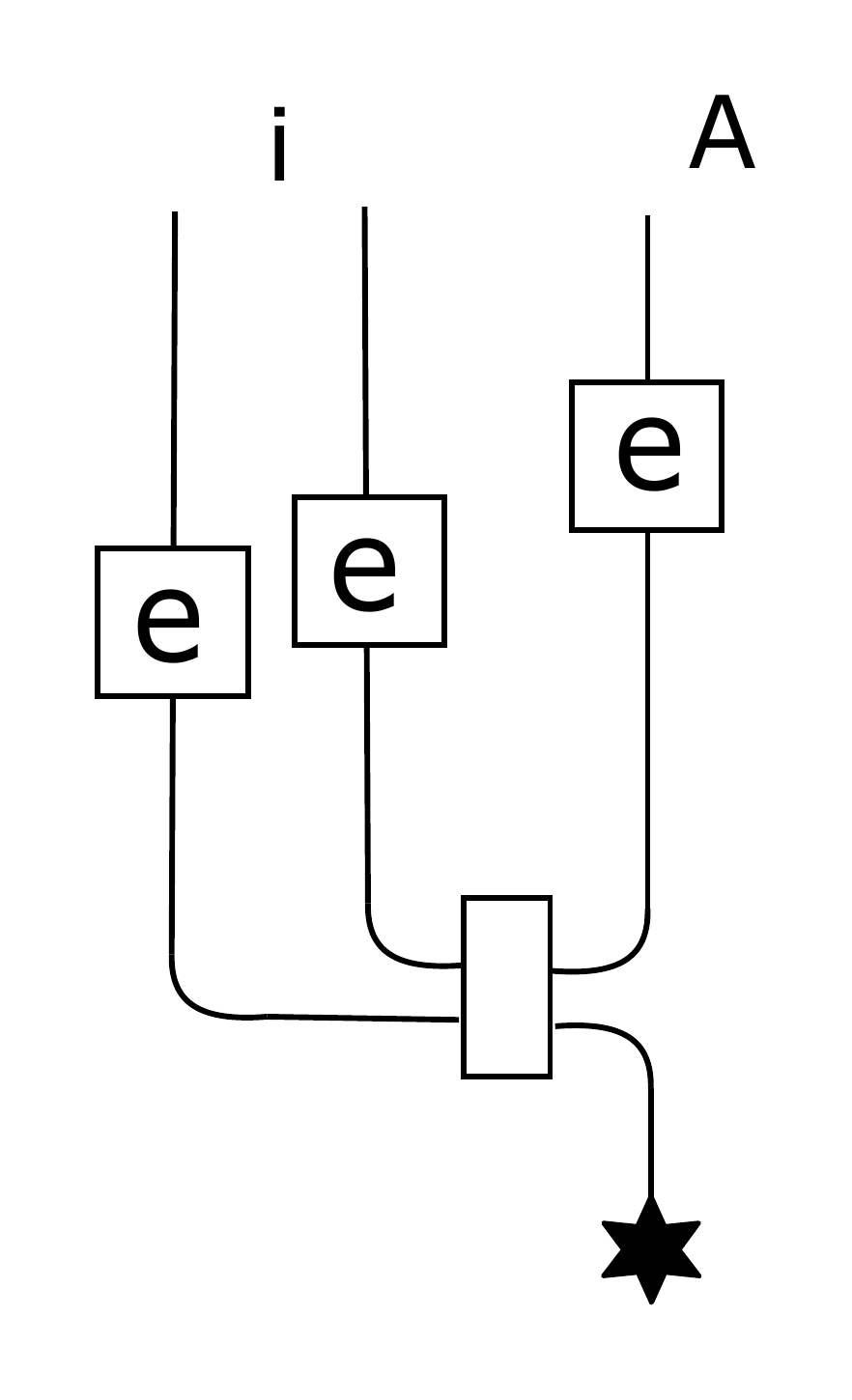}
\end{equation}
Then using the fact that we can move the intertwiner along edges, 
\begin{equation}
    \includegraphics[valign = c,scale=0.2]{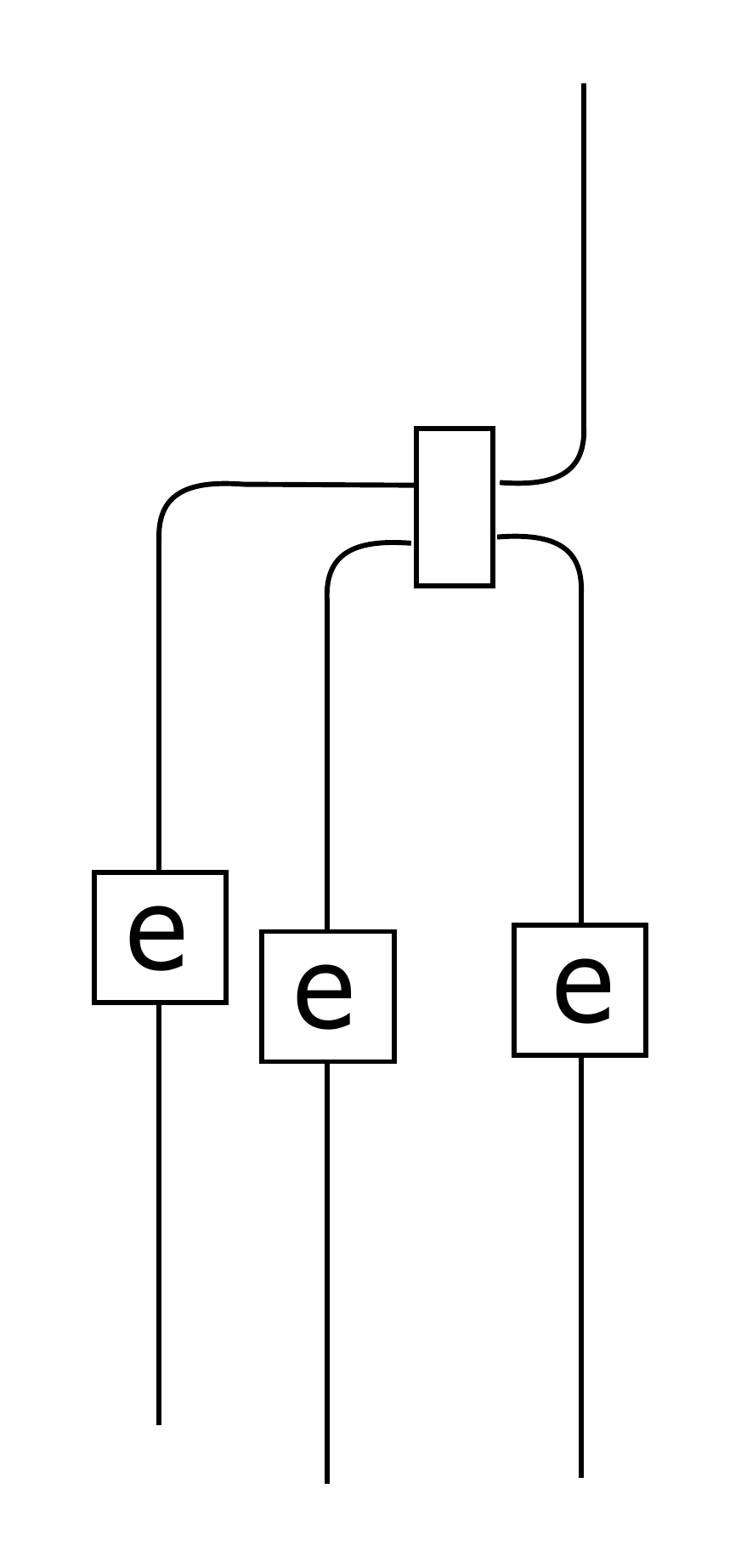}=\includegraphics[valign =c,scale=0.2]{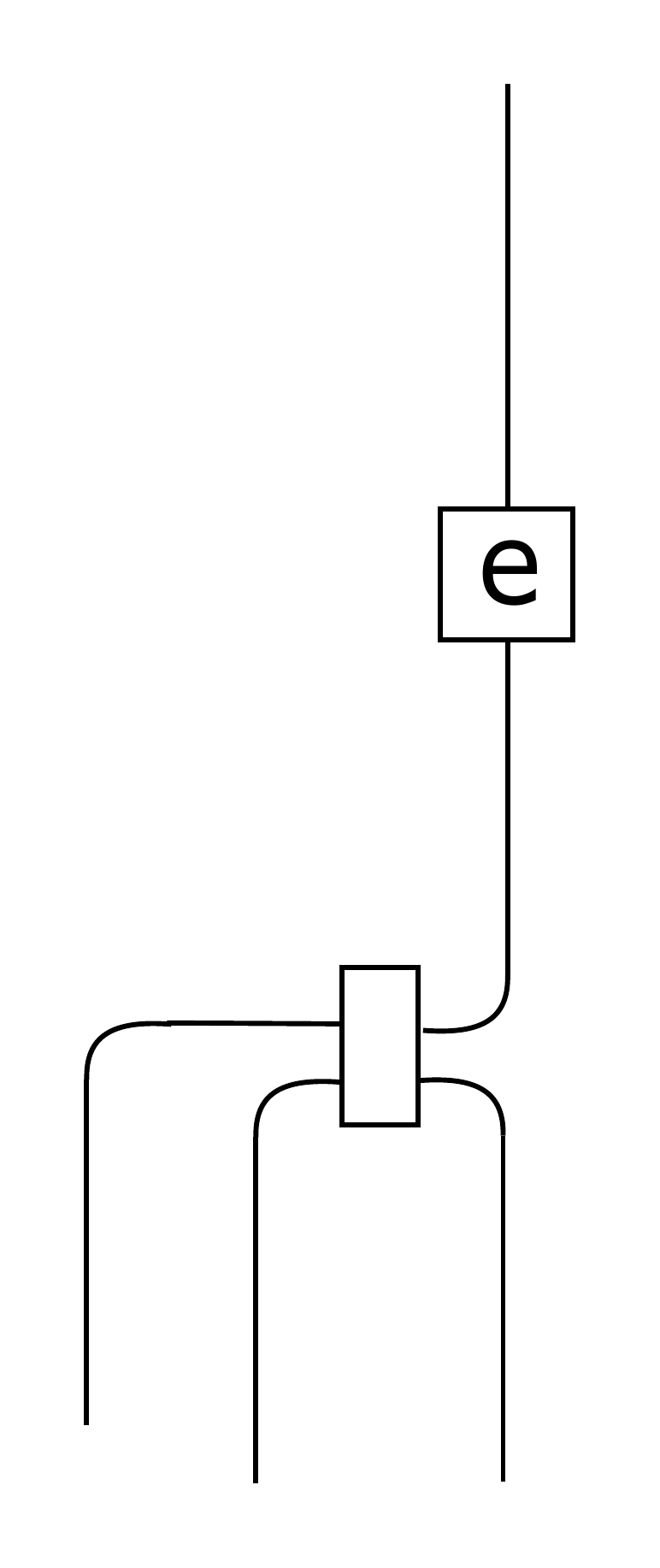}, 
\end{equation}
one sees that $\SE{e}{i}$ acts as the generator of gauge transformations at $t(E)$ on $\mathcal{H}_E$, i.e. as an angular momentum operator on 
\begin{equation}
    \bigoplus_{E'} \left((\pi_{\frac{1}{2}})^{\bigotimes|E'|}\right). 
\end{equation}
Decomposing into irreps completes the proof. 
\end{proof}
Note that the action on any state $\Psi$ in $\mathcal{H}$ that has no fermionic excitations at the positions $s(e), e\in E$ is trivial, more precisely,
\begin{equation}
    \SE{E}{}\Psi=0. 
\end{equation}
More generally, for $\{x_1,\ldots, x_m\} \cap \{s(e) | e\in E\}=\emptyset$, consider a gauge invariant operator $F$ built from arbitrary holonomies and the fermion creation operators $\theta(x_1)\ldots \theta(x_m)$, i.e., 
\begin{equation}
    F= F_{B_1\cdots B_m}[A] \otimes \theta^{B_1}(x_1)\cdots \theta^{B_m}(x_m).
\end{equation}
Then evidently
\begin{equation}
\label{eq:comm}
    \comm{\SE{E}{}}{F}=0. 
\end{equation}
Therefore 
\begin{coro}
\label{coro:eigen}
The total spin $\SE{E}{}$ acts on $F\mathcal{H}_E^{(j)(k)}$ like a spin in the $j$-re\-pre\-sen\-ta\-tion, where $F$ is any operator constructed as above. In particular, 
\begin{equation}
    \left.(\SE{E}{})^2\right\rvert_{F \mathcal{H}_E^{(j)(k)}}=j(j+1)\one
\end{equation}
\end{coro}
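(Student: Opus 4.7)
The plan is to derive the corollary immediately from Proposition \ref{prop:eigen} by invoking the commutation relation \eqref{eq:comm}. First I would note that every vector in $F\mathcal{H}_E^{(j)(k)}$ has the form $F\phi$ with $\phi\in\mathcal{H}_E^{(j)(k)}$. Since each component $\SE{E}{i}$ commutes with $F$ by \eqref{eq:comm}, so does the quadratic Casimir $(\SE{E}{})^2=\sum_i \SE{E}{i}\SE{E}{i}$. Applying this to $F\phi$ and then using Proposition \ref{prop:eigen} gives $(\SE{E}{})^2 F\phi = F(\SE{E}{})^2\phi = j(j+1)\,F\phi$, which is the displayed equation of the corollary.

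For the additional claim that $\SE{E}{}$ acts like a spin-$j$ representation on $F\mathcal{H}_E^{(j)(k)}$, I would verify both invariance of this subspace under every $\SE{E}{i}$ and the $\mathfrak{su}(2)$ algebra on the restriction. Invariance is immediate: $\SE{E}{i}F\phi = F\,\SE{E}{i}\phi$ by \eqref{eq:comm}, and $\SE{E}{i}\phi\in\mathcal{H}_E^{(j)(k)}$ by Proposition \ref{prop:eigen}. The commutation relations $\comm{\SE{E}{i}}{\SE{E}{j}}=i\epsilon^{ij}{}_k\,\SE{E}{k}$ on the restricted space are inherited from the full Hilbert space, so one indeed obtains a representation of $\mathfrak{su}(2)$ on which the Casimir takes the value $j(j+1)$, i.e. a (possibly reducible) direct sum of copies of $\pi_j$.

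There is no real obstacle here; the genuine content of the corollary is the intertwining property $\SE{E}{i}F = F\,\SE{E}{i}$, which has already been established in \eqref{eq:comm} from the facts that the holonomy operators commute with the grasping actions away from their endpoints and that the fermion creation operators $\theta(x_1),\ldots,\theta(x_m)$ sit at sites disjoint from $\{s(e)\,|\,e\in E\}$. The one subtle point worth flagging is that $\phi\mapsto F\phi$ need not be injective, so $F\mathcal{H}_E^{(j)(k)}$ may be strictly smaller than an isomorphic copy of $\mathcal{H}_E^{(j)(k)}$; this does not affect the eigenvalue statement but is perhaps worth a brief parenthetical remark.
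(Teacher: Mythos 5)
Your proposal is correct and follows exactly the route the paper intends: the corollary is stated as an immediate consequence ("Therefore") of the commutator identity \eqref{eq:comm} together with Proposition \ref{prop:eigen}, which is precisely the intertwining argument you spell out. Your added remark that $F$ need not be injective, so that $F\mathcal{H}_E^{(j)(k)}$ may be a proper quotient rather than an isomorphic copy, is a fair point of care but does not change the eigenvalue statement.
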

Corollary \ref{coro:eigen} provides for a large number of irreps for $\Se{E}{}$ and consequently a large number of eigenstates for $\Se{E}{2}$. We note however, that there are more eigenstates than those listed already. The states of corollary \ref{coro:eigen} have one point in which spin flows out. The idea to construct more general eigenstates is to consider multiple such points of outflow. Let us consider an example.   

\begin{figure}[t]
    \centering
    \includegraphics[scale= 0.1]{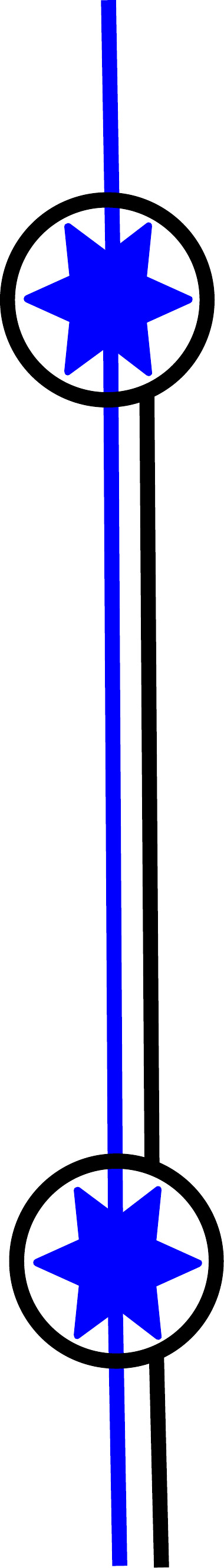}
    \caption{Sketch of a Spin Network State (blue) consisting of two fermions and a holonomy inbetween. Moreover, the two fermion system admits two points of outflow. The spin operator $\SE{E}{}$ (black) measures the parallel transported spin of the fermions at the joint endpoint, which coincides with one of the points of outflow in this case.}
    \label{fig:Singlet_operator}
\end{figure}

\begin{exa}
\label{ex:Spin1_ugly}
For two fermions and two points of outflowing spin, there is the eigenstate,
\begin{align}
    &\includegraphics[valign=c, scale=0.2]{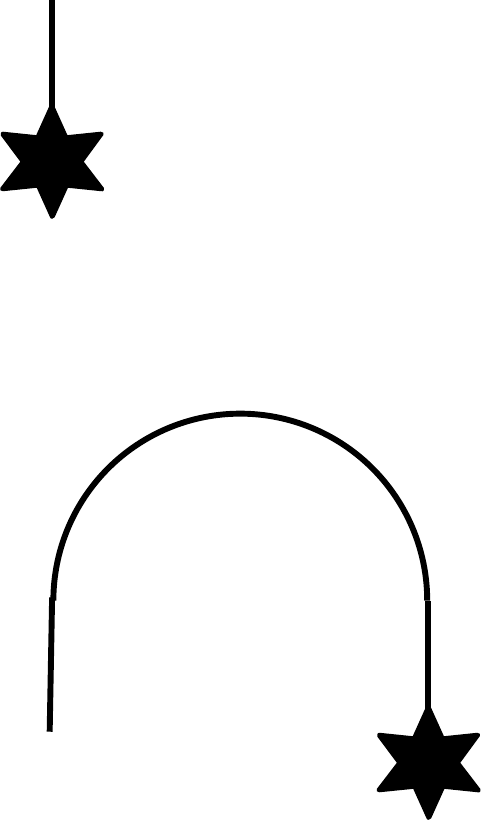} + \frac{1}{2} \quad \includegraphics[valign=c, scale=0.2]{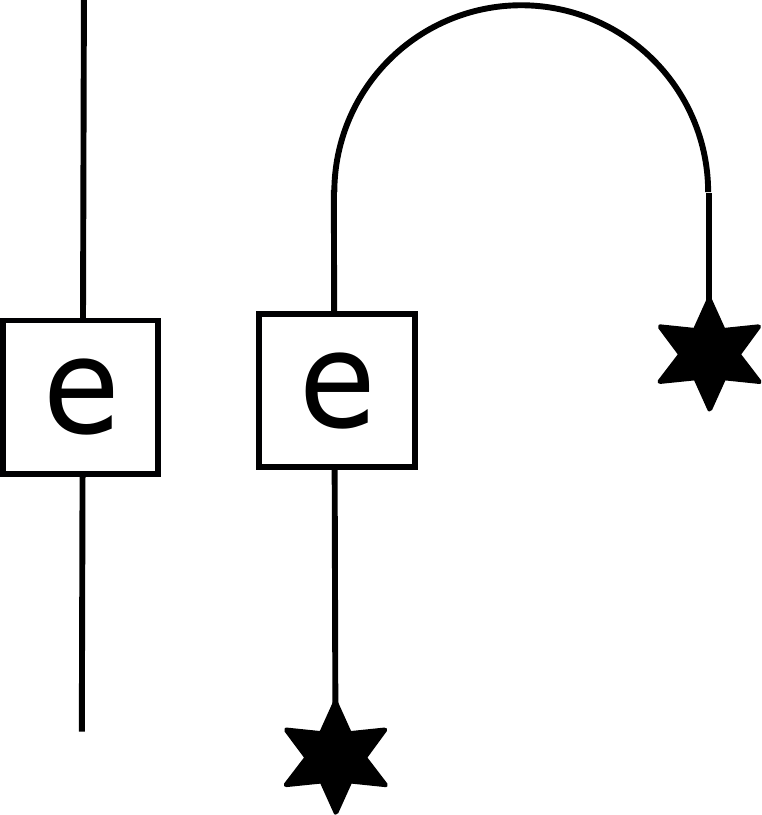} \nonumber \\
    &\xrightarrow{(\SE{E}{})^2} 2 \left( \includegraphics[valign=c, scale=0.2]{Spin1_middle.pdf} + \frac{1}{2} \quad \includegraphics[valign=c, scale=0.2]{Spin0_times_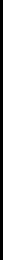} \right),
    \label{eq:triplet0}
\end{align}
which has spin 1. We notice that the first term corresponds to the new type of states while the second term is a special case of \eqref{Spinj_eigenstate} with spin $j=0$. The second term arises from the first by connecting each, the two fermions and the two points of outflowing spin, by the same holonomy $h_e$. In order to get an eigenstate, taking this linear combination is necessary, as the mixed terms \eqref{PC_2J1J2} couple the disconnected fermions. 
\end{exa}
We have to be aware of the difference between the spin network and the binor representation. Even if it is sometimes tempting to read a binor term as a spin network state or vice versa, there is a non-trivial basis change in general. The state \eqref{eq:triplet0}, for instance, is a non-trivial linear combination of spin network states and particularly not proportional to the spin network state depicted in \autoref{fig:Singlet_operator}.

In fact, we can follow the same procedure also with an arbitrary number of fermions $n$ and two points of outflow. For suitable coefficients $\lambda_{ab}, ..., \lambda_{ab ... cd}$ we can show that the generalised eigenstates of the squared total spin $(\SE{E}{})^2$ to the spin $j+k$ have the following form 
\newpage
\begin{widetext}
\begin{align}
    & \includegraphics[valign=c, scale=0.15]{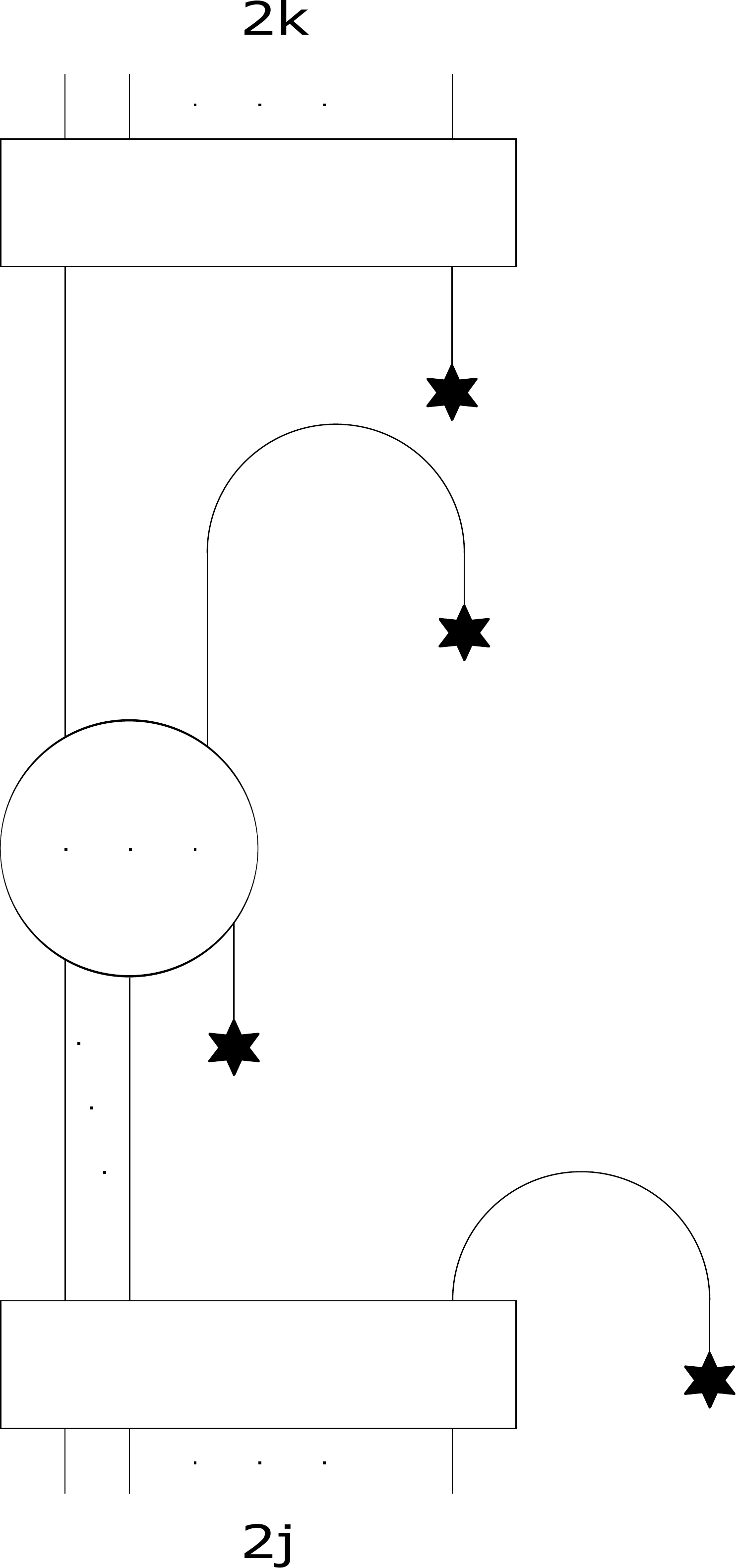} +  \sum_{a < b} \lambda_{a b} \quad \includegraphics[valign=c, scale=0.15]{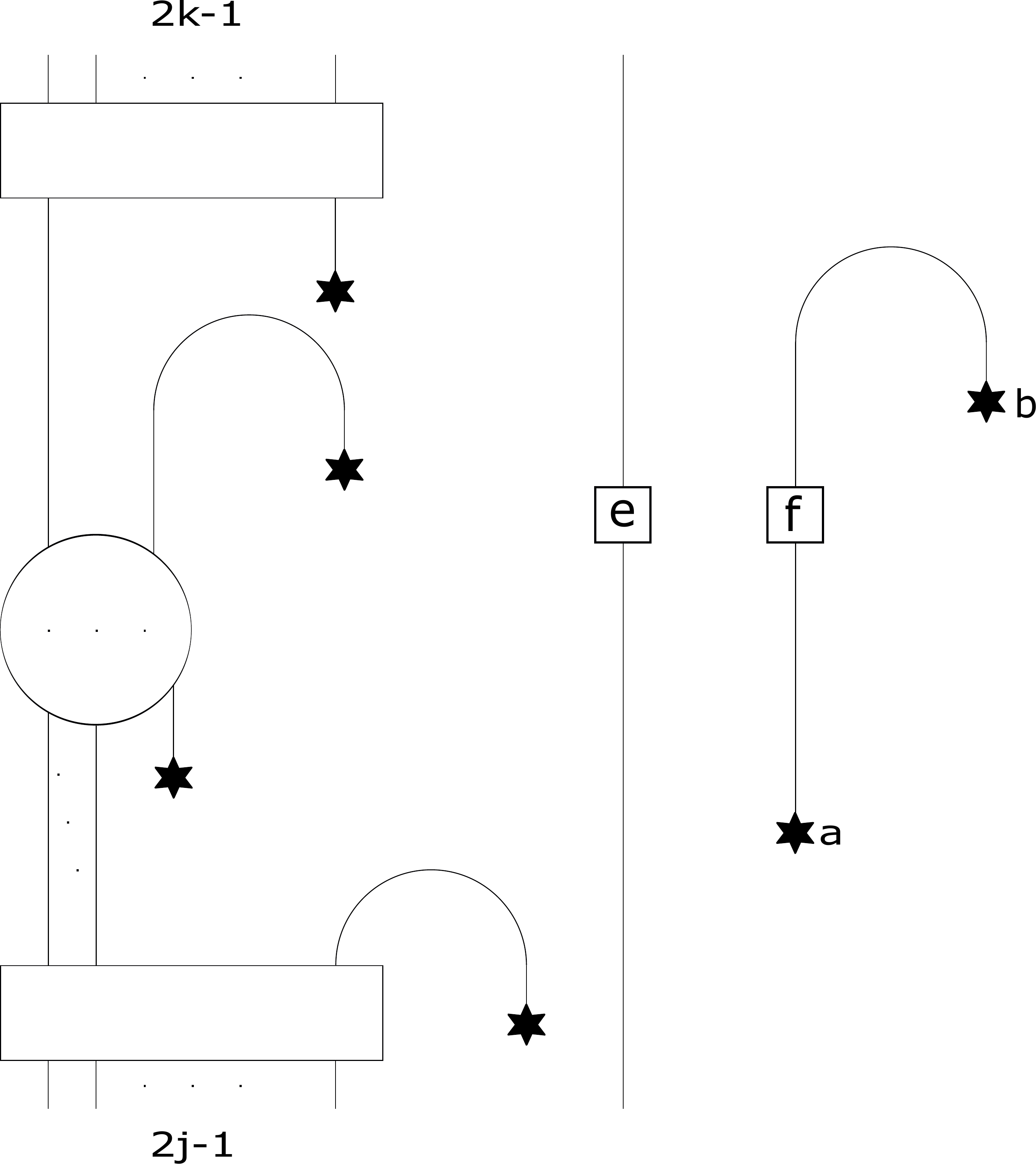} + \, ... \, + \nonumber \\[3 ex]
    &+ \sum_{a < b, ..., c < d} \lambda_{a b ... c d} \quad \includegraphics[valign=c, scale=0.15]{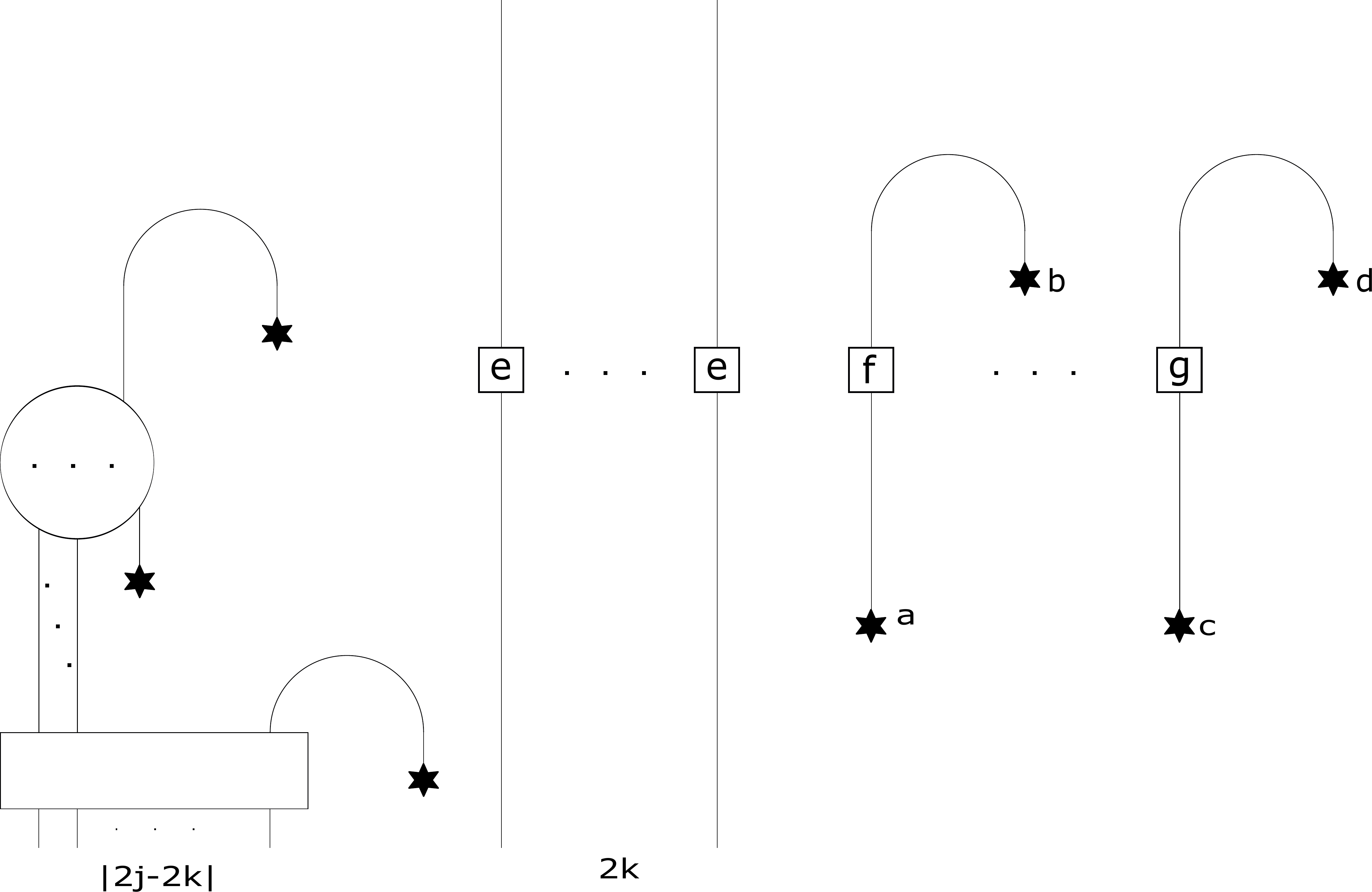} \label{Spinj+k_ugly}
\end{align}
\end{widetext}
where we numbered the fermions by the labels $a, b, ..., c, d$. Because of \eqref{eq:comm} we can even act on it with suitable operators $F$ to create further eigenvectors to the same eigenvalue. In (\ref{Spinj+k_ugly}), we assumed that there are no pure holonomies between the points of outflowing spin. From \eqref{Spinj+k_ugly} it becomes apparent that the last term is always of the form \eqref{Spinj_eigenstate}, and the sequence terminates. By acting on that state with $\SE{E}{2}$ we can read off the coefficients if we require that it is an eigenstate to the eigenvalue $j+k$.
\begin{prop}
\label{prop:ugly_eigenstates}
The states of the form (\ref{Spinj+k_ugly}) are eigenstates of $\SE{E}{2}$ to the eigenvalue $j+k$ if the coefficients are defined by the following recursion relation
\begin{align}
    \lambda^{-1}_{a_1 b_1... a_{m} b_{m}} = \lambda^{-1}_m &:= m(2j + 2k - m + 1) \lambda^{-1}_{m-1} \nonumber \\
    \lambda_0 &= 1, \label{eq:coefficients}
\end{align}
where $m$ denotes the number of reductions made. In addition $\lambda_{a_1 b_1 ... a_m b_m} = 0$ if at least one pair $(a_i, b_i)$ of fermions is of the same type (both increasing or reducing the spin from top to bottom).
\begin{proof}
See Appendix \ref{sec:proof_ugly_states}.
\end{proof}
\end{prop}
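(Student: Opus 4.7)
The plan is to compute $(\SE{E}{})^2$ directly on the ansatz (\ref{Spinj+k_ugly}) and enforce the eigenvalue equation graph by graph, using the binor identity (\ref{PC_2J1J2}) as the only non-trivial computational tool. Every mixed term $\Sxl{s(e_1)}{k}(h_{e})^j{}_k \Sx{s(e_2)}{k}$ splits into an identity piece on the two spinor lines and a piece that fuses them into a new holonomy bridge; I will call the second piece a \emph{reduction} of the corresponding pair of fermions. The sum $(\SE{E}{})^2 = \sum_e \Sx{s(e)}{2} + \sum_{(e,e')}\Sx{s(e)}{}\cdot h_{e\circ e'^{-1}}\cdot \Sx{s(e')}{}$ then acts on each summand of (\ref{Spinj+k_ugly}) either by reshuffling the reduction pattern, or by multiplying the summand by a constant.

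First I would treat the diagonal contribution $\sum_e \Sx{s(e)}{2}$, which acts as $\tfrac{3}{4} n\, \one$ on every summand, and the identity half of each cross term, which just multiplies the summand by a number that depends on the types of the two fermions involved. Then I would analyse the swap half of each cross term in three cases: (a) both fermions unreduced and of opposite type: the swap creates one new reduction, turning a pattern with $m$ reductions into one with $m+1$; (b) both fermions unreduced and of the same type: the swap vanishes graphically, because closing two same-type external spinor stubs into a singlet gives zero on the Fock vacuum, and this is precisely what forces $\lambda_{a_1 b_1 \ldots a_m b_m}=0$ whenever some $(a_i,b_i)$ is of the same type; (c) at least one of the two fermions is already part of an existing reduction: the swap breaks the reduction and returns a pattern with $m-1$ reductions, up to a combinatorial factor coming from the loop trace in the $j=\tfrac12$ representation. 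Mixed sub-cases in (c) can be reduced to the pure ones by applying (\ref{PC_2J1J2}) once more.

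Second, I would collect the coefficient of each $m$-reduction pattern appearing in $(\SE{E}{})^2\Psi$. It is the sum of an $m\!\to\!m$ part (proportional to $\lambda_m$, assembling the diagonal and the identity halves), an $(m-1)\!\to\!m$ part (proportional to $\lambda_{m-1}$ from case (a)), and an $(m+1)\!\to\!m$ part (proportional to $\lambda_{m+1}$ from case (c)). Setting this equal to $(j+k)(j+k+1)\lambda_m$ yields a three-term recursion. Using that the last summand of (\ref{Spinj+k_ugly}) is exactly an instance of (\ref{Spinj_eigenstate}), whose eigenvalue is fixed by Proposition \ref{prop:eigen}, one eliminates the $\lambda_{m+1}$ channel from the bottom up, so the three-term recursion collapses to the two-term relation $\lambda_m^{-1} = m\,(2(j+k)-m+1)\,\lambda_{m-1}^{-1}$ with $\lambda_0 = 1$.

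The main obstacle is the combinatorial bookkeeping in cases (a) and (c): for a given target pattern with $m$ reductions, one has to count precisely how many ordered pairs $(e,e')$ produce it from a pattern with $m-1$ reductions, and this count together with the binor prefactors has to conspire to the factor $m\,(2(j+k)-m+1)$. Structurally this is the matrix element $\langle J,M|J_+ J_-|J,M\rangle$ with $J=j+k$ and $M=J-m$, which is exactly what one expects if (\ref{Spinj+k_ugly}) is the decomposition of a spin-$(j+k)$ highest-weight vector of $2(j+k)$ spin-$\tfrac12$ constituents expressed in the basis of nested singlet pairings. Identifying the combinatorial factor with this ladder-operator coefficient is the only non-routine step; the remainder is algebra with the binor identity and the recursion (\ref{eq:coefficients}) follows at once.
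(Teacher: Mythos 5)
Your overall strategy --- acting with $\SE{E}{2}$ directly on the full ansatz (\ref{Spinj+k_ugly}) and matching coefficients pattern by pattern --- is a legitimate alternative to the paper's argument, which instead inducts on the number of fermions $n$: there the cross terms among the first $n$ fermions are packaged into $\SE{E'}{2}$, which by the induction hypothesis acts as $j_m(j_m+1)\one$ on each summand, so only the $n$ cross terms involving the newly added fermion ever have to be analysed graphically. Your route avoids the induction but pays for it with heavier combinatorics, and it is precisely there that the proposal goes wrong. Your case (c) asserts that a cross term applied to a pair containing an already-reduced fermion ``breaks the reduction and returns a pattern with $m-1$ reductions.'' It does not. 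By (\ref{PC_2J1J2}) each cross term is $\tfrac12\one$ plus a cup--cap (Temperley--Lieb) element decorated with holonomies; such elements can only create or reroute $\epsilon$-contractions, never remove one, since the number of through-strands running from the fermions to the outflow symmetrizers is non-increasing under the binor identity (\ref{eq:binor_identity}). Hence $\SE{E}{2}$ is block upper-triangular in $m$, the recursion is two-term from the outset, and there is no $\lambda_{m+1}$ channel to ``eliminate from the bottom up.'' The collapse mechanism you invoke (that the last summand is of the form (\ref{Spinj_eigenstate})) would not in any case turn a genuine three-term recursion into the specific relation (\ref{eq:coefficients}); in the paper that observation only serves to show that the sequence of summands terminates.

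The second gap is that the one step you yourself flag as non-routine --- counting, for a fixed target $m$-reduction pattern, the ordered pairs and binor prefactors that produce it from an $(m-1)$-pattern and from itself, and showing they sum to $m(2j+2k-m+1)$ --- is not carried out; it is replaced by the observation that this factor equals $\langle J,J{-}m\,|J_-J_+|\,J,J{-}m\rangle$ for $J=j+k$. That identity is a useful consistency check, but nothing in the proposal establishes that (\ref{Spinj+k_ugly}) \emph{is} the expansion of a weight vector in the nested-singlet basis, which is essentially the content of the proposition itself. You would also need to track the holonomy decorations (each reduction carries a specific bridge $h_{e_a^{-1}\circ e_b}$ between the two outflow points), and your justification of case (b) should be corrected: the vanishing for a same-type pair comes from an antisymmetrizer meeting the symmetrizer at the common outflow point, not from a property of the Fock vacuum. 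If you want to keep the non-inductive route, you must prove the block-triangularity in $m$ and compute the diagonal coefficient at level $m$ together with the $(m-1)\to m$ transition count explicitly; otherwise the induction on $n$ of Appendix \ref{sec:proof_ugly_states} is the shorter path.
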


\begin{figure}[b]
    \centering
    \includegraphics[scale=0.1]{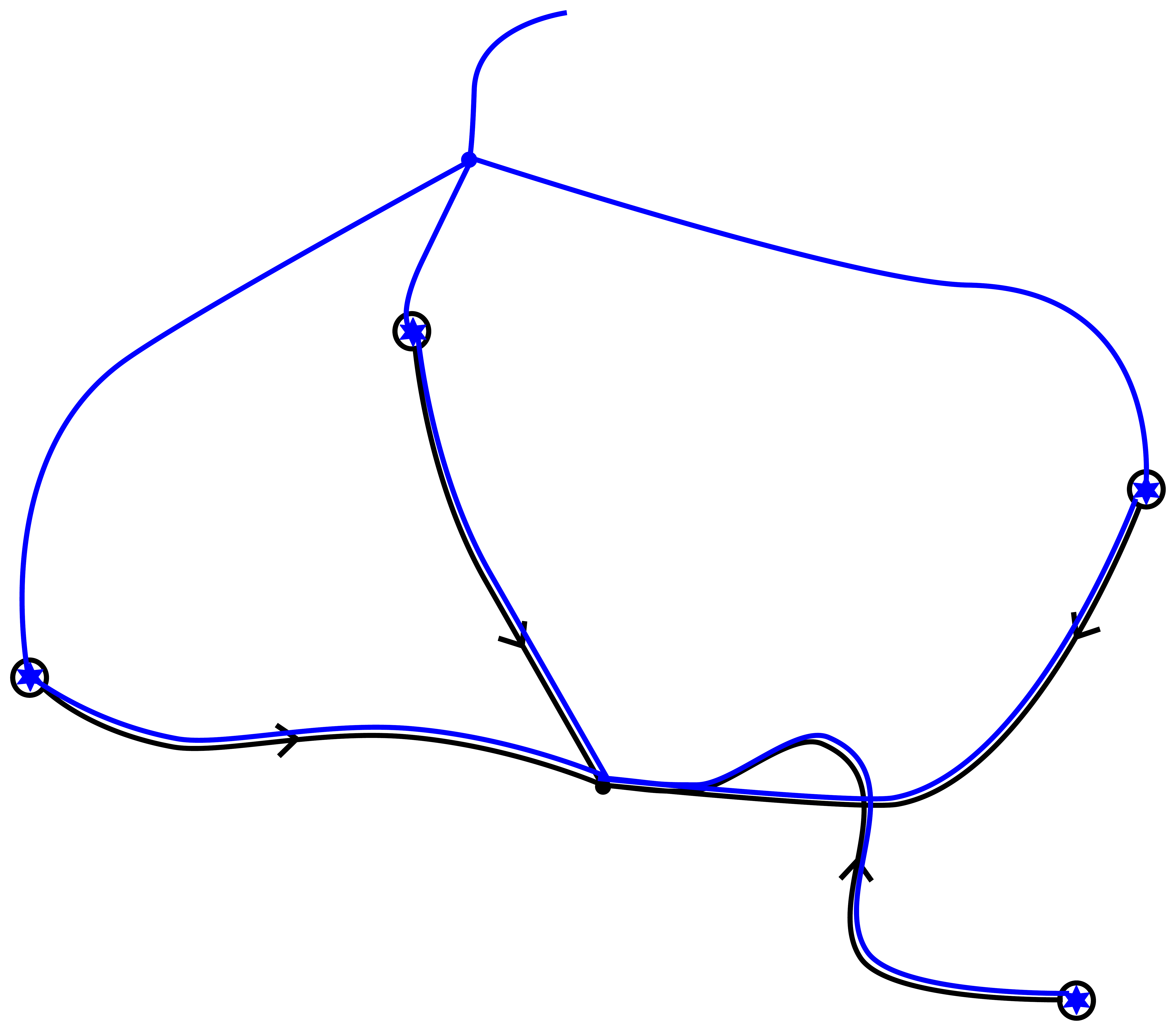}
    \caption{Sketch of a generic spin network state (blue) with four fermions and two points of outflow. The spin operator $\SE{E}{}$ (black) measures the transported spin of the fermions at the joint endpoint $t(E)$, which represents one of the two points of outflow.}
    \label{fig:generic_state}
\end{figure}

The eigenstates (\ref{Spinj+k_ugly}) can be interpreted as terms contributing to a spin network state of the form depicted in \autoref{fig:generic_state}. Note again that switching from the binor to the spin network representation or vice versa in general requires a non-trivial basis change. Proposoition \ref{prop:ugly_eigenstates} gives rise to a more general set of eigenstates. In principle, we can continue adding points of outflow. However, we would only increase the combinatorial complexity of the eigenstates as we can now reduce the outflowing spin by connecting any two of the endpoints and two fermions to another using suitable holonomies. Although the number of coefficients grows exponentially, the procedure will be exactly the same as in the case of two points of outflow.

In fact, it should be possible to write any spin network state as a linear combination of states with $n$ fermions and up to $n$ points of outflow multiplied by a spin network state with only purely gravitational degrees of freedom. This becomes apparent when writing any spin network state in the binor formalism and resolving the symmetrisations. This way, we have sketched the process of how to characterise the full spectrum of the squared total spin operators $\SE{E}{2}$.

\subsection{Projection onto a Surface Normal}
\label{sec:surfnorm}
Another observable inspired by flat standard quantum theory is the projection of the total spin onto some direction, for example  $\Se{E}{z}$. The obvious problem is that this operator is not gauge invariant and hence not a good observable. Instead, we might project the total spin onto a reference field (cf. magnetic field in a Stern-Gerlach experiment). One such reference field may be given by the normal vector field of a surface $\mathcal{S}$
\begin{equation}
    n_i:= E^a_i \epsilon_{abc} t_1^b t_2^c,  
\end{equation}
where $t_1,t_2$ are tangent vectors to a surface $\mathcal{S}$. To get rid of the density weight, we define the smeared operator

\begin{align}
    \hat{E}_i(\mathcal{S}) &= \int_{\cal{S}} \der{\mathcal{S}}_a \hat{E}_i^a = \int_\mathcal{S} \der{x}^b \wedge \der{x}^c \epsilon_{abc} \hat{E}_i^a \nonumber \\
    &:= \int_\mathcal{S} n_i \sqrt{\det\left(\leftidx{^{(2)}}{q}{}\right)},
    \label{eq:Spin_proj}
\end{align}
where $\hat{E}$ is the smeared canonical momentum operator of the Ashtekar connection $\hat{A}$ and $\leftidx{^{(2)}}{q}{}$ is the pullback of the spatial metric $q$ to the two-dimensional surface $\mathcal{S}$. The vector $n_i$ is related to the spatial surface normal by $n_a = e_a^i n_i \perp \mathcal{S}$. The operator (\ref{eq:Spin_proj}) is a measure of the area of the surface $\mathcal{S}$. This motivates the definition of a new observable 

\begin{defi}
Let $E$ be a set of edges with a joint endpoint and $\mathcal{S}$ a surface that intersects the spin network graph defined by $E$ exactly once in $t(E)$. We define the projection of the total spin onto the surface normal by
\begin{align}
    \SprojE = \SE{E}{i} \hat{E}_i (\mathcal{S}).
    \label{eq:def_JE}
\end{align}
\end{defi}
From the fact that $\hat{E}_i$ couples also on pure gravitational spin network functions, one can see that it will not have the same eigenbasis as $\SE{E}{2}$. This can be also proven by explicitly calculating the commutator. Already for $n=2$ fermion spins, we get a non-trivial contribution of the commutator of $\hat{h}_e$ and $\hat{E}_i$,
\begin{widetext}
\begin{align}
    \left[ \SE{E}{2}, \SprojE \right] = &\Big( \Sx{e_1(0)}{l} \tensor{\left( h_{e_1}^{-1} \right)}{^m_l} \Sxl{e_1(0)}{j} \Sx{e_2(0)}{k} \nonumber \\
    &+ \Sx{e_2(0)}{l} \tensor{\left( h_{e_2}^{-1} \right)}{^m_l} \Sxl{e_1(0)}{j} \Sx{e_2(0)}{k} \Big) \tensor{\left( h_{e_1} \frac{\tau_m}{2} h_{e_2}^{-1} \right)}{^j_k} \neq 0.
\end{align}
\end{widetext}
On the other hand, the two operators in (\ref{eq:def_JE}) do commute.
\begin{align}
    \left[ \SE{E}{i}, \hat{E}_i (\mathcal{S}) \right] = \sum_{e \in E} \tensor{\tau}{_i^i_j} \tensor{\jrep{1}{h_{e}^{-1}}}{^j_k} S(s(e))^k = 0,
    \label{eq:commutator_SprojE}
\end{align}
which is a result of the total antisymmetry of $\tau$. As a consequence, the self-adjointness of (\ref{eq:def_JE}) is ensured and we do not have to think about the order of the two operators.

Although there is no hope to get a spin component with exactly the same properties as $\SE{z}{}$ with (\ref{eq:def_JE}), we are able to understand some special cases of spin coupling. As $\hat{E}_i$ acts with the insertion of a Pauli matrix $\tau_i$, we find that every addend of (\ref{eq:def_JE}) acts very similar to (\ref{PC_2J1J2}). In particular, the binor representation is, up to a multiplicative factor, identical with the operator being inserted at the intersection point of $\mathcal{S}$ with $e$ and at the point where the respective fermion is located. The factor is given by the defining relation, which is known from the area operator in loop quantum gravity \cite{Rovelli:1994ge},
\begin{align}
    \hat{E}_i \hat{E}^i h_e = -\left(8 \pi l_p^2 \gamma\right)^2 h_{e_1} \tau_i \tau^i h_{e_2} = \left(8 \pi l_p^2 \gamma\right)^2 j(j+1) h_e,
    \label{eq:area_squared}
\end{align}
with $e = e_1 \circ e_2$ being split at the intersection point with $\mathcal{S}$. Note that we assume now and for the following discussion that the edge $e$ intersects the surface $\mathcal{S}$ exactly once and transversely, i.e. it punctures the surface non-tangentially. Furthermore, the intersection point equals neither $e(0)$, nor $e(1)$.

In order for (\ref{eq:area_squared}) to hold, we have to multiply the binor representation (\ref{PC_2J1J2}) with the spin $j$ of the holonomy $h_e$, which intersects $\mathcal{S}$ and a constant factor $\left(8 \pi l_p^2 \gamma\right)$.

Consider now two fermions of the form (\ref{Spinj_eigenstate}) with the spin $j = 0$. This kind of state involving two fermions was first considered in \cite{Morales1995}. If we admit an intersection of $\mathcal{S}$ with an arbitrary point on the edge, we can calculate the action of the spin projection,
\begin{align}
    \includegraphics[valign=c, scale=0.19]{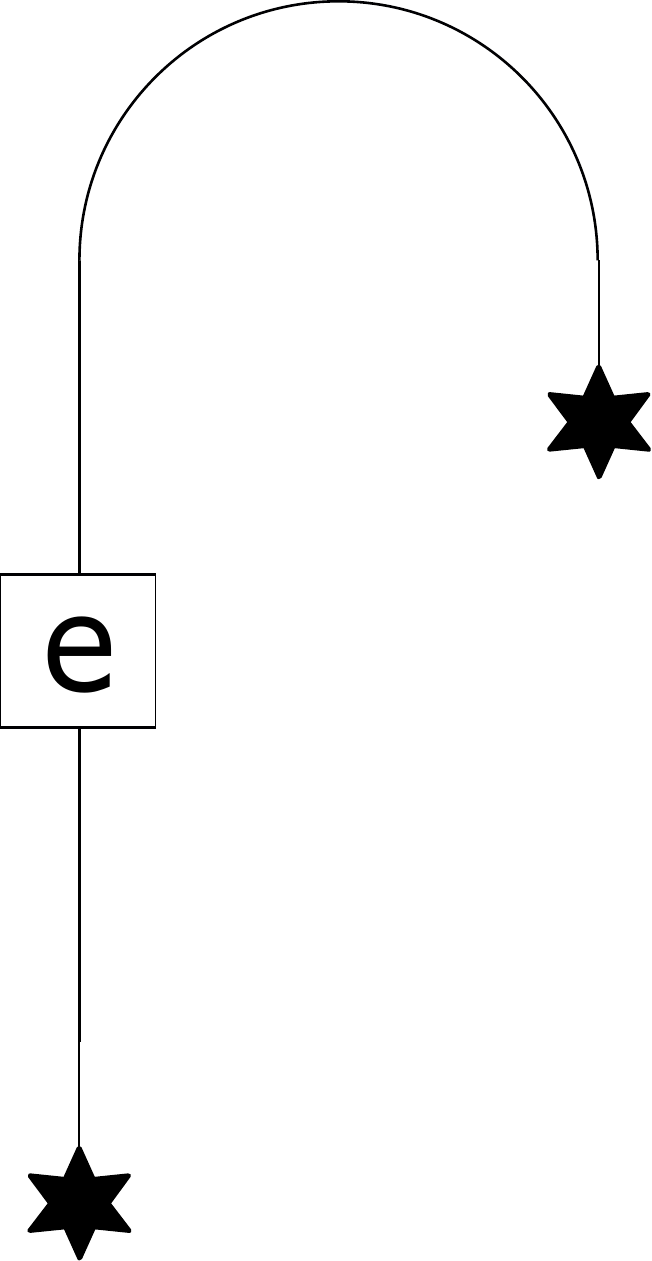} \xrightarrow{\SprojE} \frac{1}{2} \left[ \includegraphics[valign=c, scale=0.19]{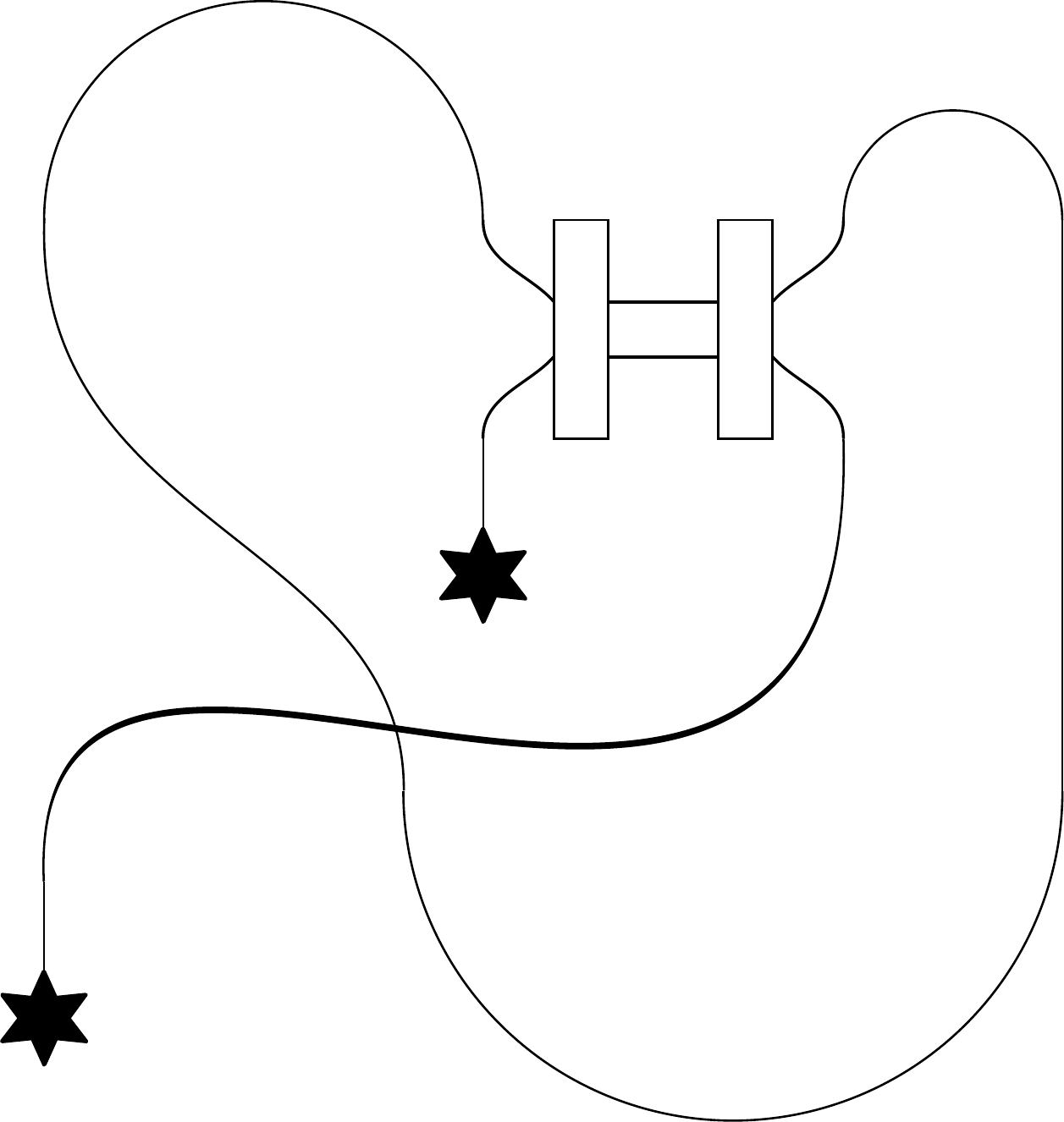} - \includegraphics[valign=c, scale=0.19]{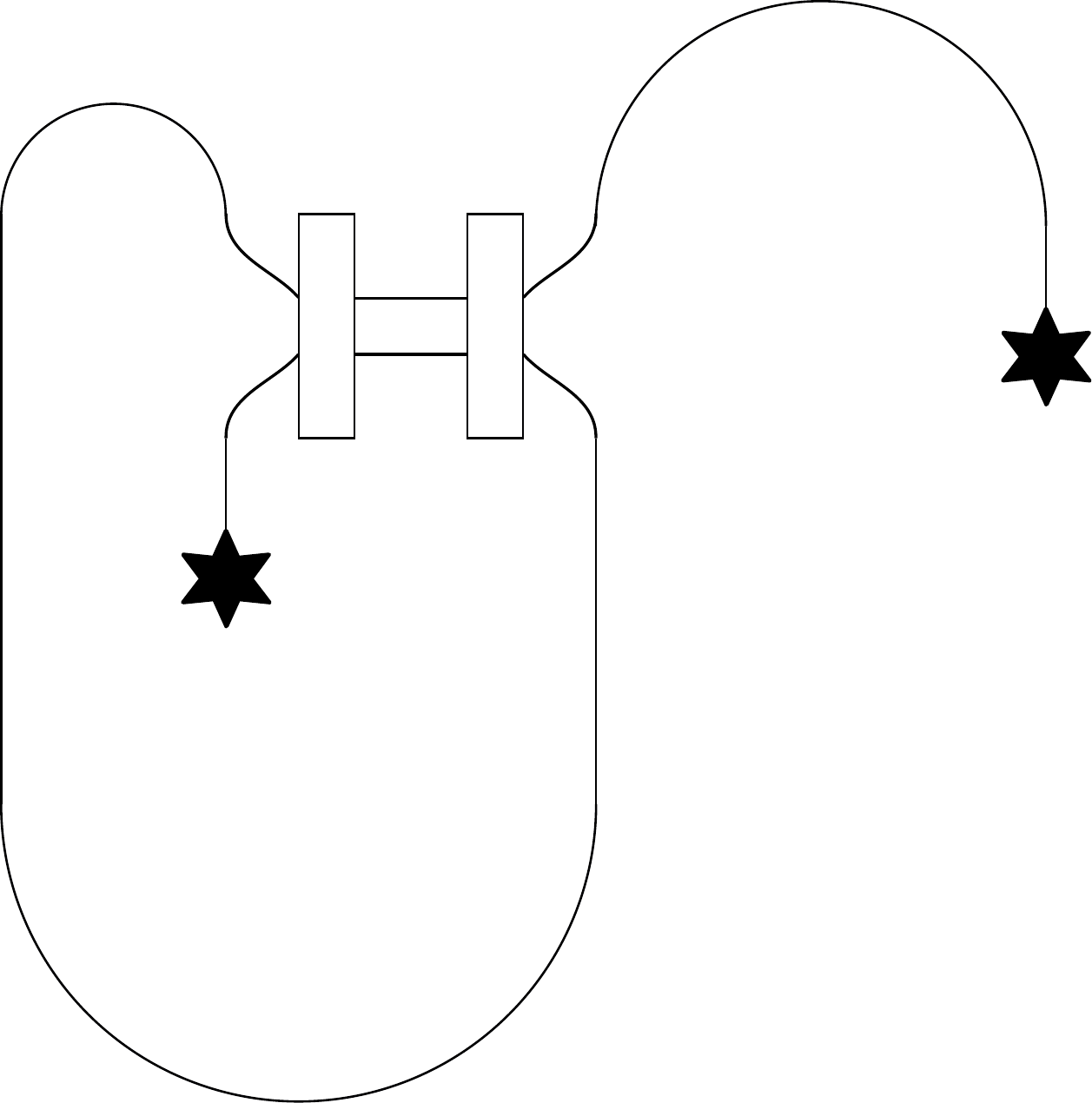} \right] = 0,
    \label{eq:singlet_project}
\end{align}
where we again omitted the "e" label in the calculation. (\ref{eq:singlet_project}) is straight-forwardly calculated using the basic techniques of binor calculus. This result would be expected for a component of a vanishing spin projection. More generally, we can also consider the product of the singlet (\ref{eq:singlet_project}) with a purely gravitational spin network. Here, we will use the Leibnitz rule of $\hat{E}_i$ and the term where $\SprojE$ acts on the singlet vanishes. The latter is vanishes by (\ref{eq:singlet_project}) and the first term reads
\begin{widetext}
\begin{align}
    \includegraphics[valign=c, scale=0.5]{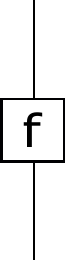}_{2j} \cdot \includegraphics[valign=c, scale=0.15]{Spin0.pdf} \xrightarrow{\SprojE} \frac{1}{2} \left[ \leftidx{_{2j}}{\includegraphics[valign=c, scale=0.1]{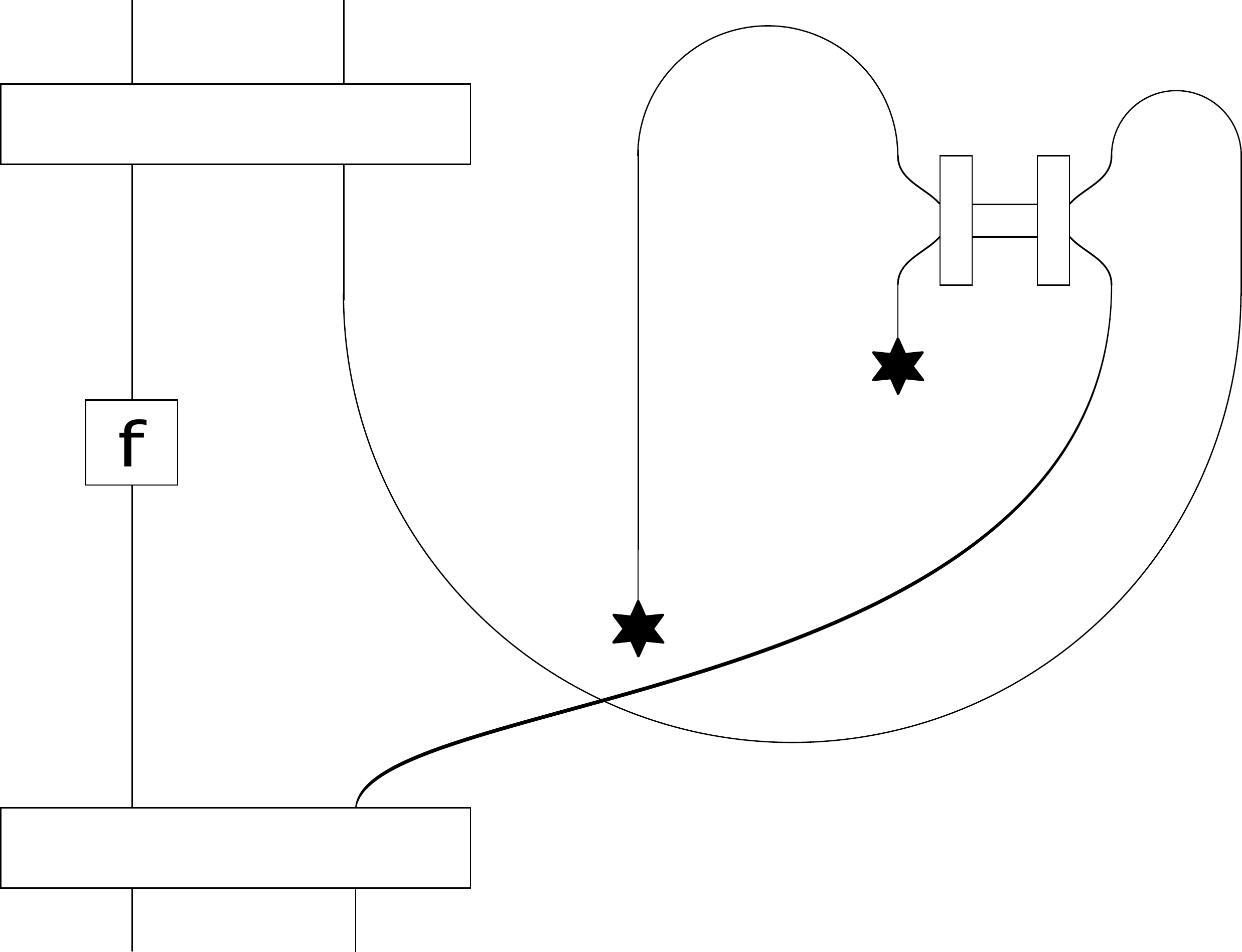}}{} + \leftidx{_{2j}}{\includegraphics[valign=c, scale=0.1]{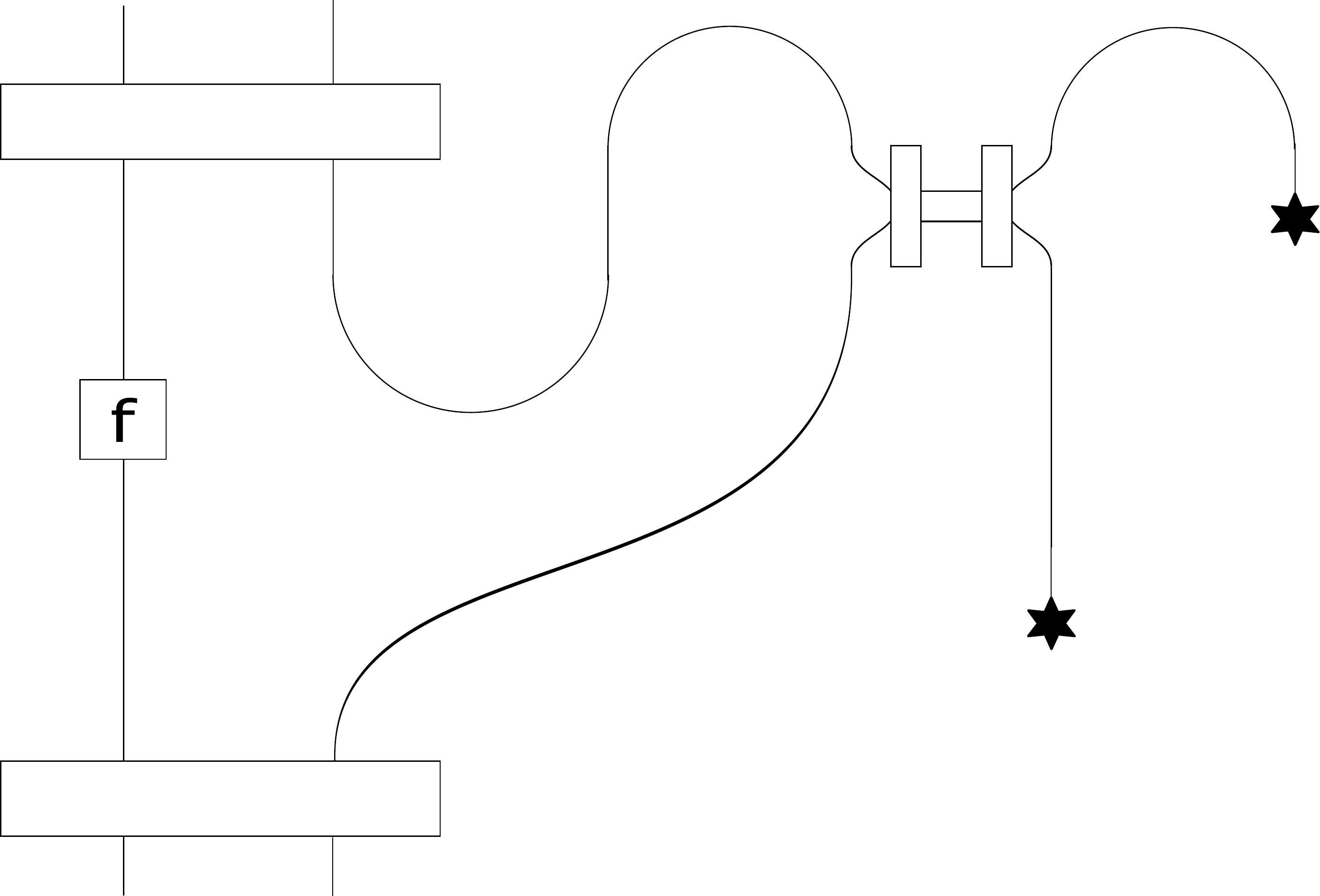}}{} \right] = 0.
\end{align}
\end{widetext}
This can be calculated by dissolving the symmetrisations, pulling two holonomies across each other and consequently using the binor identity. Note that we assumed the edge $f$ to intersect the surface $\mathcal{S}$ at the same point as $e$. Apart from that, $f$ is arbitrary. If we allow $f$ to intersect $\mathcal{S}$ at any arbitrary point, the projection operator $\SprojE$ will also map the state to zero and the calculation is similar to (\ref{eq:singlet_project}). The freedom to multiply any edge $f$ here is a result of the symmetry of the state. In the following however, we have to stick to $f=e$. Analogously, one can show that also the spin 1 states are eigenstates of (\ref{eq:def_JE}).

\begin{align}
    \includegraphics[valign=c, scale=0.6]{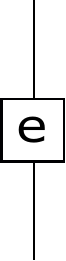}_{2j} \cdot \includegraphics[valign=c, scale=0.15]{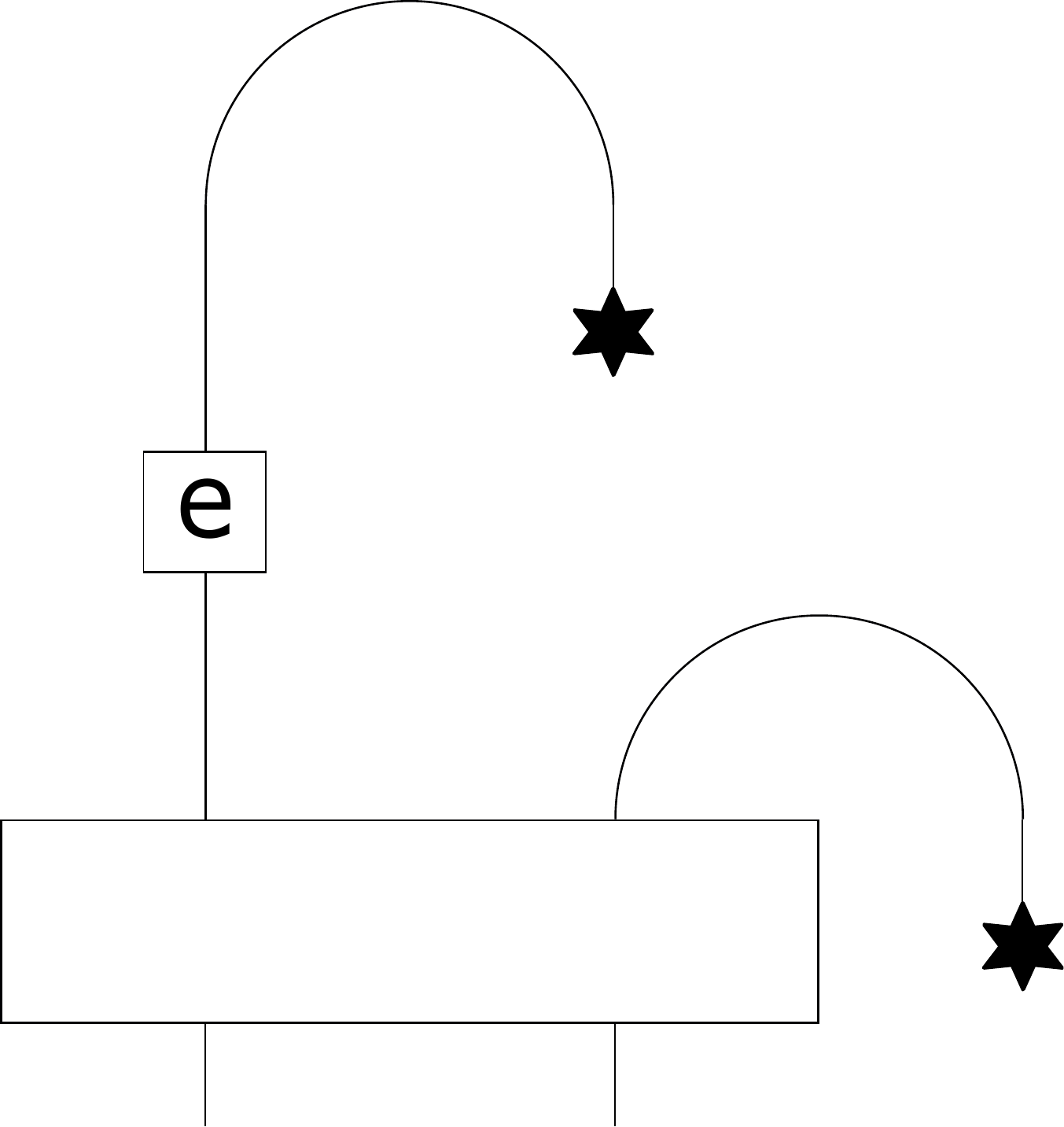} \xrightarrow{\SprojE} - \frac{1}{2} (j+1) \quad \includegraphics[valign=c, scale=0.6]{Edge_withSU2_rep.pdf}_{2j} \cdot \includegraphics[valign=c, scale=0.15]{Spin1_down.pdf}
    \label{eq:triplet_projectdown}
\end{align}

\begin{align}
    \includegraphics[valign=c, scale=0.6]{Edge_withSU2_rep.pdf}_{2j} \cdot \includegraphics[valign=c, scale=0.18]{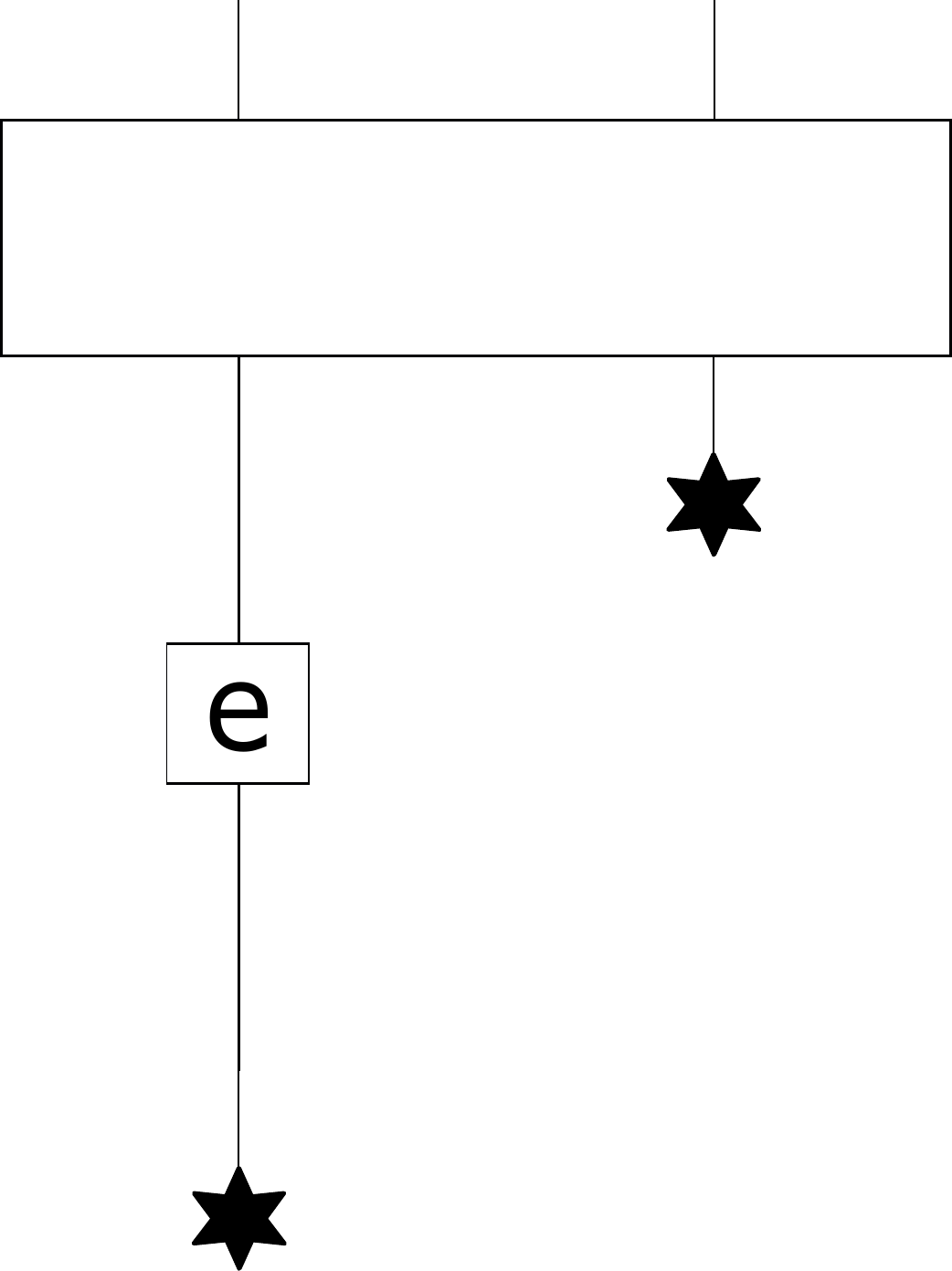} \xrightarrow{\SprojE} \frac{1}{2} (j+1) \quad \includegraphics[valign=c, scale=0.6]{Edge_withSU2_rep.pdf}_{2j} \cdot \includegraphics[valign=c, scale=0.18]{Spin1_up.pdf}_.
    \label{eq:triplet_projectup}
\end{align}
Since it is always the same techniques used in the calculations, we refrain from writing out all the steps. The structure of the eigenvalues is reminiscent of the projection of the "up-up" and "down-down" triplet states from flat quantum theory, but now the quantisation axis is defined by the orthonormal of the surface $\mathcal{S}$. Indeed, if we flip the sign of the surface normal, also the signs of (\ref{eq:triplet_projectdown}) and (\ref{eq:triplet_projectup}) flip, because the operator $\hat{E}$ includes the sign of the inner product of the surface normal and tangent vector of $e$ at the intersection point. 

Obviously, the operator $\hat{E}_i$ influences the actual eigenvalue, as it measures the spin $j+ \frac{1}{2}$ of the holonomies $h_e$, whose edges intersect the surface $\mathcal{S}$. If we were to normalise the surface normal in the definition of the projection operator (\ref{eq:def_JE}) such that 
\begin{align}
    \tr{\left( \hat{E}^*_i \hat{E}^i \right)} = 1,
\end{align}
we will gather an additional factor $\frac{2}{\sqrt{j(j+1)}}$ for $\hat{E}^i \propto \frac{1}{2} \tau_i$ in the spin $j$ representation\footnote{We would arrive at the same result, if we would define the normalised operator by dividing $\SprojE$ by the surface area $A_\mathcal{S}$.}. Since in (\ref{eq:singlet_project} - \ref{eq:triplet_projectup}), we are acting on a spin $j + \frac{1}{2}$ holonomy, the eigenvalues would take the form $\pm \frac{j+1}{\sqrt{\left(j+\frac{1}{2} \right)\left(j + \frac{3}{2}\right)}}$. If we take the limit $j \to \infty$, these reproduce the spin projection $\pm 1$, which we would expect for the triplet state in flat quantum theory. Indeed, in the literature \cite{Mikovic:2008zz, Mikovic:2011zx} the limit of large spin is often regarded as a semiclassical limit of loop quantum gravity. The third spin 1 state (\ref{eq:triplet0}) completes the discussion of the spin projection operator (\ref{eq:def_JE}) on a system of two fermions. One can show in an analogous manner that
\begin{align}
    \includegraphics[valign=c, scale=0.2]{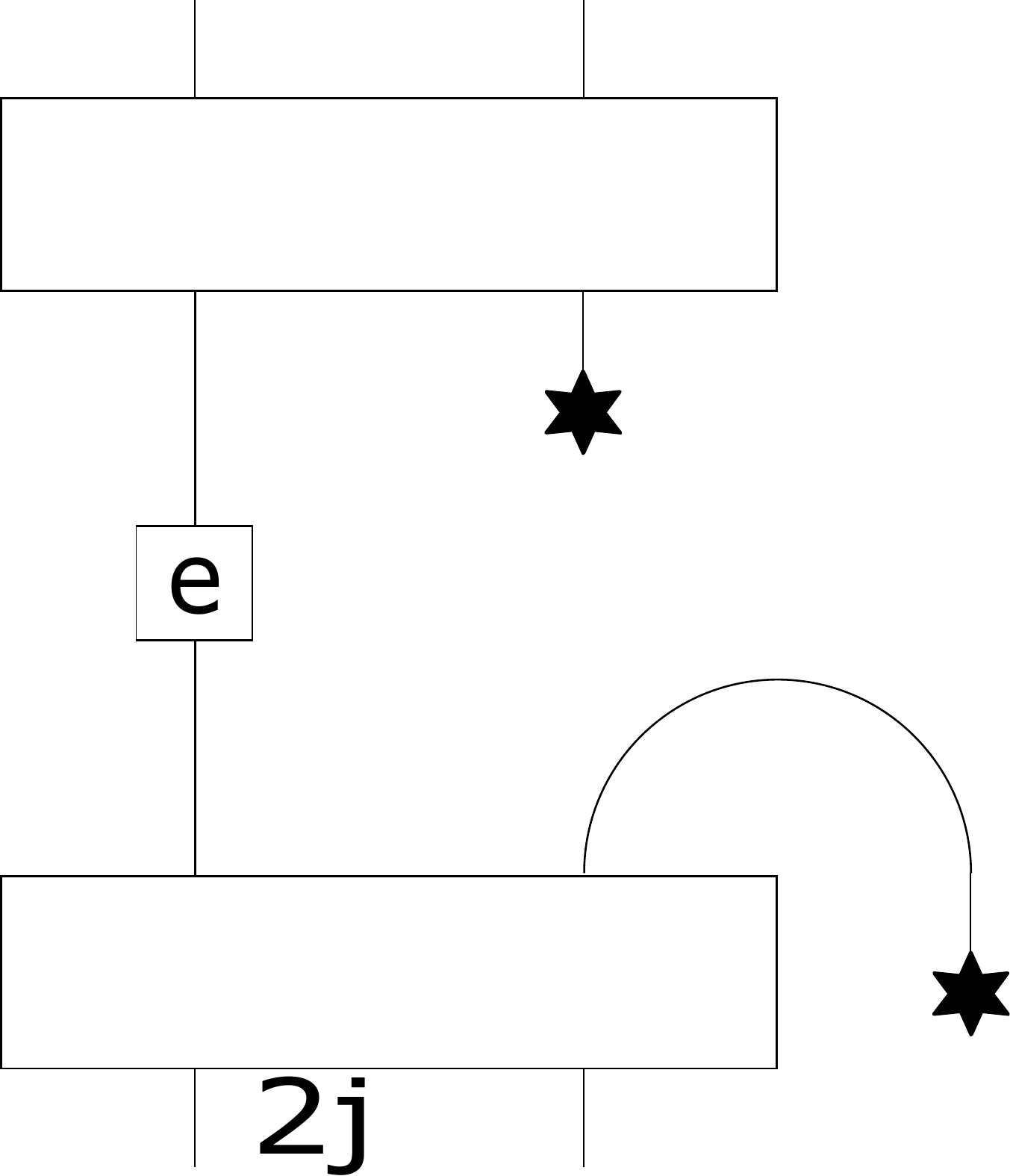} \xrightarrow{\SprojE} 0. 
    \label{eq:triplet_projectupdown}
\end{align}
holds. With (\ref{eq:singlet_project}) and (\ref{eq:triplet_projectdown}) we now understand both terms in (\ref{eq:triplet0}) in the context of the spin projection. Hence, we find that the third triplet state besides (\ref{eq:triplet_projectdown}) and (\ref{eq:triplet_projectup}) is represented by (\ref{eq:triplet0}), which vanishes under the action of $\SprojE$. Moreover, we have found one singlet state (\ref{eq:singlet_project}) in complete analogy to flat quantum theory. 

Note that it is possible to generalise the above statement to the coupling of $n$ fermions in principle. However, we cannot change the purely gravitational spin network state to be different from a sole holonomy $h_e$, otherwise it will not be an eigenstate of (\ref{eq:def_JE}). One counter example is
\begin{exa}
Consider the two fermion state with an intertwiner inbetween
\begin{align}
    \includegraphics[valign=c, scale=0.2]{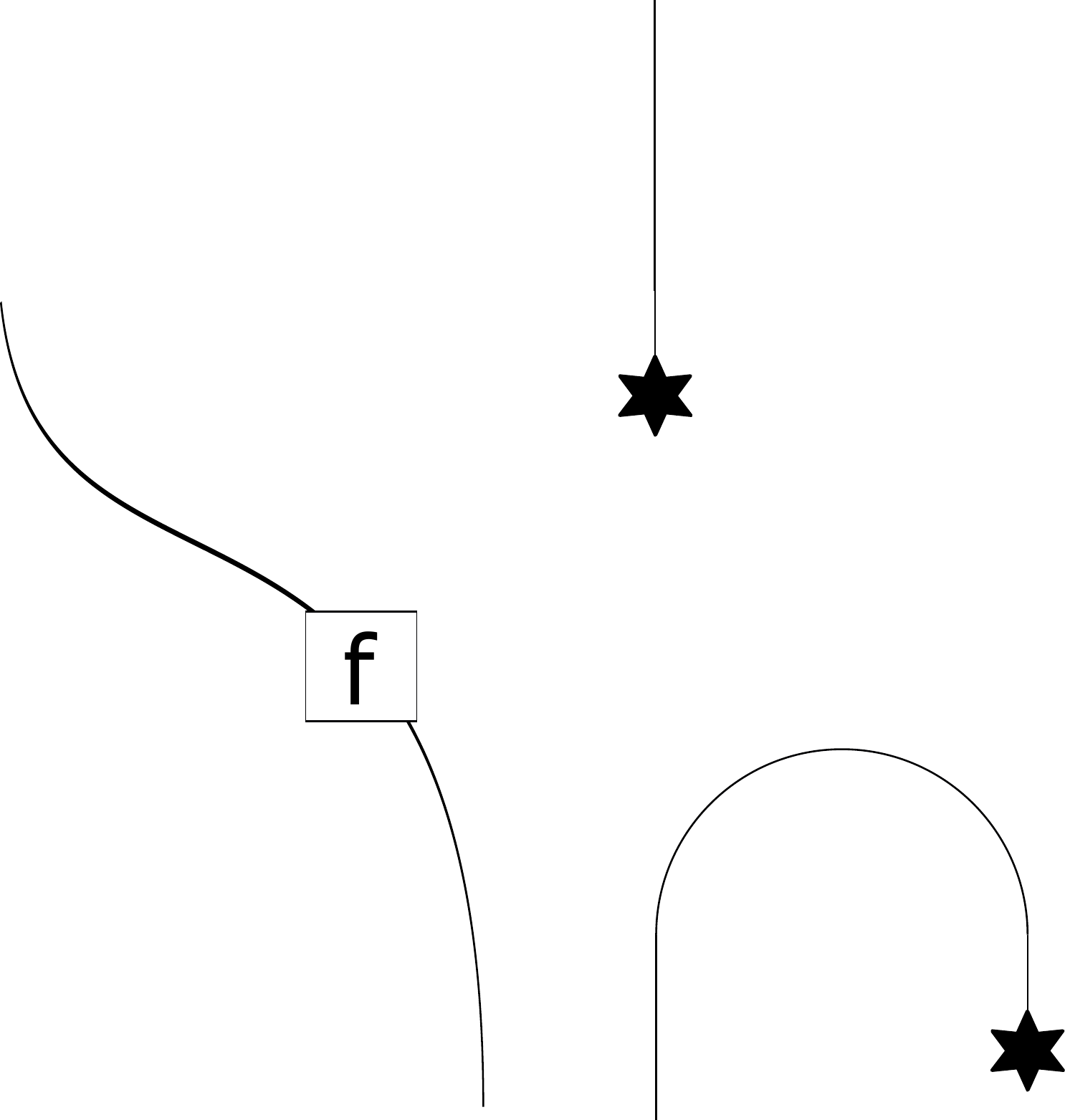} = \quad \includegraphics[valign=c, scale=0.2]{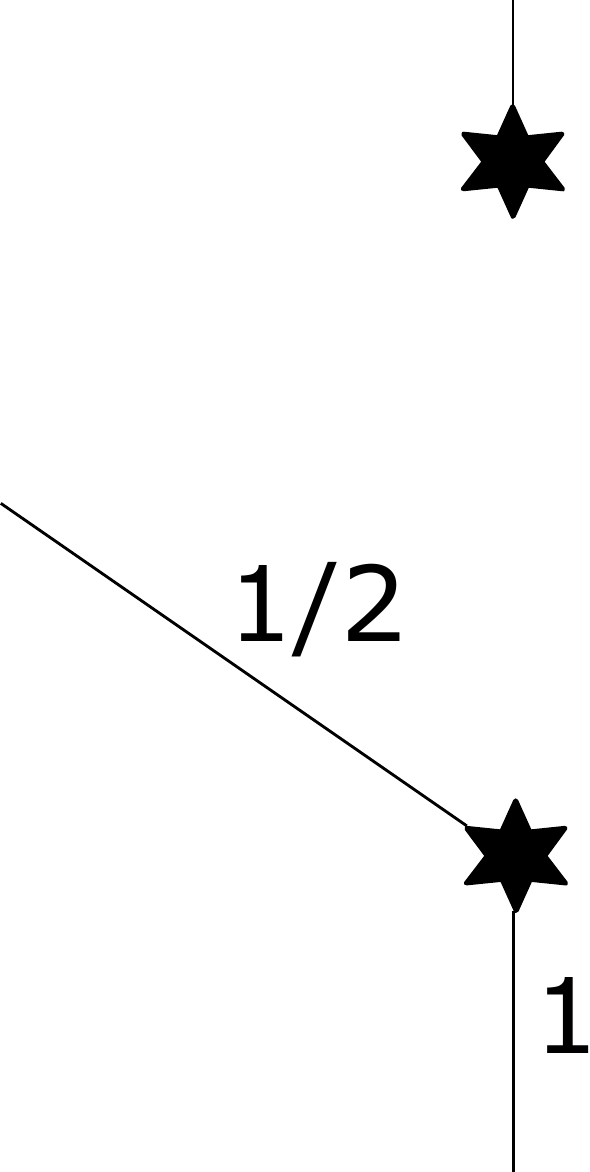}_,
\end{align}
where the right hand side represents the spin network of the state which is written in the binor formalism on the left. We only inserted one non-trivial intertwiner. When applying the same techniques as above, we end up with a different state
\begin{align}
    \includegraphics[valign=c, scale=0.2]{JE_counter.pdf} \xrightarrow{\SprojE} \frac{1}{2} \quad \includegraphics[valign=c, scale=0.2]{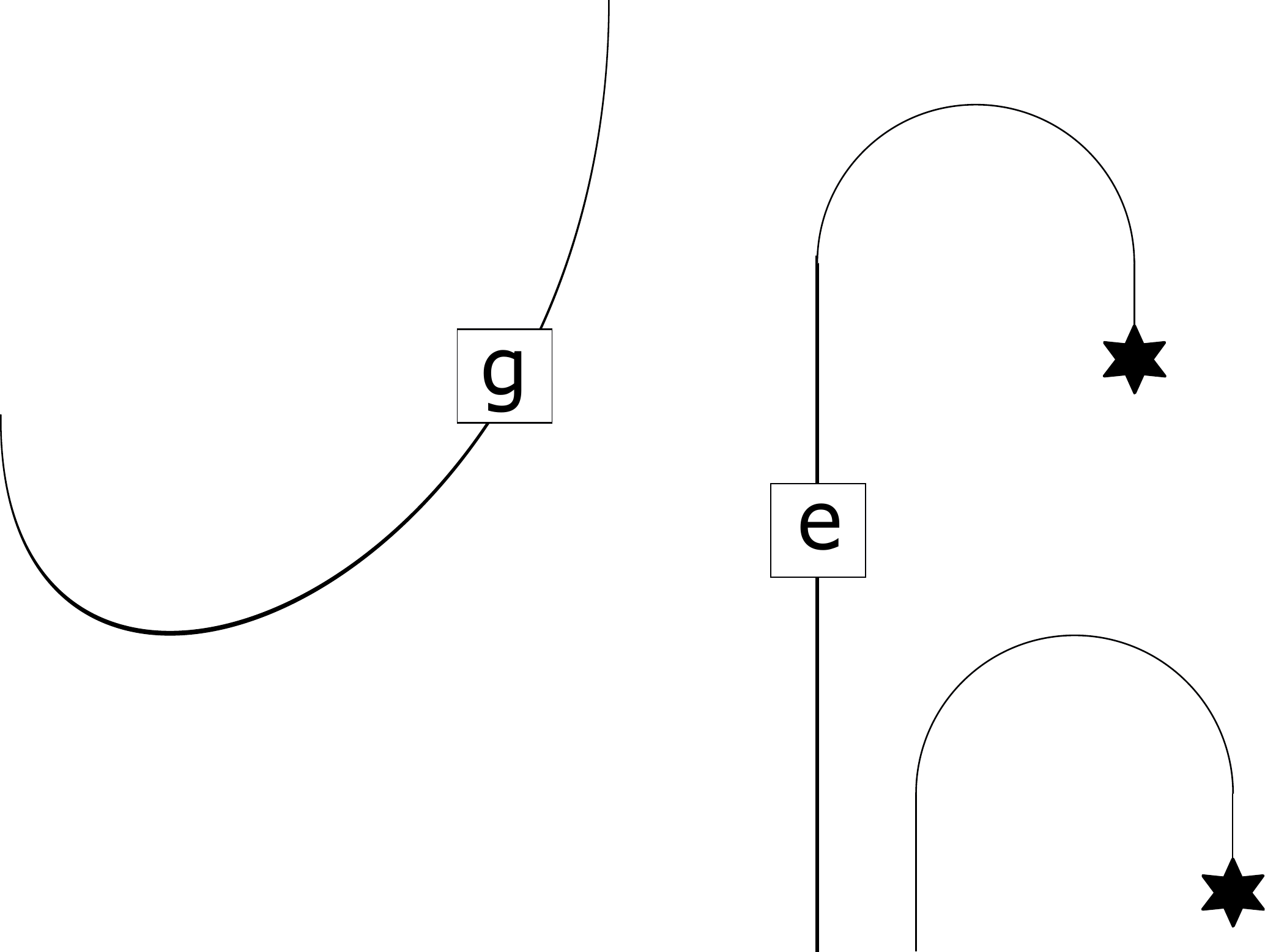}_,
    \label{eq:JE_count}
\end{align}
where we constructed the counter example such that the intersection point of $S$ with the spin network graph lies on the edge $f$ and $e = f \circ g^{-1}$.
\end{exa}
We showed that the projection (\ref{eq:def_JE}) can in principle mimic the observable $\SprojE$ in very special cases. In general, however, there is a non-trivial interaction between $\hat{E}_i$ (and, in consequence $\SprojE$) and gravity. As a consequence, $\SE{E}{2}$ and $\SprojE$ do not have the same eigenbasis.

\subsection{Projection onto the Spin of a Specific Fermion}
In this section, we want to go away from a manifestly gravitational reference frame and draw attention to another observable measuring the projection of $\SE{E}{i}$ onto the spin of one specific fermion of the fermion system.
\begin{defi}
In the case of $n$ fermions connected by a set of edges $E'$ with a joint endpoint $t(E')$ and a $n+1^\text{st}$ fermion forming $E = E' \cup \{e_{n+1}\}$, we define the observable
\begin{align}
    \SprojF &= 2 \SE{E'}{i} \, S_{\{ e_{n+1} \} \, i}.
    \label{eq:def_Jn+1}
\end{align}
\end{defi}
Note that we exclude the self-projection $S_{\{ e_{n+1} \}}^{2}$ in (\ref{eq:def_Jn+1}) by projection onto the spin defined by $E'$ instead of $E$. Again, the addends of the observable (\ref{eq:def_Jn+1}) have the same action as (\ref{PC_2J1J2}) with being inserted at the points where the respective fermions are located. Let us have a look at the action of (\ref{eq:def_Jn+1}) on the eigenstates of $\SE{E}{2}$ discussed in (\ref{Spinj_eigenstate}). 
\begin{prop}
The eigenstates (\ref{Spinj_eigenstate}) of $\SE{E}{2}$ are also eigenstates of $\SprojF$ to the eigenvalue $j$, the spin of the state corresponding to the eigenvalue $j(j+1)$ of $\SE{E}{2}$, i.e.
\begin{align}
    \psi \cdot \includegraphics[valign=c, scale=0.12]{Spinj_down_general.pdf}_{n+1} &\xrightarrow{\SprojF} j \cdot \psi \cdot \includegraphics[valign=c, scale=0.12]{Spinj_down_general.pdf}_{n+1} \label{eq:Jn+1_projectup}
\end{align}
If the $n+1^\text{st}$ fermion is of the type which reduces the total spin of the $n$ fermion system, the eigenvalue reads $-(j+1)$, i.e.
\begin{align}
    \psi \cdot \includegraphics[valign=c, scale=0.12]{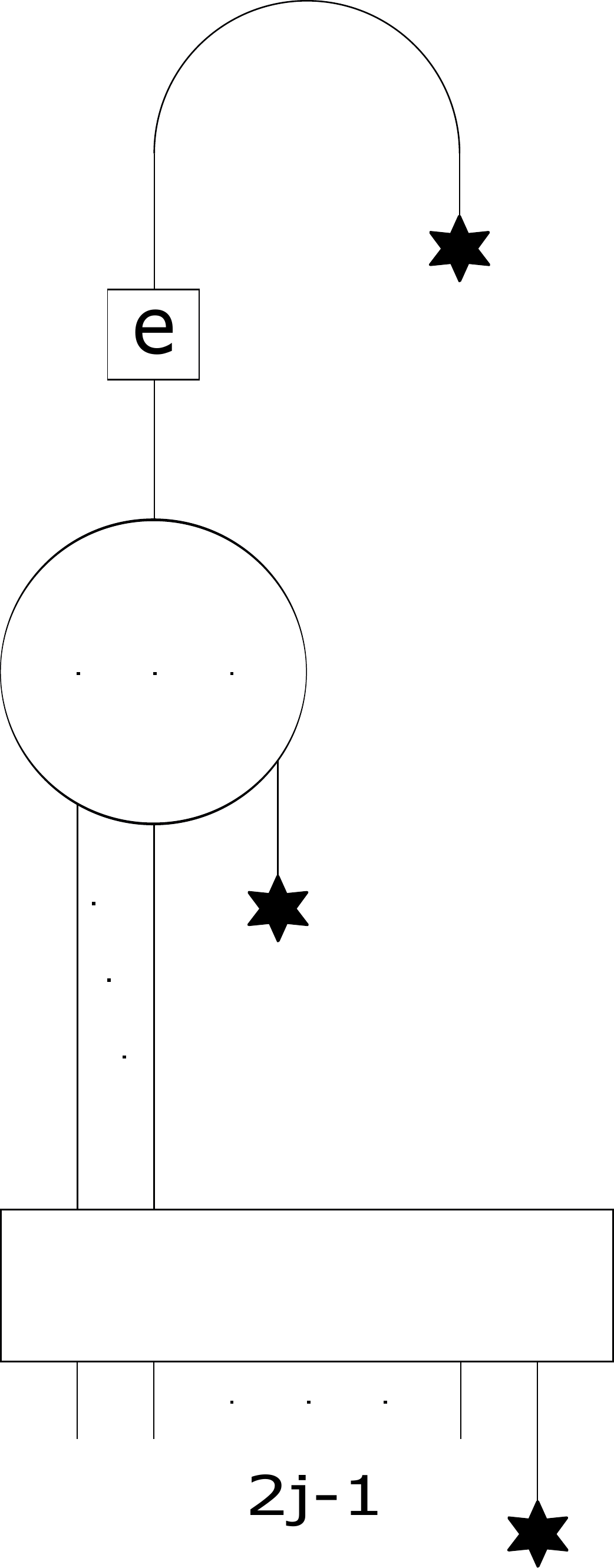}_{n+1} &\xrightarrow{\SprojF} - \left( j+ 1 \right) \psi \cdot \includegraphics[valign=c, scale=0.12]{Spinj_downup_general.pdf}_{n+1}
    \label{eq:Jn+1_projectdown}
\end{align}
Here, $\psi$ stands for an arbitrary spin network state admitting gravitational degrees of freedom only.
\end{prop}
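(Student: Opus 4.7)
The plan is to reduce the computation to Clebsch--Gordan arithmetic familiar from flat quantum mechanics, by decomposing $\SE{E}{} = \SE{E'}{} + S_{\{e_{n+1}\}}$ and squaring. Since $\SE{E'}{}$ and $S_{\{e_{n+1}\}}$ act on disjoint sets of fermion positions, their components commute by the argument behind \eqref{eq:comm}, so
\begin{equation}
    \SE{E}{2} = \SE{E'}{2} + S_{\{e_{n+1}\}}^2 + \SprojF.
\end{equation}
Rearranging gives $\SprojF = \SE{E}{2} - \SE{E'}{2} - S_{\{e_{n+1}\}}^2$, which turns the problem into identifying simultaneous eigenvalues of the three positive operators on the right on the states in \eqref{eq:Jn+1_projectup} and \eqref{eq:Jn+1_projectdown}.

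Two of these eigenvalues are immediate. The $n$-fermion factor is an eigenstate of $\SE{E'}{2}$ with eigenvalue $j(j+1)$ by Proposition \ref{prop:eigen}, since it is of the form \eqref{Spinj_eigenstate} for the edge set $E'$ with spin $j$. The single-fermion factor at $s(e_{n+1})$ carries the defining spin-$\tfrac{1}{2}$ representation and contributes $S_{\{e_{n+1}\}}^2 = 3/4$. The purely gravitational factor $\psi$ commutes with all three operators by \eqref{eq:comm} and plays no role in the spectrum.

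The third eigenvalue encodes which irrep of $j \otimes \tfrac{1}{2} = (j+\tfrac{1}{2}) \oplus (j-\tfrac{1}{2})$ the coupled state falls into. In the aligned case \eqref{eq:Jn+1_projectup}, the enlarged configuration is precisely of the form \eqref{Spinj_eigenstate} for $E$ with total spin $j+\tfrac{1}{2}$; in the anti-aligned case \eqref{eq:Jn+1_projectdown}, it is of the form \eqref{Spinj_eigenstate} with total spin $j-\tfrac{1}{2}$. Proposition \ref{prop:eigen} applied to $E$ then gives eigenvalues $(j+\tfrac{1}{2})(j+\tfrac{3}{2})$ and $(j-\tfrac{1}{2})(j+\tfrac{1}{2})$ respectively. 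Substituting into the identity for $\SprojF$ yields
\begin{equation}
    (j+\tfrac{1}{2})(j+\tfrac{3}{2}) - j(j+1) - \tfrac{3}{4} = j
\end{equation}
in the aligned case and
\begin{equation}
    (j-\tfrac{1}{2})(j+\tfrac{1}{2}) - j(j+1) - \tfrac{3}{4} = -(j+1)
\end{equation}
in the anti-aligned case, matching the claim.

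The main obstacle is the third step: justifying rigorously that the diagrams in \eqref{eq:Jn+1_projectup} and \eqref{eq:Jn+1_projectdown} lie entirely in the $j\pm\tfrac{1}{2}$ irrep of $\SE{E}{}$ rather than in some linear combination of the two. This requires checking that the graphical operations of spin-adding and spin-subtracting built into \eqref{Spinj_eigenstate} really realize highest- and lowest-weight Clebsch--Gordan couplings, which should follow from the inductive construction alluded to just before Proposition \ref{prop:eigen} together with the mixed-term identity \eqref{PC_2J1J2}. A more pedestrian alternative is to apply \eqref{PC_2J1J2} to each of the $n$ mixed terms that couple $e_{n+1}$ to an edge $e_k \in E'$ and read off the eigenvalue by direct binor manipulation, which is combinatorially involved but otherwise mechanical.
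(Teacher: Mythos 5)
Your argument is correct, but it takes a genuinely different route from the paper, which in fact writes out no proof for this proposition at all: the surrounding text indicates that the intended derivation is a direct binor computation, inserting the mixed-term identity \eqref{PC_2J1J2} for each pair $(e_k,e_{n+1})$ and simplifying --- i.e.\ the ``pedestrian alternative'' you mention at the end. Your operator identity $\SprojF=\SE{E}{2}-\SE{E'}{2}-S_{\{e_{n+1}\}}^2$ is valid because spins attached to edges with distinct sources commute ($\comm{\Sx{x}{i}}{\Sx{y}{j}}=0$ for $x\neq y$, and the holonomies are mutually commuting multiplication operators) and because the adjoint holonomy is orthogonal, so it drops out of the square and $S_{\{e_{n+1}\}}^2=\Sx{s(e_{n+1})}{2}=3/4$; this reduces the proposition to eigenvalue bookkeeping and explains the otherwise odd-looking $-(j+1)$ in one line, whereas the binor route buys an explicit diagrammatic check but scales poorly with $n$. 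The ``main obstacle'' you flag at the end is already discharged by results the paper establishes before the proposition: the states in \eqref{eq:Jn+1_projectup} and \eqref{eq:Jn+1_projectdown} are literally of the form \eqref{Spinj_eigenstate} for the enlarged edge set $E$ with outflowing spin $j\pm\tfrac12$, and \eqref{Spinj_eigenstate} together with Proposition \ref{prop:eigen} and Corollary \ref{coro:eigen} (to absorb the purely gravitational factor $\psi$) already give the $\SE{E}{2}$-eigenvalue $(j\pm\tfrac12)(j\pm\tfrac12+1)$ on them --- no separate highest-/lowest-weight Clebsch--Gordan analysis is required. With that observation your proof is complete as it stands.
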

The $+1$ in the negative eigenvalue seems odd at first, but is consistent with the fact that the mixed term of the total spin operator for the singlet state (case $j = \frac{1}{2}$) needs a contribution of $- \frac{3}{2}$ in order to yield the total spin 0. As a result, we found a more general subclass of eigenstates of a projection operator compared to \autoref{sec:surfnorm}. 

While the squared total spin $\SE{E}{2}$ is well understood, we are having trouble finding a well behaving counterpart for the spin projection $\SE{z}{}$. Still, first attempts have brought reminiscent structures in special cases.

Anyway, we draw an important conclusion from the above analysis, namely that fermionic systems influence the spin quantum numbers of the neighbouring holonomies in a way that should be principally measurable. In the following section, we want to sketch a measurement which makes this statement more specific.

\section{The Influence of Fermion Spin on the Geometry of Spacetime}
So far, we learned that fermionic spin degrees of freedom influence the spin quantum numbers of the neighbouring holonomies, which are in turn a measure of surface area \cite{Rovelli:1994ge}. This interaction between geometry and matter should in principle be observable by measuring the area of two surfaces in the vicinity of the fermion. As a quantum of area is of the order of $l_p^2$ and hence small compared to measurable scales of area, we want to sketch a macroscopic version of this experiment. Therefore, consider a cube $\mathcal{B} \subset \Sigma$ with a number of $n$ fermions each being attached to a single vertex of the spin network. We will argue later, however, that the details of the arrangement of the fermions are not so important for the effect we are discussing.

If there were no fermions, the Gauss law would constrain the spins flowing out of $\mathcal{B}$ to be able to couple to zero. With fermions being placed inside $\mathcal{B}$, the outflowing spins have to be able to couple to the total spin of the $n$ fermion system, which we can now measure with $\SE{E}{2}$. The simplest example of this scenario is described by two outflowing spins shown in \autoref{fig:effect_simple}, which depicts the lateral cross section of the cube. 
\begin{figure}[t]
    \centering
    \begin{align}
        j \quad &\underset{B}{\includegraphics[valign=c, scale=0.06]{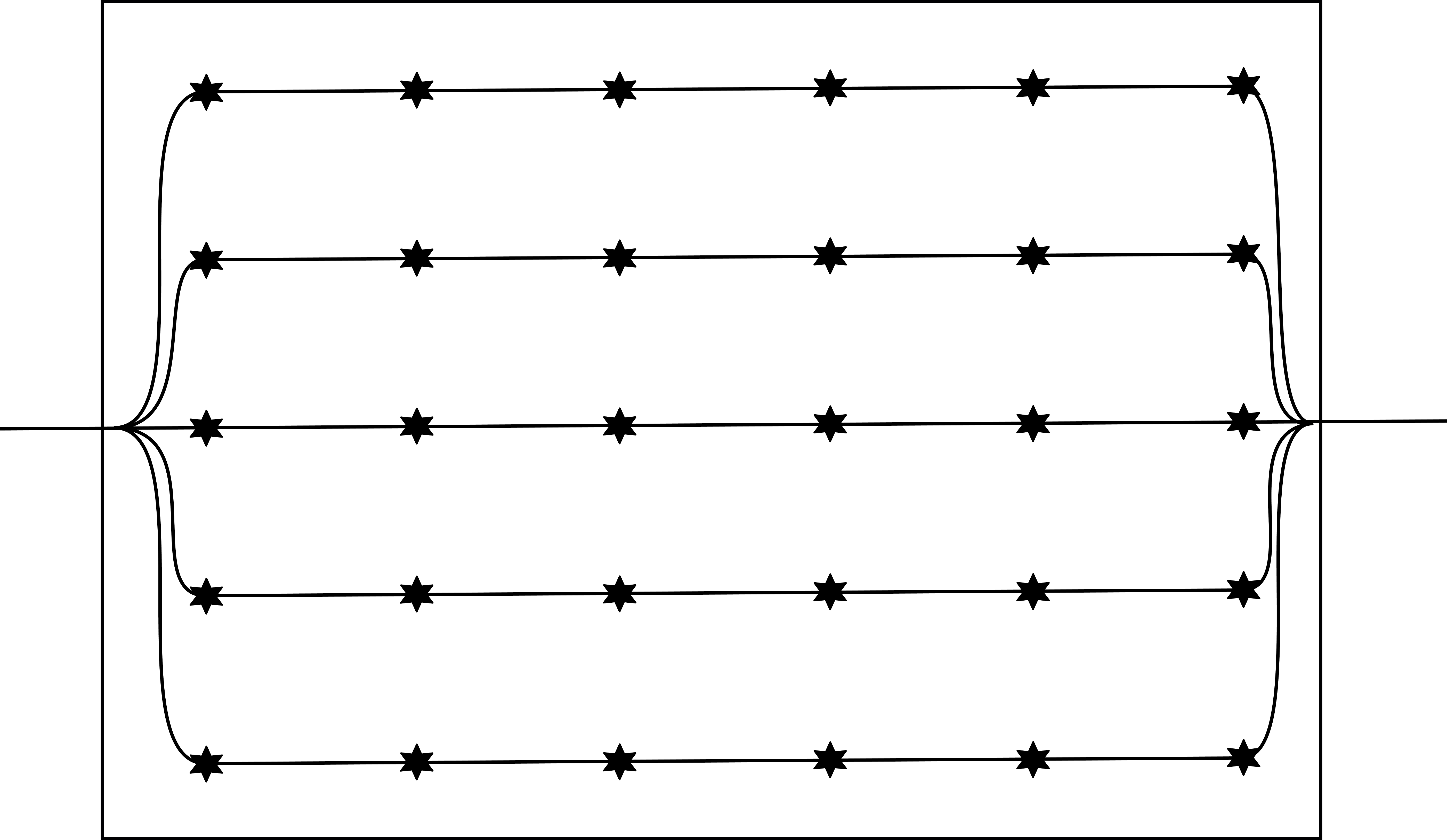}} \quad j + k \nonumber \\[3 ex]
        &\text{with } k \in \begin{cases}
            \big\{0, ..., \frac{n}{2}\big\} &\text{for } n \text{ even} \\
            \big\{ \frac{1}{2}, ..., \frac{n}{2} \big\} & \text{for } n \text{ odd}
        \end{cases}
        \label{eq:effect_simple}
\end{align}
    \caption{Sketch of the lateral cross section of a three-dimensional cube $\mathcal{B}$ containing $n$ fermions. Two egdes are intersecting the surface area of $\mathcal{B}$. Dependent on the coupling of the fermion spins, the spin quantum numbers of the holonomies intersecting the surface of $\mathcal{B}$ may differ by $k$, which is bounded by $0$ and $\frac{n}{2}$. This is reminiscent of the total spin of the Ising model in flat spacetime.}
    \label{fig:effect_simple}
\end{figure}

In general, the quantum state corresponding to \autoref{fig:effect_simple} can be described by a linear combination of binor calculus states like (\ref{Spinj_eigenstate}). The value $k$ can be tuned by aligning neighbouring spins anti-paralelly for $k$ minimal or paralelly for $k$ maximal, respectively. This scenario is a simple approach to mimic antiferromagnetism (or ferromagnetism respectively) in loop quantum gravity\footnote{In principle, $k$ can also take negative values. However, as the problem is symmetric, negative $k$ values can be equivalently described by flipping the outflowing spins.}. Using the well understood action of the area operator on holonomies \cite{Rovelli:1994ge} 
\begin{align}
    \hat{A}(S) \, \jrep{j}{h_e} &= \sqrt{\hat{E}_i \hat{E}^i} \, \jrep{j}{h_e} = \nonumber \\
    &= \left( 8 \pi l_p^2 \gamma \right) \sqrt{j(j+1)} \, \jrep{j}{h_e},
\end{align}
we notice an increase of area between the areas of the two surfaces on the left and on the right of \autoref{fig:effect_simple}. The difference reads
\begin{align}
    \Delta A &= \left( 8 \pi l_p^2 \gamma \right) \left( \sqrt{(j+k)(j+k+1)} - \sqrt{j(j+1)} \right) \nonumber \\
    &\approx \left( 8 \pi l_p^2 \gamma \right) k \quad \text{for  } j \gg 1.
    \label{eq:area_diff}
\end{align}
Hence, the maximal difference is proportional to the number of fermions $n$. Considering a cubic centimetre with an electron density of around $10^{23} \frac{1}{\cm^3}$ (average solid state), a rough dimensional analysis yields an approximate maximal area difference of
\begin{align}
    \Delta A \approx \gamma \cdot 10^{-42} \cm^2,
    \label{eq:area_difference}
\end{align}
being significantly larger than the Planck length squared $l_p^2$, but still small compared to area scales of well known physics. If we assume a homogeneous growth of area, we can trace it back to the error of length scale $\epsilon$
\begin{align}
    A_0 + \Delta A &= (l + \epsilon)^2 = l^2 + 2l\epsilon + \mathcal{O}(\epsilon^2) \\
    \implies \epsilon &\approx \frac{\Delta A}{2l}
\end{align}
where $l$ is the length of the square at total spin 0 and $A_0 = l^2$. Only for $l = 10^{-9} \cm$, $\epsilon$ enters the regime of Planck length and would be even smaller than $l_p$ for larger $l$. Although the large number of fermions first appeared to make the effect large, we end up with an effect in length of the order of the Planck length. This comes from the fact that it is apparently harder to measure a quantum of area $\sim l_p^2$ than a quantum of length $\sim l_p$.

The area change (\ref{eq:area_difference}) can be additionally increased by increasing the electron density or starting with a longer bar instead of a cube in the first place. Also a more efficient method to measure area (without tracing it back to length) might improve the problem of spatial resolution discussed above. In this way it might in principle be possible to experimentally measure an upper bound to the Barbero-Immirzi parameter $\gamma$. 

Indeed, this is consistent with the fact that the Barbero-Immirzi parameter in loop quantum gravity is no more arbitrary in the presence of fermions \cite{Perez:2005pm}. It might still be difficult to make a statement about the order of magnitude of $\gamma$. While the Immirzi parameter is estimated to be of order of magnitude 1 in order to reproduce the Bekenstein-Hawking formula in the theory of black hole entropy \cite{Domagala:2004jt, Meissner:2004ju}, it is treated as a large quantity $\gamma \gg 1$ in order to justify the perturbative calculation of time evolution \cite{Assanioussi:2017tql}.

\begin{figure}[t]
    \centering
    \includegraphics[scale=0.07]{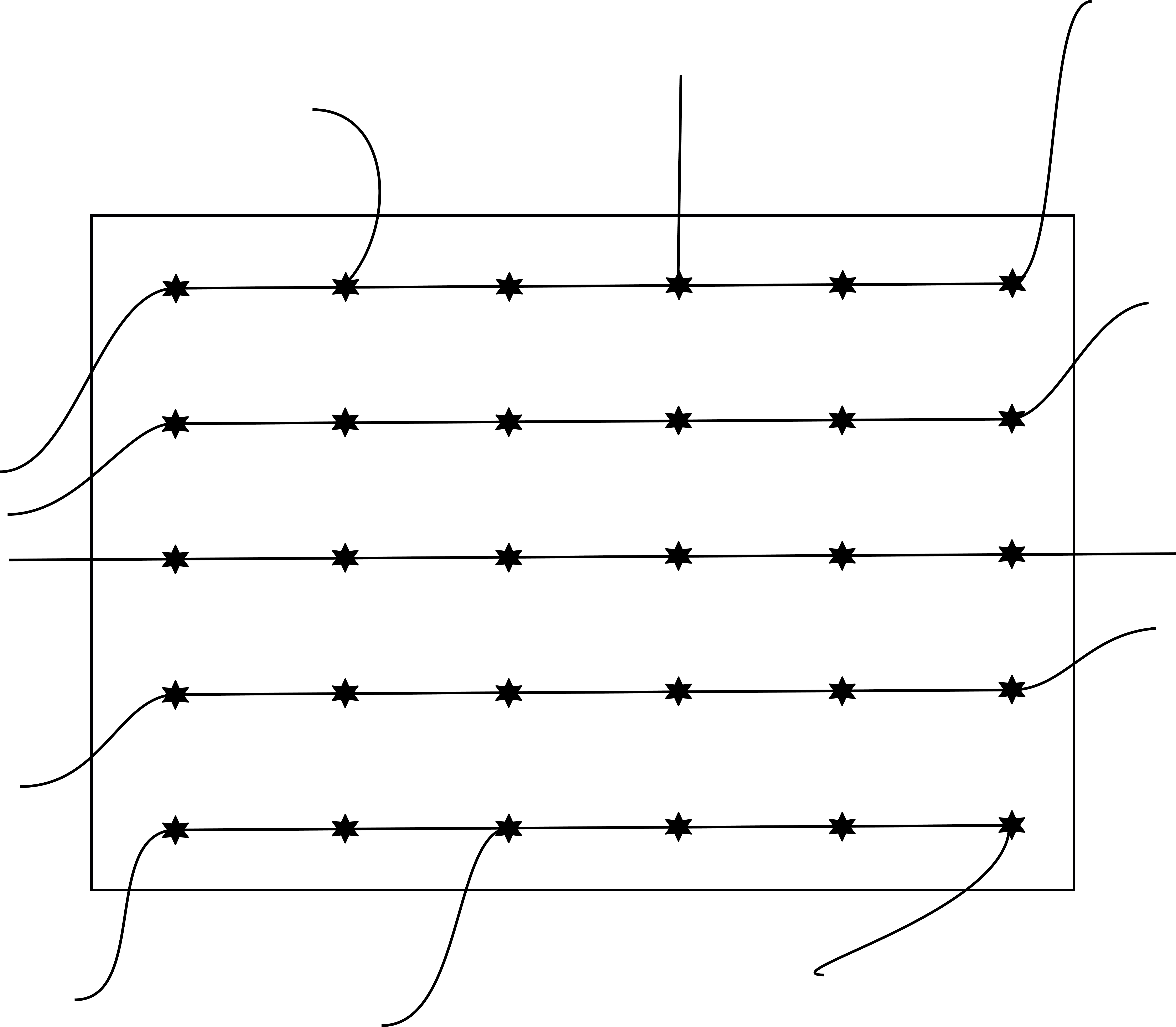}
    \caption{Sketch of the lateral cross section of a compact subset $\mathcal{B} \subset \Sigma$ of the spatial hypersurface, which is punctured by many holonomies. The total spin flowing out of (or into) $\mathcal{B}$ is a measure of the surface area increase compared to the surface area of $\mathcal{B}$ without matter.}
    \label{fig:Effect_general}
\end{figure}

Note that the above analysis only represents a model which would be more accurate if we would allow more than only two points of outflow. This includes states of the form (\ref{Spinj+k_ugly}), which are also eigenstates of the total spin squared. Here, however, the area increase is not directed like in (\ref{eq:effect_simple}) but might flow out of the cube in every direction. Still the minimal and maximal increase are again limited by $0$ and $\frac{n}{2}$. 

More generally, in order to predict the effect it is sufficient to require $\mathcal{B}$ to be a connected and compact subset of $\Sigma$ not necessarily diffeomorphic to a cube. The more general statement when embedding a spin network state with $n$ fermions into $\mathcal{B}$ is that the surface area of $\mathcal{B}$ will sense the same area increase (\ref{eq:area_diff}) when coupling the fermion spins completely parallel compared to the case of total spin 0. The most general setup is sketched in \autoref{fig:Effect_general}.

\section{Summary and Outlook}
We have introduced a total spin operator $S_E$ for a set of points connected by a set of paths $E$ to a reference point $t(E)$. The parallel transport of each fermionic spin operator to $t(E)$ ensures that we can add all the different single particle spin operators to a gauge covariant total spin operator. 

Building on $S_E$ we further introduced a number of observables describing the squared total spin and various projections of $S_E$. We showed that, in general, spin network states are not eigenstates of these operators. Instead, it turned out to be convenient to work with the binor formalism. Within this framework, we were able to describe the eigenstates of the squared total spin operator in some detail and hence define a notion of fermionic spin coupling in loop quantum gravity. $E_S$ is algebraically a spin, hence the spectrum of this operator is the standard one. 

Although we were not able to define a spin projection that commutes with the squared total spin, we introduced two possible observables which mimic the action of $\hat{S}_z$ from quantum mechanics, using on the one hand a fixed surface $\mathcal{S}$ and on the other hand the spin of one specific fermion as a reference. A more accurate imitation of a Stern-Gerlach type observable would be the coupling of the spin operator with an electromagnetic field. The formulation of loop quantum gravity with charged fermions is in principle available \cite{Thiemann:1997rq}, a detailed treatment is still work in progress \cite{mansuroglu2020kinematics}, however.

At last, we used the above results to sketch an experimental setup describing an observable effect as a result of the entanglement between fermion spin and geometry. In the example considered, a closed surface $\mathcal{B}$ containing $n$ fermions would experience a change in area linear in $n$ when the spins of the fermions inside $\mathcal{B}$ were aligned. We note again, that neither area nor spin are Dirac observables, and thus the picture could change in a more complete approach.  

Indeed, it would be very interesting to take diffeomorphism and Hamiltonian constraint into account to understand the dynamics of the kinematical states discussed in this paper. To do this, a reduced phase space quantisation could be carried out \cite{Giesel:2007wn,Giesel:2012rb}.

Also, to understand the connection between area and total spin better, the eigenstates of $\SE{E}{2}$ and the action of the area operator on these are to be understood more precisely. Since the two observables should be measured simultaneously, their commutator has to be investigated.

We finally note that we chose the Ashtekar-Barbero connection to parallely transport spins. Other choices of parallel transport are clearly possible in principle. How they could be realised within loop quantum gravity, and how this would change the picture could also be investigated in the future. 

\begin{acknowledgments}
We thank Kristina Giesel and Thomas Thiemann for interesting discussions during the completion of this work. Moreover, we thank Thomas Thiemann for very detailed and valuable feedback.   

R.M. thanks the Elite Graduate Programme of the Friedrich-Alexander-Universität Erlangen-Nürnberg for its support and for encouraging discussions among its members and faculty. H.S. would like to acknowledge the contribution of the COST Action CA18108.
\end{acknowledgments}

\appendix
\section{SPIN COUPLING AND THE BINOR FORMALISM}
\label{Penrose_Calculus_appendix}
It turns out that within the spectral analysis of the spin operator (\ref{def_Jptot2}), it is convenient to leave the spin network basis of loop quantum gravity and describe the states of $\mathcal{H}$ using the binor calculus 
\cite{Penrose71angularmomentum:, Penrose:1987uia, DePietri:1996pj, Kauffman02}. In order to fix the notation, we recall the main definitions of the binor-formalism:
\begin{align}
    &\includegraphics[valign=c, scale=0.5]{Edge.pdf}^A_B = \delta^A_B \hspace{5 ex} \leftidx{_A}{\includegraphics[valign=c, scale=0.25]{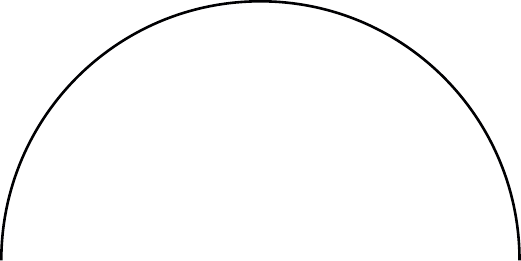}}{_B} = i \epsilon_{AB} \hspace{5 ex} \leftidx{^A}{\includegraphics[valign=c, scale=0.25]{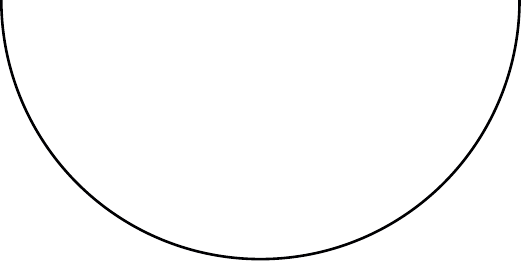}}{^B} = i \epsilon^{AB} \\[3 ex]
    &\leftidx{^A_C}{\includegraphics[valign=c, scale=0.5]{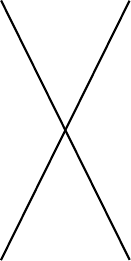}}{^B_D} = - \delta^A_D \delta^B_C \hspace{5 ex} \tensor{\includegraphics[valign=c, scale=0.5]{Edge_withSU2_rep.pdf}}{^A_B} = \tensor{\pi_{\frac{1}{2}}(h_e)}{^A_B} \hspace{5 ex} \includegraphics[valign=c, scale=0.7]{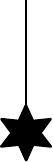}^A_i = \theta^A_i
\end{align}
For the sake of clarity, we deviate from the standard notation by denoting a spin $\frac{1}{2}$ representation of the holonomy $h_e$ as a straight line with label "e". Moreover, we draw a spinor as a star. Higher spin $j$ representations can be constructed by the symmetrization of $2j$ many spin $\frac{1}{2}$ holonomies. We denote the symmetrization by a box with as many ingoing edges as outgoing ones,
\begin{align}
    \includegraphics[valign=c, scale=0.2]{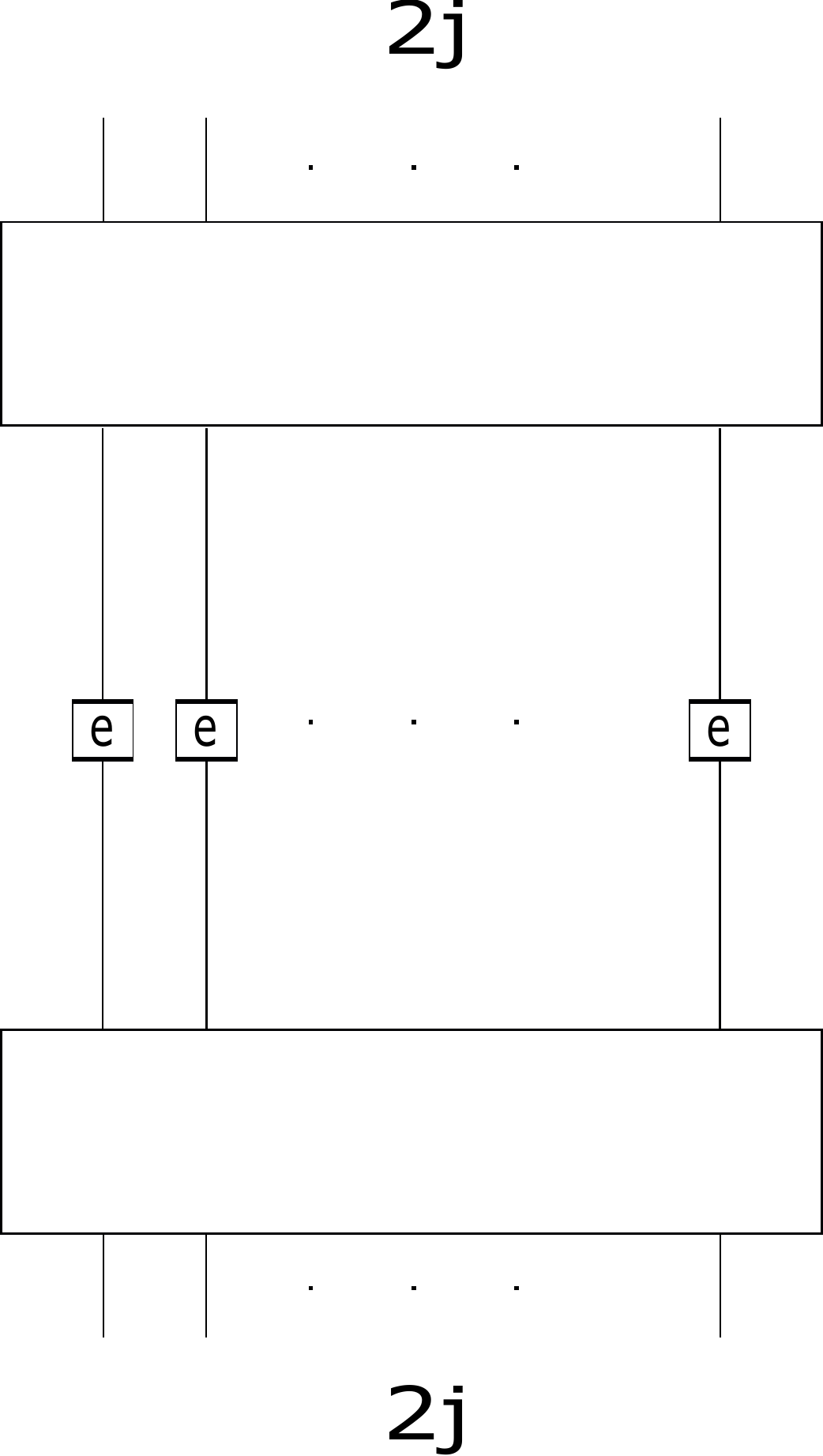} = \pi_{j}(h_e).
\end{align}
The last element from our toolbox which enables us to mimick spin network states in the binor formalism is the notation of the intertwining operators,
\begin{align}
    \includegraphics[valign=c, scale=0.15]{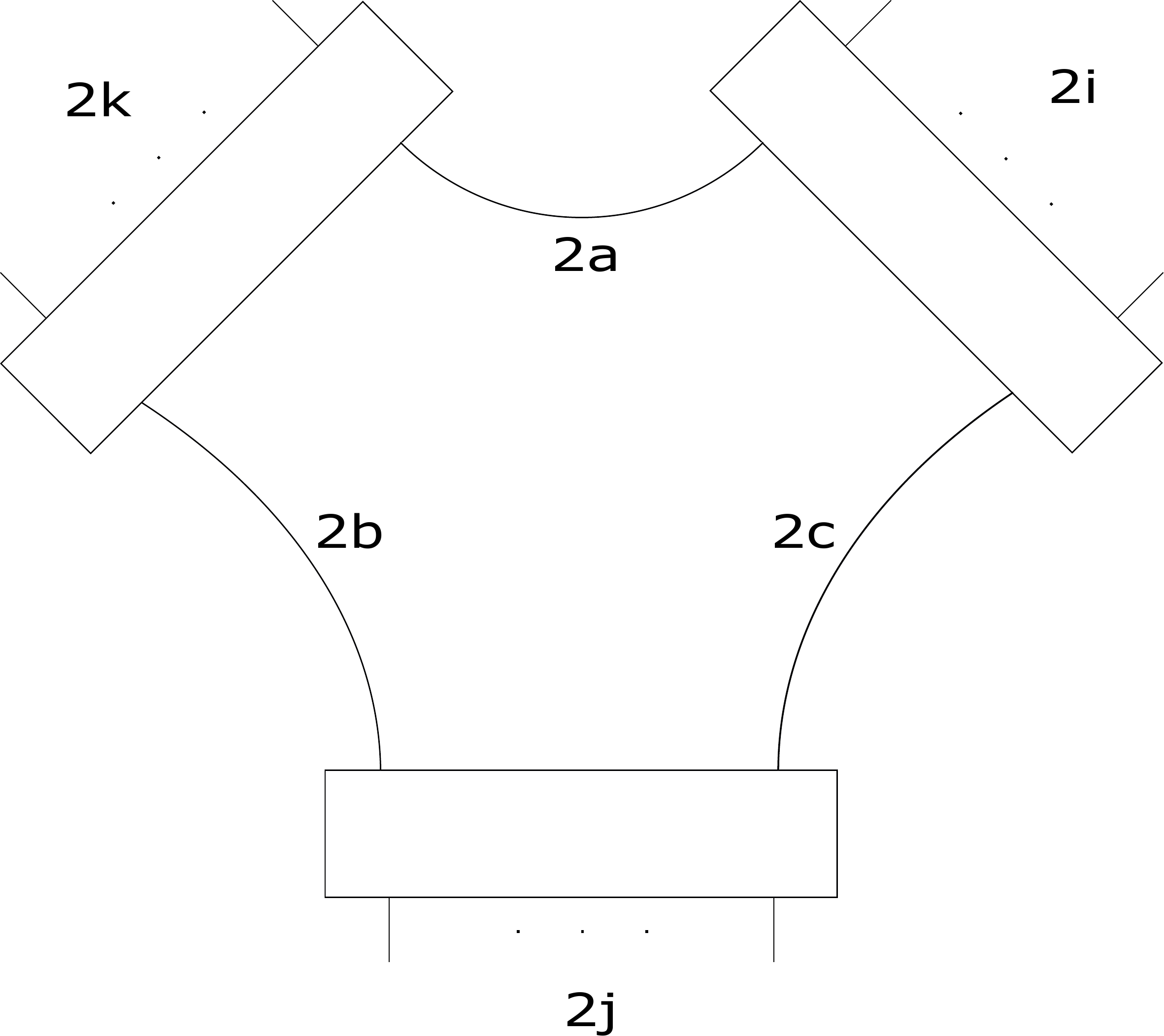} = \iota : i \otimes j \otimes k \rightarrow 0,
    \label{PC_intertwiner}
\end{align}
where $a, b, c \in \frac{\mathds{N}_0}{2}$ are determined by
\begin{align}
    a = \frac{1}{2} (i + k - j) \\
    b = \frac{1}{2} (j + k - i) \\
    c = \frac{1}{2} (i + j - k).
\end{align}
As higher valent intertwiners can always be expanded into a basis of 3-valent intertwiners, (\ref{PC_intertwiner}) completes the toolbox of the binor formalism. At last, we want to recall the binor identity,
\begin{align}
    \includegraphics[valign=c, scale=0.5]{Binor_1.pdf} \, + \, \includegraphics[valign=c, scale=0.5]{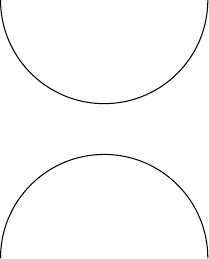} \, = - \, \includegraphics[valign=c, scale=0.5]{Binor_3.pdf} \iff \delta^A_B \delta^C_D - \epsilon^{AC} \epsilon_{BD} = \delta^A_D \delta^C_B,
    \label{eq:binor_identity}
\end{align}
which is used to dissolve crossings within the binor calculus. We can use the binor identity, to show the binor representation of (\ref{PC_2J1J2}). We start with the identity,
\begin{align}
    \frac{1}{2} \tensor{\sigma}{_i^A_B} \tensor{\sigma}{^i^C_D} = \leftidx{_B^A}{\includegraphics[valign=c, scale=0.1]{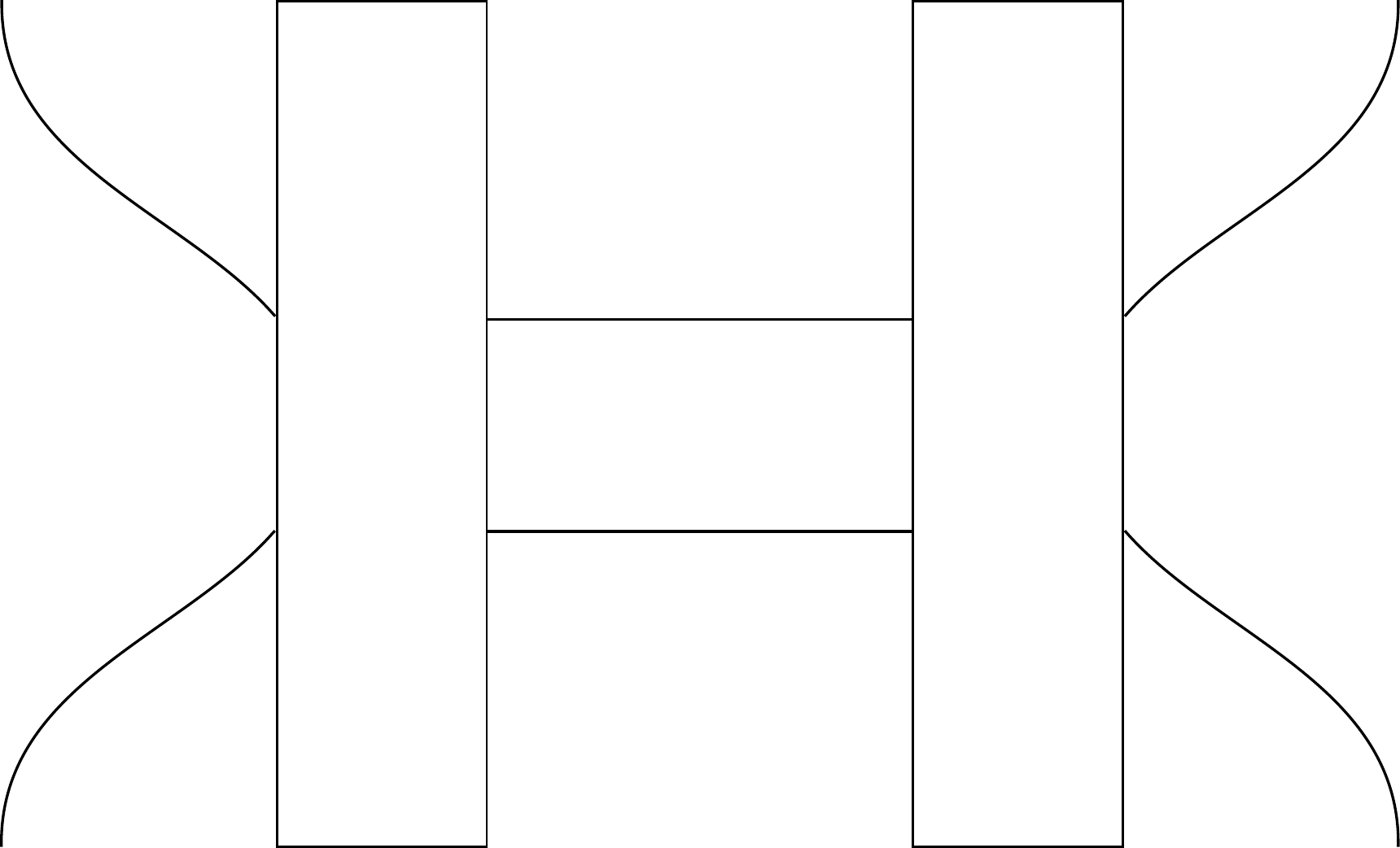}}{^C_D} = \frac{1}{2} \leftidx{^A_B}{\includegraphics[valign=c, scale=0.5]{Edge.pdf}}{} \quad  \leftidx{}{\includegraphics[valign=c, scale=0.5]{Edge.pdf}}{^C_D} + \leftidx{^A_B}{\includegraphics[valign=c, scale=0.5]{Binor_2.pdf}}{^C_D}
\end{align}
On the other hand, using the intertwining property of the Pauli matrices,
\begin{align}
    \tensor{\pi_1(h_e)}{^i_j} = \frac{1}{2} \tensor{\sigma}{^i^A_B} \tensor{\pi}{_{\frac{1}{2}}^B_C} \tensor{\sigma}{_j^C_D} \tensor{\left(\pi^{-1}_{\frac{1}{2}}\right)}{^D_A}
\end{align}
and substituting $\pi_1(h_e)$ in the definition (\ref{PC_2J1J2}), we finally deduce 
\begin{widetext}
\begin{align}
    &2 \tensor{\left(\hat{J}_{1j}\right)}{^A_B} \tensor{\pi_1\left(h_e \right)}{^j_k} \tensor{\left(\hat{J}_2^k \right)}{^C_D} = \frac{1}{4} \tensor{\sigma}{_j^A_B} \tensor{\sigma}{^j^E_F} \tensor{\pi}{_{\frac{1}{2}}^F_G} \tensor{\sigma}{_k^G_H} \tensor{\left(\pi^{-1}_{\frac{1}{2}}\right)}{^H_E} \tensor{\sigma}{^k^C_D} \nonumber \\[3 ex]
    &= - \quad \includegraphics[valign=c, scale=0.1]{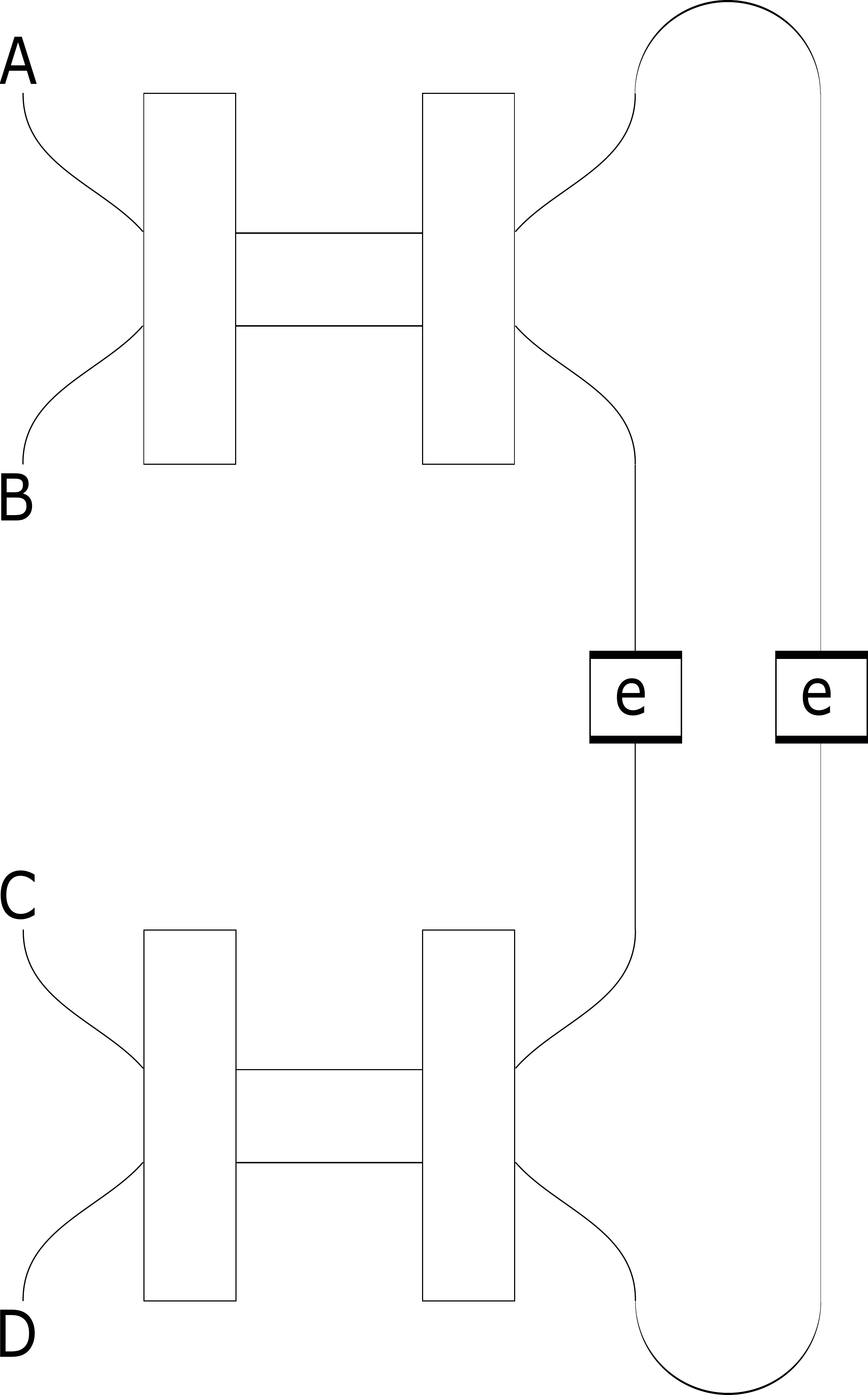} = \frac{1}{2} \, \leftidx{^A_B}{\includegraphics[valign=c, scale=0.5]{Binor_1.pdf}}{^C_D} + \leftidx{^A_B}{\includegraphics[valign=c, scale=0.5]{Binor_2_SU2_rep.pdf}}{^C_D} \nonumber \\[3 ex] 
    &= \leftidx{^A_B}{\includegraphics[valign=c, scale=0.15]{Pauli_squared_SU2_rep.pdf}}{^C_D} \label{PC_2J1J2_proof}
\end{align}
\end{widetext}

which shows (\ref{PC_2J1J2}). The operator (\ref{PC_2J1J2_proof}) acts on the fermions 1 and 2 by inserting a Pauli matrix at the point where the fermion is located. With this operator, we are able to calculate the action of $\left(S_E\right)^2$ on an arbitrary state in $\mathcal{H}$.

\section{PROOF OF PROPOSITION \ref{prop:ugly_eigenstates}}
\label{sec:proof_ugly_states}
We proceed with a proof of induction. 
\subsection*{Induction start}
We will consider example \ref{ex:Spin1_ugly} for the start. Here, we have $j = k = \frac{1}{2}$ and two fermions of different types. Apart from $\lambda_0$, which can always be set to 1 without loss of generality, there is only one non-trivial coefficient $\lambda_1 = \frac{1}{2}$, or equivalently $\lambda_1^{-1} = 2$, which is in agreement with (\ref{eq:coefficients}). 

\subsection*{Induction step}
We assume that the proposition holds for a fixed number of fermions $n$. If we now consider a state with $n+1$ fermions, there are four possible cases, we have to distinguish. On the one hand, we can lower or raise the total spin of the state and on the other hand we can either raise (or lower) $j$ or $k$. However, the problem is highly symmetric such that we can focus on the case where $j$ is raised by the $n+1^\text{st}$ fermion and refer to the other cases to be treated in analogue manner.

Moreover, we will reduce the problem further by dissolving all symmetrisations (except at the points of outflow) and operating on a product of fermions and holonomies only. Hence, we are left with the following building blocks
\begin{widetext}
\begin{align}
    \includegraphics[valign=c, scale=0.5]{Fermion.pdf} \, \includegraphics[valign=c, scale=0.5]{Fermion.pdf} \, ... \, \includegraphics[valign=c, scale=0.5]{Fermion.pdf}, \qquad \includegraphics[valign=c, scale=0.2]{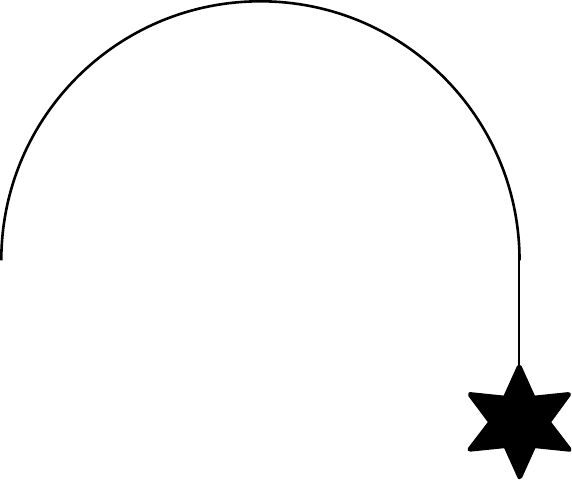} \, \includegraphics[valign=c, scale=0.2]{Fermion_Epsilon.pdf} \, ... \, \includegraphics[valign=c, scale=0.2]{Fermion_Epsilon.pdf}, \qquad \includegraphics[valign=c, scale=0.2]{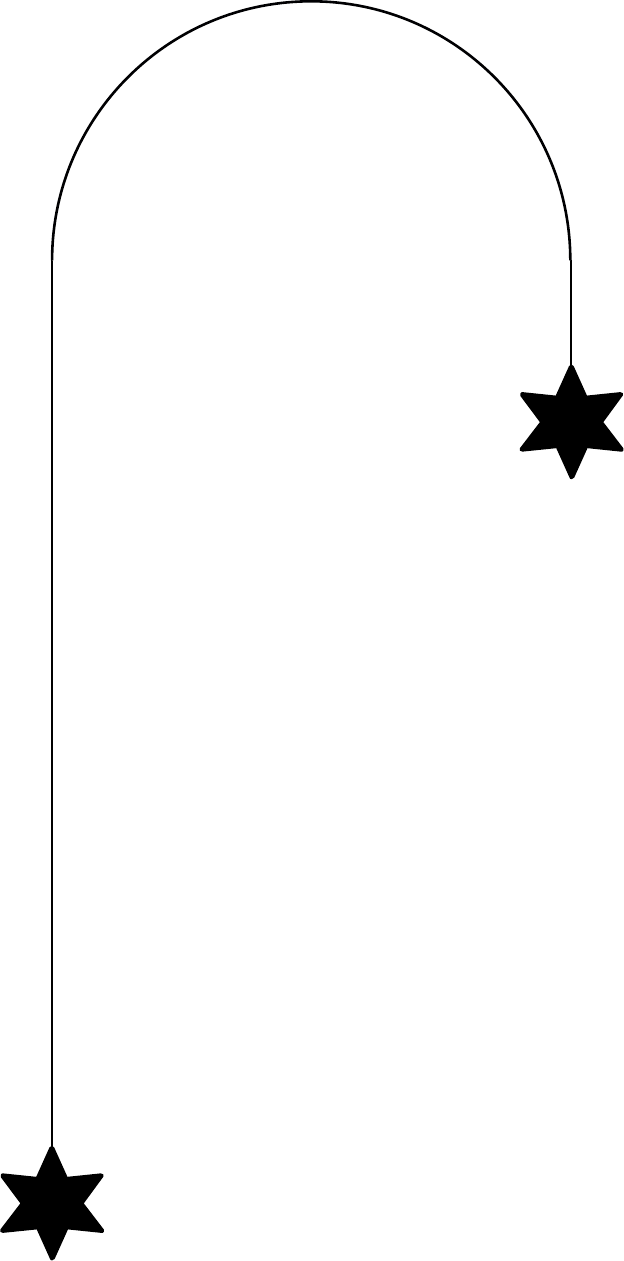} \, \includegraphics[valign=c, scale=0.2]{Spin0_wo_e.pdf} \, ... \, \includegraphics[valign=c, scale=0.2]{Spin0_wo_e.pdf},
\end{align}
\end{widetext}

out of which the remaining terms are constructed. We omitted the labels denoting the edges of the connecting holonomies for the sake of clarity. In our case, the $n+1^\text{st}$ fermion is of the form
\begin{align}
    \includegraphics[valign=c, scale=0.2]{Fermion_Epsilon.pdf}_{n+1.}
    \label{eq:increase_j}
\end{align}
If we denote the graph connecting fermion 1 to fermion $n$ to a joint point by $E'$ and $E = E' \cup \{e_{n+1}\}$, then we can split off the action of the squared total spin in the following way \\
\begin{align}
    \SE{E}{2} = &\SE{E'}{2} + \Sx{e_{n+1}(0)}{2} \nonumber \\
    &+ 2 \left( \sum_{l = 1}^n \Sx{e_l(0)}{i} \delta_{ij} \tensor{\jrep{1}{h_{e_{n+1}^{-1} \circ e_l}}}{^j_k} \right) \Sx{e_{n+1}(0)}{k}.
    \label{eq:tot_squared_spin_split}
\end{align}
If we act on the $n+1$ fermion state with the first operator, it will give the eigenvalue $j_m(j_m+1)$ by the induction hypothesis, where $j_m = j + k - m$ denotes the sum of outflowing spin, which is dependent on the number $m$ of reductions made. The second operator is also well known and gives the eigenvalue $\frac{1}{2} \left( \frac{1}{2} + 1 \right) = \frac{3}{4}$. We are left with the mixed terms, which connect the $n+1^\text{st}$ fermion with any of the other fermions each in a separate addend. The binor representation of this operator is shown in (\ref{PC_2J1J2}). One can see that we get two terms for each of its action if we dissolve the symmetrisation in (\ref{PC_2J1J2}) together with using the binor identity (\ref{eq:binor_identity}). The first term in (\ref{PC_2J1J2}) acts as an identity operator with a factor $\frac{1}{2}$. From this contribution, we get, additionally to $j_m(j_m+1)$ and $\frac{3}{4}$, also a factor $\frac{n}{2}$. We will now discuss the only non-trivial action of (\ref{eq:tot_squared_spin_split}) namely the second term in (\ref{PC_2J1J2}). This term connects the $n+1^\text{st}$ fermion with each of the building blocks as well as the endpoints which are left over by a spin $\frac{1}{2}$ holonomy.

Let us start with the building block (\ref{eq:increase_j}) which increases $j$. As the $n+1^\text{st}$ fermion is of the same type, the resulting state vanishes as the symmetrisation and an antisymmetrisation meet
\begin{align}
   \includegraphics[valign=c, scale=0.25]{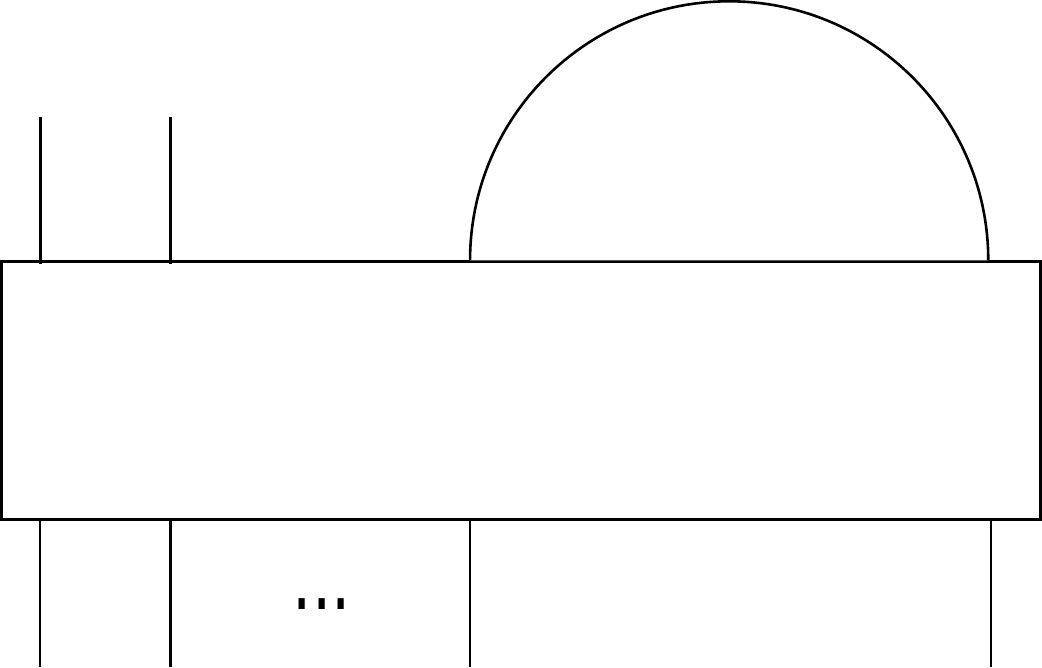} = 0.
\end{align}
This result is independent of which of the terms we regard. As a next step, let us have a look at the building blocks which are factors of singlet states,
\begin{align}
    \adjustbox{valign = c}{
        \begin{overpic}[scale = 0.2]{Spin0_wo_e.pdf}
            \put(-10, 0){\scriptsize 1}
            \put(32, 63){\scriptsize 2}
        \end{overpic}
    }_.
    \label{eq:singlet_building_block}
\end{align}
The operator connects the $n+1^\text{st}$ fermion with each of the two fermions 1 and 2 by a holonomy. This yields the state
\begin{widetext}
\begin{align}
    \includegraphics[valign=t, scale=0.2]{Fermion_Epsilon.pdf}_{n+1} \, \adjustbox{valign = c}{
        \begin{overpic}[scale = 0.2]{Spin0_wo_e.pdf}
            \put(10, 0){\scriptsize 1}
            \put(32, 63){\scriptsize 2}
        \end{overpic}
    } &\mapsto
    \includegraphics[valign=t, scale=0.2]{Fermion_Epsilon.pdf}_{2} \, \adjustbox{valign = c}{
        \begin{overpic}[scale = 0.2]{Spin0_wo_e.pdf}
            \put(10, 0){\scriptsize $n+1$}
            \put(32, 63){\scriptsize 1}
        \end{overpic}
    } + \includegraphics[valign=t, scale=0.2]{Fermion_Epsilon.pdf}_{1} \, \adjustbox{valign = c}{
        \begin{overpic}[scale = 0.2]{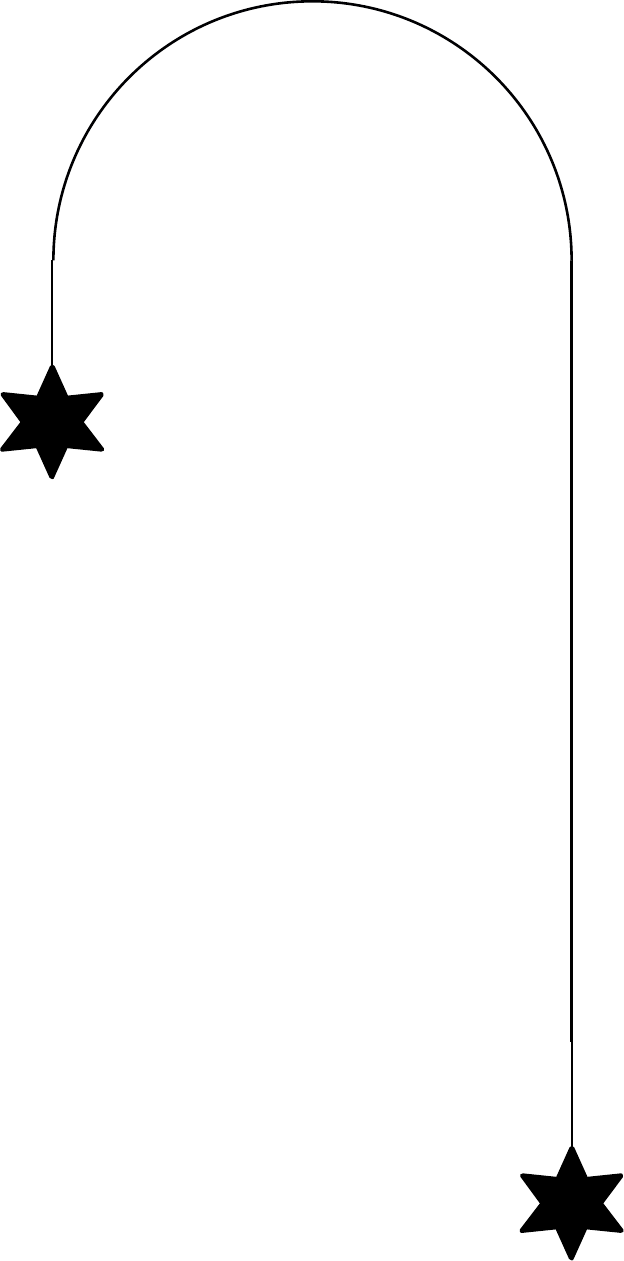}
            \put(50, 0){\scriptsize $n+1$}
            \put(-10, 63){\scriptsize 2}
        \end{overpic}
    } \nonumber \\[1 ex]
    &= - \quad \adjustbox{valign = c}{
        \begin{overpic}[scale = 0.2]{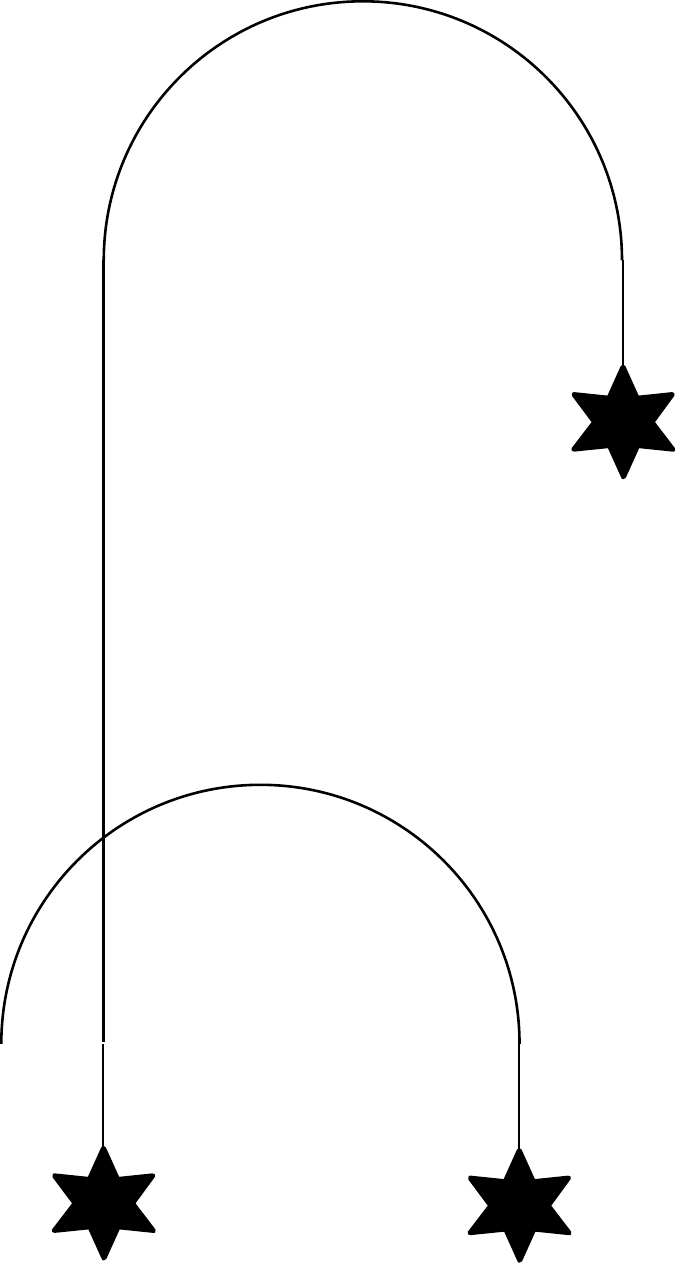}
            \put(-25, 0){\scriptsize $n+1$}
            \put(32, 63){\scriptsize 1}
            \put(50, 0){\scriptsize 2}
        \end{overpic}
    } - \includegraphics[valign=t, scale=0.2]{Fermion_Epsilon.pdf}_{1} \, \adjustbox{valign = c}{
        \begin{overpic}[scale = 0.2]{Spin0_wo_e.pdf}
            \put(10, 0){\scriptsize $n+1$}
            \put(32, 63){\scriptsize 2}
        \end{overpic}
    } \nonumber \\[1 ex] 
    &= \includegraphics[valign=t, scale=0.2]{Fermion_Epsilon.pdf}_{n+1} \, \adjustbox{valign = c}{
        \begin{overpic}[scale = 0.2]{Spin0_wo_e.pdf}
            \put(10, 0){\scriptsize 2}
            \put(32, 63){\scriptsize 1}
        \end{overpic}
    } = - \includegraphics[valign=t, scale=0.2]{Fermion_Epsilon.pdf}_{n+1} \, \adjustbox{valign = c}{
        \begin{overpic}[scale = 0.2]{Spin0_wo_e.pdf}
            \put(10, 0){\scriptsize 1}
            \put(32, 63){\scriptsize 2}
        \end{overpic}
    }_,
    \label{eq:proof_singlet_block}
\end{align}
\end{widetext}

where we used basic manipulations of $\epsilon$ and $\delta$ and the binor identity in the second step. The labels of the edges are again omitted, since they are not relevant for this calculation. In general, they have to be taken care of. With (\ref{eq:proof_singlet_block}) we have another contribution to the eigenvalue, which goes linear with the number of the building blocks of the form (\ref{eq:singlet_building_block}). Let us denote the number of fermions which decrease the number of holonomies from top to bottom by $n_-$. Analogously, the number of fermions increasing the number of holonomies, we denote by $n_+$ such that they sum up to the total number of fermions excluding the $n+1^\text{st}$ fermion,\footnote{Whether we include it or not is a matter of convention.}
\begin{align}
    n_- + n_+ = n.
\end{align}
On the other hand, it holds $n_+ - n_- = 2j - 2k$. The contribution of (\ref{eq:proof_singlet_block}) hence will be $- (n_- - 2k + m)$, which can be combined with $\frac{n}{2}$ to
\begin{align}
    \frac{n}{2} - n_- + 2k - m &= \frac{1}{2} (n_+ - n_-) + 2k - m \nonumber \\
    &= j + k - m = j_0 - m.
\end{align}
The last building block will change the state non-trivially. If we connect the $n+1^\text{st}$ fermion with a fermion of the type
\begin{align}
    \includegraphics[valign=c, scale=0.5]{Fermion.pdf}_,
\end{align}
\newpage
\noindent then we reduce the outflowing spin by 1. This reduction yields
\begin{align}
    \includegraphics[valign=t, scale=0.2]{Fermion_Epsilon.pdf}_{n+1} \includegraphics[valign=b, scale=0.5]{Fermion.pdf} \, \mapsto \, \includegraphics[valign=c, scale=0.7]{Edge.pdf} \leftidx{_{n+1}}{\includegraphics[valign=c, scale=0.2]{Spin0_wo_e.pdf}}{_,}
\end{align}
which is equal to one of the other states corresponding to one more reduction $m+1$. This contribution goes with a factor 1. Finally, we can collect all the pieces to find that the coefficients in front of a state with $m$ reductions reads
\begin{align}
    &\lambda_{m-1} + \lambda_m \left( (j_0 - m)(j_0 - m + 1) + \frac{3}{4} + j_0 - m \right) = \nonumber \\
    = &\lambda_{m-1} + \lambda_m \left( \left(j_0 + \frac{1}{2}\right) \left(j_0 + \frac{3}{2} \right) + m^2 - 2m - 2mj_0 \right),
\end{align}
which equals to $(j_0 + \frac{1}{2})(j_0 + \frac{3}{2})$ if and only if
\begin{align}
    \lambda_{m-1} = \lambda_m \cdot m \left( 2 \left( j_0 + \frac{1}{2} \right) - m + 1 \right).
\end{align}
This completes the proof. \pushQED{\qed} \popQED

\twocolumngrid
\bibliographystyle{abbrv}
\bibliography{literature}

\end{document}